\newcommand{\comment}[1]{}
\newtheorem{theorem}{Theorem}[section]
\begin{document}
%
% paper title
% can use linebreaks \\ within to get better formatting as desired
\title{Auctioning based Coordinated TV White Space Spectrum Sharing for Home Networks}

\numberofauthors{4}

\author{
\alignauthor Saravana Manickam and Mahesh K. Marina\\
      \affaddr{The University of Edinburgh}
\alignauthor Sofia Pediaditaki\\
       \affaddr{Intel}
\alignauthor Maziar Nekovee\\
       \affaddr{Samsung R \& D}
}

\maketitle

\begin{abstract}
%\boldmath	

The idea of having the geolocation database monitor the secondary use of TV white space (TVWS) spectrum and assist in coordinating the secondary usage is gaining ground. Considering the home networking use case, we leverage the geolocation database for interference-aware coordinated TVWS sharing among secondary users (home networks) using {\em short-term auctions}, thereby realize a dynamic secondary market. To enable this auctioning based coordinated TVWS sharing framework, we propose an enhanced {\em market-driven TVWS spectrum access model}. For the short-term auctions, we propose an online multi-unit, iterative truthful mechanism called \texttt{VERUM} that takes into consideration spatially heterogeneous spectrum availability, an inherent characteristic in the TVWS context. We prove that \texttt{VERUM} is truthful (i.e., the best strategy for every bidder is to bid based on its true valuation) and is also efficient in that it allocates spectrum to users who value it the most. Evaluation results from scenarios with real home distributions in urban and dense-urban environments and using realistic TVWS spectrum availability maps show that \texttt{VERUM} performs close to optimal allocation in terms of revenue for the coordinating spectrum manager. Comparison with two existing efficient and truthful multi-unit spectrum auction schemes, VERITAS and SATYA,  shows that \texttt{VERUM} fares better in terms of revenue, spectrum utilization and percentage of winning bidders in diverse conditions. Taking all of the above together, \texttt{VERUM} can be seen to offer incentives to subscribed users encouraging them to use TVWS spectrum through greater spectrum availability (as measured by percentage of winning bidders) as well as to the coordinating spectrum manager through revenue generation.

\end{abstract}
\maketitle
%Regulation of cognitive radio access to TV White Spaces (TVWS) using a
%geolocation database approach was recently finalized in the United States  and
%is at a final stage in the United Kingdom. The use of geolocation database
%offers a practical solution to the problem of primary systems protection as well
%as promising to enable new approaches to spectrum sharing among secondary
%devices. In this paper we develop \textit{VERUM}, a strategy proof
%micro-auctioning approach for white space sharing through geolocation databases.
% Our approach enables interference-free and dynamic sharing of TVWS among home
%networks with heterogeneous spectrum demands, while resulting in revenue
%generation for database and broadband providers. Using white space availability
%maps from the UK, we investigate and validate our approach in real rural, urban
%and dense-urban residential scenarios. We also compare the performance of
%\textit{VERUM} with the optimal solution by formulating an integer linear
%program. Our results show that our mechanism is able to achieve its stated
%objectives of attractiveness to both the database provider and spectrum
%requesters, scalability and efficiency for dynamic spectrum distribution in an
%interference-free manner.
\section{Introduction}
\label{intro}

%The use of wireless communications in the home for in-home wireless networking as well as wireless access to the Internet is on the rise. In addition to the conventional computing devices (e.g., PCs, laptops, tablets, smartphones), a whole host of additional devices in the home are increasingly getting equipped with wireless communications capability from consumer electronics (e.g., TVs) to home appliances (e.g., refrigerators) and smart meters. Given that these devices are mostly based on wireless technologies (e.g., WiFi, ZigBee, Bluetooth) that operate in the unlicensed 2.4GHz and 5GHz bands with potentially high interfering activity from neighboring and co-located wireless networks in the same bands, especially in urban areas, it is believed that meeting capacity and range requirements of future home wireless applications necessitates the use of other portions of the spectrum~\cite{dh-mcom12}.

Home wireless use is on the rise. Given that people are typically indoors 80\% of the time and majority of that time is spent at home where wireless Internet access is prevalent, global consumer Internet traffic trends also reflect home wireless access trends. In particular, we can attribute the growing home wireless use to video traffic, which already dominates the global consumer Internet traffic and is expected to make up 80-90\% of the overall traffic by 2017. With the fiber based residential broadband access increasingly becoming commonplace, the bottleneck for home wireless Internet access will shift inside the home in future. Besides the conventional computing devices (e.g., PCs, laptops, tablets, smartphones), a whole host of other devices in the home are being equipped with wireless communications capability ranging from consumer electronics (e.g., TVs) and home appliances (e.g., refrigerators) to smart meters, leading to rise in {\em in-home wireless use} for multimedia distribution and interconnection among various home devices. While WiFi carries much of this home wireless traffic today, it is unlikely going to be sufficient going forward --- the 2.4GHz band is quite overcrowded and there are range concerns with the 5GHz band~\cite{maziar-mcom11}. Thus it is believed that meeting capacity and range requirements of future home wireless applications necessitate the use of other portions of the spectrum~\cite{dh-mcom12}.

In this context, the use of TV white space (TVWS) spectrum has been put forward as an additional complementary means to cope with the growing home wireless demand~\cite{maziar-mcom11,dh-mcom12,paws-usecases}. In simple terms, spectrum white spaces are unused portions of spectrum over space and time. TV white spaces then are white spaces in the UHF TV band (470-790MHz in the UK) as illustrated in Fig.~\ref{tvws}. Fig.~\ref{tvws} (a) also shows that digital terrestrial TV (DTT) transmitters and receivers, and program-making and special events (PMSE) devices are the primary users of the UHF TV spectrum. The parts of the spectrum which are unused by those primary users at a given location appear as spectrum white spaces as illustrated in Fig.~\ref{tvws} (b). Such white spaces can be used on an opportunistic basis by other secondary users provided no harmful interference is caused to existing primary users (i.e., DTT receivers and PMSE devices like wireless microphones). In fact, the TVWS spectrum is (being) opened up by regulators worldwide for unlicensed secondary use (sometimes also called secondary spectrum
commons) subject to interference protection for primary users. Moreover, there is now a general consensus among various regulatory proposals for TVWS use that geolocation database will be the mechanism to ensure primary user protection in the foreseeable future as other alternatives like spectrum sensing are unreliable or too expensive for the TV bands~\cite{webb-mcom12,maziar-mwc12}. With this mechanism, each secondary user (or its proxy) needs to first query a geolocation database providing its location to obtain the available TVWS channels. The geolocation database calculates its response based on the secondary user's location, DTT transmitter locations, propagation modelling and active PMSE users. Several TVWS oriented standards that use geolocation databases have already been published or in development (e.g., IETF PAWS, IEEE 802.19.1, IEEE 802.22, IEEE 802.11af, IEEE DySPAN-SC).

\begin{figure}
  \centering
  \subfigure[]{\includegraphics[width=2.7in]{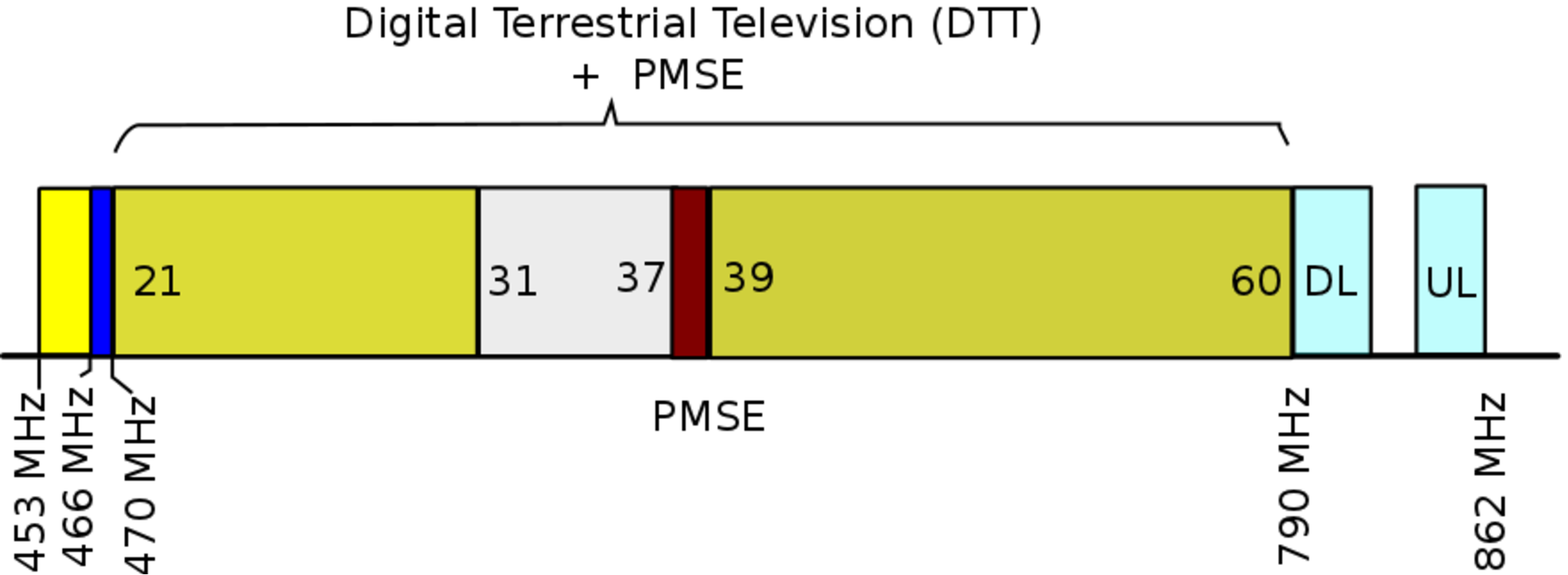}}
  \subfigure[]{\includegraphics[width=2.7in]{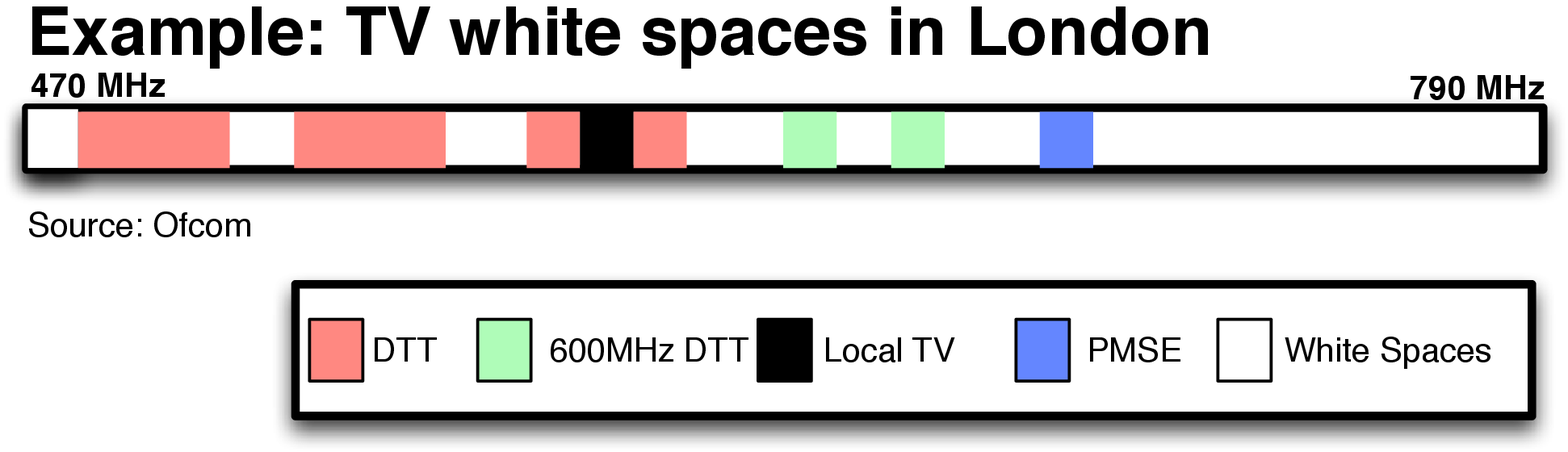}}
  \caption{(a) UHF TV spectrum in the UK where each channel is 8MHz wide; (b) TV white spaces in most of London.}
  \label{tvws}
\end{figure}

Home wireless networking makes a compelling use case for TVWS spectrum for three reasons: (1) it has superior propagation characteristics resulting in at least 4 times greater range compared to 2.4GHz band for the same power levels~\cite{ranveer-mc2r11}; (2) there is more spectrum available for low power use that is sufficient for indoor wireless applications --- for example, the recent TVWS related consultation from UK Ofcom~\cite{ofcom-sep13} shows that there are 30 channels available in central London at the low power level of 10dBm but increasing the power level to 35dBm drops that to 20 channels; (3) a recent measurement study~\cite{miser-mobicom13} shows that in urban areas up to 20\% more TVWS spectrum is available indoors compared to outdoors due to natural barriers in the form of walls.

The focus of this paper is on the key secondary coexistence problem that arises when considering the use of TVWS spectrum for home networking, which is to effectively share the available spectrum among secondary TVWS users (including home networks) to utilize it efficiently and also keep mutual interference low. Till date though, much of the focus on TVWS spectrum management justifiably has been on interference protection for primary users for the case of a single secondary user. Given that TVWS spectrum is unlicensed subject to primary protection via the geolocation database, it is in principle possible to take an uncoordinated approach for TVWS spectrum sharing like it is currently the case with WiFi. However, uncoordinated use via naive ALOHA like strategies risk unbounded interference among secondary users and consequently result in ineffective use of this new source of spectrum as demonstrated in Fig.~\ref{uncoordinated}. Even sophisticated alternatives following the uncoordinated approach (e.g., 802.11af\footnote{802.11af employs a similar medium access control (MAC) protocol like that in 802.11 (WiFi) standard but additionally has mechanisms to operate in TV bands and consult geolocation database prior to TVWS spectrum use.}) have limitations such as leading to starvation for some users when there are a heterogeneous set of secondary users~\cite{sumit-mwc11}. Note that TVWS spectrum is being considered for a wide range of uses from home networking and indoor hotspots to machine-to-machine (M2M) and public outdoor small cells~\cite{paws-usecases}. As such, heterogeneity of secondary users is inherent to TVWS spectrum.

In this paper, we take a coordinated approach to TVWS spectrum sharing among secondary users (including home networks) by leveraging the geolocation database. The need to consult a geolocation database before using TVWS spectrum naturally presents an opportunity to rely on the database not just for primary protection but also for coordinating spectrum usage among secondary users. This possibility has been alluded by the authors of the SenseLess geolocation database system~\cite{senseless}. The centralized TVWS spectrum coordination\footnote{One can draw parallels between our approach and cloud based control of public WiFi hotspots as well as cloud RANs.} also eases handling of heterogeneous secondary users. The idea of having the geolocation database monitor the secondary use of TVWS spectrum and assist in coordinating that usage is actually gaining ground as can be seen from \cite{paws-usecases,harada-mwc11} and the proprietary White Space Plus service~\cite{wsplus} provided by Spectrum Bridge, one of the approved database providers in the US and UK.

We present our coordinated TVWS spectrum sharing proposal in the context of home wireless networking use case. In particular, we introduce a centralized database assisted spectrum manager for home white space networks (HWSNs) that offers a coordination service for effective TVWS spectrum sharing among home networks with low mutual interference and efficient spectrum utilization. The spectrum manager's coordination service is enabled by a {\em market-driven TVWS spectrum access model and the use of short-term auctions} to dynamically (re-)allocate TVWS spectrum among home networks --- winners of the auction at a particular point in time can be seen as securing a temporary permit to use a portion of the TVWS spectrum. Note that the key ideas of market-driven spectrum access model and use of short-term auctions underlying our proposal are more generally applicable to other spectrum bands and use cases.

\comment{
 novelty:
%but the focus so far has largely been on general frameworks for database assisted coordination of secondary TVWS spectrum use and not on the underlying decision making mechanisms.
%To the best of our knowledge, we are not aware of other work that considers auctioning based coordinated secondary use of TVWS spectrum targeting the home networking use case.
}

Specifically, we make the following contributions:

\begin{itemize}
\item We propose an auctioning based framework for coordinated TVWS spectrum sharing among HWSNs. To enable this framework, we present a market-driven TVWS spectrum access model as an enhancement to the current secondary spectrum commons model. With this new market based access model, every HWSN is a subscriber to the TVWS coordination service offered by a HWSN spectrum manager, which could be same as the database provider in practice. HWSN spectrum manager relies on the geolocation database for monitoring and updating secondary TVWS spectrum usage, and at the same time dynamically allocates TVWS spectrum to HWSNs using short-term auctions.
\item We develop \texttt{VERUM}\footnote{VERUM means truth in Latin.}, an online multi-unit auction mechanism to realize the HWSN spectrum manager's coordination service as mentioned above. The primary design objective behind \texttt{VERUM} is to achieve an efficient allocation, i.e., allocate spectrum to users (home networks) who value it the most; revenue/profit maximization for the spectrum manager (auctioneer) is seen as a secondary goal. We prove that \texttt{VERUM} auction mechanism yields efficient allocation, and that it is truthful thereby simplifying the user side behaviour in the coordination process. In comparison with existing efficient and truthful multi-unit spectrum auction mechanisms~\cite{veritas,satya,small,Hoefer-comm12}, \texttt{VERUM} takes an iterative approach that is fundamentally different and offers a simpler means to ensuring truthfulness. \texttt{VERUM} handles heterogeneous spectrum availability and demands, inherent to TVWS based home networking. It also allows channel sharing among nearby home networks subject to a specified interference limit. Moreover, \texttt{VERUM} supports marginal valuations, which refer to the values a bidder associates with the first channel in its demand and every additional channel; marginal valuations aid in achieving improved spectrum utilization and revenue by facilitating flexible adaptation of satisfied demand depending on spectrum availability.
\item We evaluate \texttt{VERUM} using realistic TV white space availability maps in the UK and actual distribution of homes in urban and sub-urban environments. Our results show that \texttt{VERUM} performs close to optimal allocation in terms of revenue. We also show that its performance in terms of revenue, spectrum utilization and percentage of winning bidders in the auction is better compared to VERITAS~\cite{veritas} and SATYA~\cite{satya}, other comparable efficient and truthful online spectrum auction mechanisms from the literature, as a result of its ability to support channel sharing, heterogeneous spectrum availability and marginal valuations.
\end{itemize}

The rest of the paper is structured as follows. The next section discusses related work. Our system model, including the proposed auctioning based HWSN spectrum coordination framework, is described in section 3. In section~\ref{auctions}, we detail \texttt{VERUM}, the online truthful and efficient multi-unit auction mechanism we propose in this paper. Evaluation results for \texttt{VERUM} are presented in section 5. Discussion of further issues is provided in section 6 and we conclude the paper in section 7.

\section{Related Work}
\label{related}

\subsection{TVWS Research}
Since FCC allowed unlicensed operation in TV bands in November 2008, there has been a significant amount of research activity focused on TV white spaces. These include TVWS spectrum availability assessments in different regions (e.g., \cite{sahai-dyspan10} for US, \cite{mahonen-dyspan11} for Europe). Spectrum usage measurement studies have also been carried out in different parts of the world to assess the potential for opportunistic access by secondary users in various licensed portions of the wireless communications spectrum, including the UHF spectrum with TV white spaces (e.g., \cite{mingyan-tmc12,heather-ton12}). Platforms based on networked spectrum analyzers for distributed spectrum sensing have also been developed for real-time spatio-temporal usage maps for TVWS spectrum and other applications (e.g., SpecNet~\cite{specnet-nsdi11}).

The pioneering work of Bahl et al.~\cite{whitefi} identifies salient aspects of TVWS spectrum in comparison with WiFi --- spatio-temporal variation and spectrum fragmentation, and implements the WhiteFi system prototype that incorporates techniques to handles these aspects. The same team has subsequently deployed the first operational white space network at Microsoft Redmond campus based on WhiteFi~\cite{ranveer-mc2r11}. While the focus of WhiteFi is on a single access point (AP) with multiple associated clients, Deb et al.~\cite{deb-mobicom09} consider a multi-AP infrastructure wireless LAN scenario and address the load-aware white space spectrum distribution, extending the previous work on the same problem for WiFi by Moscibroda et al.~\cite{moscibroda-icnp08}. Progress has also been made on other core technologies needed for the use of TVWS spectrum, including: geolocation database systems (e.g., \cite{senseless}); simultaneous use of spectrum fragments or non-contiguous portions of TVWS spectrum (e.g., \cite{wifinc}); and addressing MAC and link adaptation issues in white space networks (e.g., \cite{bozidar-conext11}).

Concerning the problem tackled in this paper --- TVWS spectrum sharing among secondary users, it is sometimes subsumed under a more general term called coexistence.
As the geolocation database is a common entity that needs to be consulted anyway to avoid harmful interference to primary users (e.g., DTV receivers), it makes a natural candidate to additionally provide a common store for monitoring secondary TVWS usage and thereby assist in coordinated TVWS spectrum use. This possibility has been accommodated in recent works~\cite{senseless,paws-usecases}. Villardi et al.~\cite{harada-mwc11} take it a step further and use this capability as a part of their centralized coexistence mechanism which involves TVWS access points making independent decisions on secondary use based on the TVWS spectrum availability information (including other secondary use) retrieved from the geolocation database. However, they only outline the scheme without providing details on how channel selection is made. In fact, the authors in \cite{harada-mwc11} themselves state that their focus is on providing a coexistence framework and not on the actual decision making mechanism. Somewhat similar to \cite{harada-mwc11} is IEEE 802.19.1\footnote{\url{http://ieee802.org/19/pub/TG1.html}}, a radio technology independent standard that is currently under development for coexistence among dissimilar TV Band Devices (TVBDs) and dissimilar or independently operated networks of TVBDs.

%%WhiteNet - Database-assisted multi-AP network on TV white spaces: Architecture, spectrum allocation and AP discovery
%%X. Feng et al.
%%DySPAN'11

%
\subsection{Auctions for Dynamic Spectrum Access}

Auctions have been extensively used over the years for dynamic spectrum management with several different types of auction mechanisms developed to suit different scenarios~\cite{jordan-mcom11,dusit-comst12}. In the TVWS context, the work by Bogucka et al.~\cite{bogucka-mcom12,bogucka-dyspan11} is the closest to ours in that they also employ short-term auctions for secondary use of TVWS spectrum. However, there are several crucial differences. Firstly, they target a very different use case of secondary licensing of TVWS spectrum to mobile cellular operators (LTE) for offloading during peak traffic periods. As a consequence, there are a small number of bidders (operators) allowing them to rely on computationally expensive ``branch-and-cut'' optimization method. In contrast, in our problem setting there could be a large number of home networks participating in the auction. Secondly, the objective of the auction in \cite{bogucka-mcom12,bogucka-dyspan11} is to maximize revenue and they employ an untruthful first price
sealed bid auction to achieve that objective. They also place substantial focus on reserve price estimation given their revenue maximization goal. This is unlike our approach where we primarily target an efficient auction by allocating spectrum to those users who value it the most. As another difference, there is no notion of channel sharing among different users in \cite{bogucka-mcom12,bogucka-dyspan11}. The above discussion further highlights the fact that different use cases and objectives necessitate different auction mechanisms.

%Real-time TVWS trading based on a centralized CR network architecture, 2011 GLOBECOM workshop
%Strategies to Achieve Truthful Spectrum Auctions for Cognitive Radio Networks Based on Mechanism Design, 2010 New Frontiers in Dynamic Spectrum
%Dynamic Spectrum Auction with Time Optimization in Cognitive Radio Networks, 2012 IEEE VTC
%Collusion-Resistant Multi-Winner Spectrum Auction for Cognitive Radio Networks, 2008, IEEE Global Telecommunications Conference

%This paper proposes an ascending bid auction scheme for selling homogeneous goods, that is proven to yield the same result as Vickery auction. In effect it is an iterative second price bidding auction scheme

%marginal values not supported ==> flexible adaptation of demand in response to resource availability and hence improved revenue and spectrum utilization

\comment{
% Veritas-MobiCom'08

Marginal Valuations: The algorithm is dependant on the fact that every channel won by a WSD is charged the same price. The modification to support marginal valuations is non trivial.

Channel sharing: Does not support channel sharing

Heterogeneous channel availability: It assumes that the same set of channels are available for every WSD in the network. Adapting the algorithm to support variable channel availability is non trivial.

Performance: Polynomial time, can be used for an online solution.

Pricing Function: Unline Verum, the winning price has to be determined using a seperate algorithm for each bidder. For each bidder B, a critical neighbor N such that if B bids greater than N it wins and loses if it bids less than B. That critical neighbor's bid is the payment price for the bidder.

no support for marginal valuations

Zhou et al \cite{zhou-mobicom08} propose a truthful efficient combinatorial auction mechanism to support an eBay-like dynamic spectrum market. Although they support strict requests and range requests for channels, ie., (win all or lose all, and win any number of channels upto given range), they do not support marginal valuations. A bidder must have the same valuation for all the channels that he is interested in and this is reflected in the pricing function, where every channel allocated to the bidder is charged at the same price.

% Satya-TMC'13

Marginal Valuations: Does not support marginal valuations. The pricing function has to be modified to support this.

Heterogeneous channel availability: The modification required is straight forward.

Channel sharing: Supports channel sharing. Considers two types of bidders, exclusive use and non exclusive use. It limits the number of users sharing a channel based on the bandwidth alloted to each WSD.

Performance: Polynomial only under certain restrictions. For example it is exponential in the number of neighbors with which a WSD has to share a channel with.

A spectrum sharing mechanism using multi-unit auctions is proposed by Kash et al \cite{kash2011-netecon}. They model the users as composed of exclusive users and sharers and propose a strategy proof multi-unit auction to enable channel sharing. They use ironing and bucketing techniques to maintain monotonicity of the allocation mechanism.

A truthful spectrum sharing mechanism using multi-unit auctions is proposed by Kash et al \cite{kash2011-netecon}. They model the users as composed of exclusive users and sharers and propose a strategy proof multi-unit auction to enable channel sharing. They use ironing and bucketing techniques to maintain monotonicity of the allocation mechanism. The papers deals with primary spectrum users leasing the spectrum to secondary users. In the TVWS scenario, there are transactions or communication with the primary users. A centralized TVWS database provider regulates the usage of the white space spectrum by secondary users.

This paper proposes a truthful auction for secondary markets that is capable of supporting channel sharing. Primary users may open up their channel for sharing and this is efficently auctioned using the proposed mechanism.

% SMALL-TMC'12

Wu et al \cite{wu2012} propose a strategy proof auction mechanism SMALL for non cooperative wireless networks. They group bidders into non conflicting groups and use a pricing function that is independent of the buyers' bid. Although their mechanism is strategy proof, it cannot be used for an online solution. They use graph coloring for channel allocation which is known to be NP-Complete. Also, the mechanism does not support channel sharing, marginal valuations or heterogeneous channel availability .

It does not support marginal valuations, heterogeneous channel availability or channel sharing. It is also has an exponential run time and cannot be used as an online solution

}

Outside the TVWS context, there are other auction mechanisms in the literature for dynamic spectrum management based on cognitive radios that could be adapted to our problem setting. In the following, we mainly discuss the auction mechanisms from previous work that are most relevant from this perspective, i.e., the ones which are suitable for auctioning multiple objects while at the same time efficient and truthful.

Zhou et al.~\cite{veritas} propose VERITAS, a truthful online auction scheme for dynamic spectrum management. A bidder must have the same valuation for all the channels and this is reflected in the pricing function, where every channel allocated to the bidder is charged at the same price. In other words, VERITAS does not support marginal valuations. It also does not support channel sharing, where a channel could be shared by neighboring home networks. Finally, VERITAS does not support heterogeneous channel availability, an inherent characteristic of TVWS networks. Modifying VERITAS to support marginal valuations, channel sharing and heterogeneous channel availability, however, is non-trivial.

Kash et al.~\cite{satya} propose SATYA, a truthful auction scheme for spectrum sharing that uses bucketing and ironing of bids to maintain monotonicity for truthfulness. While this is the first scheme to support channel sharing it has a few drawbacks. Firstly, it does not support marginal valuations. Secondly, it has an exponential run time and is only polynomial under some restrictions. Finally, the performance of the scheme is highly dependant on the bucketing function which is not part of the SATYA mechanism and is abstracted out.

Wu et al.~\cite{small} propose a truthful auction mechanism SMALL for non-cooperative wireless networks. They group bidders into non-conflicting groups and use a pricing function that is independent of the buyers' bid. However it is not suitable for online auctions because for SMALL to be truthful, it needs the optimal solution for graph coloring, known to be NP-complete. Moreover, SMALL does not support marginal valuations, channel sharing and heterogeneous channel availability.

Hoefer and Kesselhiem~\cite{Hoefer-comm12} study truthful spectrum auctions for secondary markets. They propose randomized allocation algorithms for weighted and unweighted conflict graphs that yield truthfulness in expectation. Truthfulness is achieved using the randomized meta-rounding framework~\cite{lavi-swamy-2005}. While the auction schemes offer near optimal guarantees on social welfare, they are not suitable for online operation. The convex optimization techniques that are used for applying randomized meta-rounding have runtime that exponentially grows with number of nodes in the conflict graph used to model interference relationships among different users. 

To summarize, qualitatively speaking, in comparison with the auction mechanisms discussed above, the novelty of \texttt{VERUM} proposed in this paper is that it is an online scheme that supports marginal valuations, channel sharing and heterogeneous channel availability. While SATYA also supports channel sharing, it is only polynomial under certain restrictions. Moreover, as discussed later in section~\ref{auctions}, \texttt{VERUM} takes an iterative approach that is fundamentally different from the other mechanisms and leads to not only simpler means to achieve truthfulness but also offers other desirable properties like transparency and privacy.

\section{System Model}
\label{model}

\begin{figure}[h!]
  \begin{center}
   \includegraphics[width = .6\columnwidth]{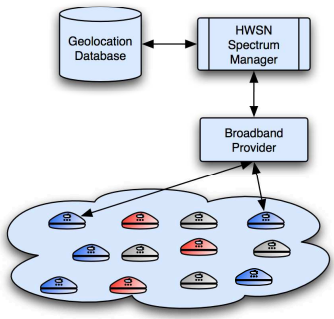}
    \caption{The database assisted home white space network (HWSN) spectrum manager illustrated. Each triangle in the cloud corresponds to a potential secondary user in the form of a home network. The ones colored in blue are HWSNs that access the TVWS spectrum via the HWSN spectrum manager shown, whereas the red colored ones are considered as third party white space networks (WSNs) that access the TVWS spectrum via other spectrum managers. Grey colored triangles represent homes that do not use TVWS spectrum at a given point in time.}
\label{framework}
 \end{center}
\end{figure}

%system architecture (bootstrapping, naturally supports peer2peer while being network based), what about location?, assume channel bonding, assume WSDs have access to geolocation database (directly or indirectly), treat master as portable as per fcc rules? accounting for secondary use in the database.
%complementary/orthogonal concern: heterogeneous coexistence (sumit, bozidar-conext12, dh-mcom12)

%In this work, we consider a multi-unit auction mechanism, where a geolocation database provider periodically auctions access to the TVWS spectrum to secondary users taking into account their mutual interference relationships. We assume that the provider maintains a geolocation database for TV white spaces that is also communicating to a PMSE database from which it receives periodic updates regarding the usage by PMSE. In our context, secondary user refers to a home network that represents all white space devices (WSDs) inside the home. Each home network participates in the auction via its home hub (or access point) provided by gle entind providers. Each home hub acts as a master node that bids for TVWS channels on behalf of WSDs within the home.

\subsection{Auctioning based Coordinated TVWS Spectrum Sharing Framework}

\comment{
\begin{figure}[h!]
  \begin{center}
   \includegraphics[width = .6\columnwidth]{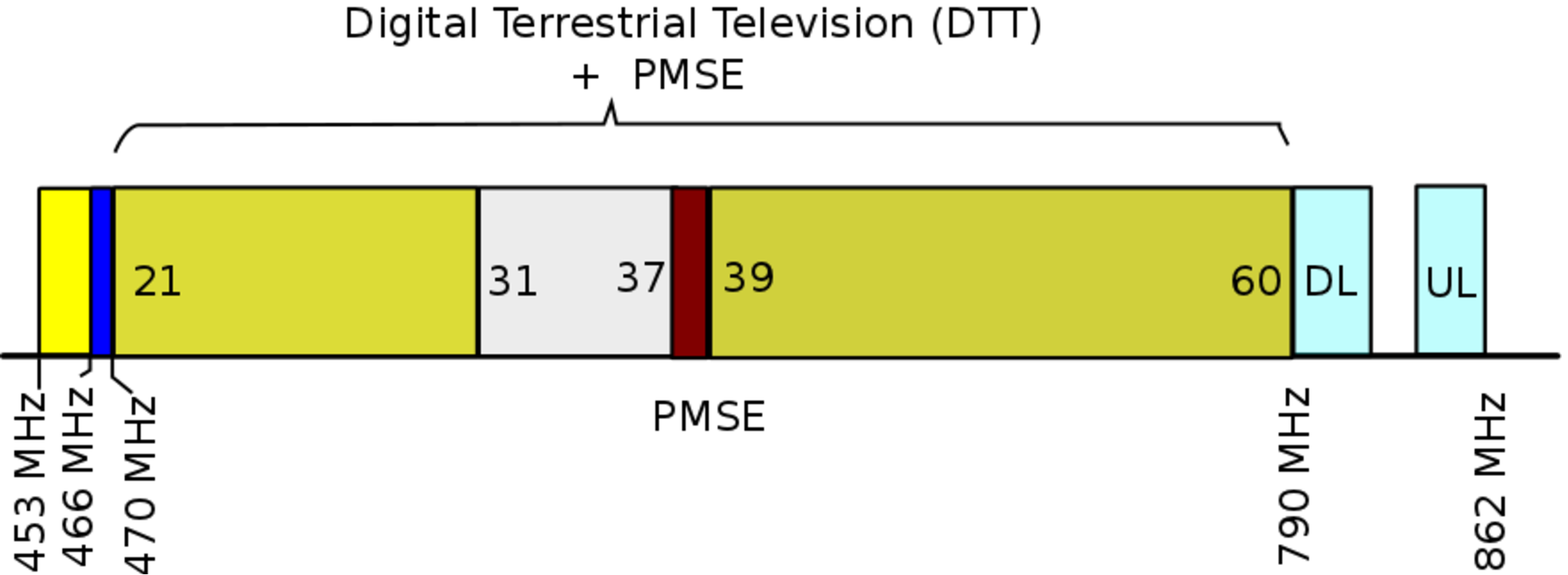}
    \caption{TVWS Spectrum}
\label{framework}
 \end{center}
\end{figure}

TVWS refers to the license exempt use of the TV band (470-790 MHz) which is licensed to TV broadcasters (e.g.., BBC). Currently OFCOM is considering allowing access to the TVWS spectrum for public use under a \textit{coordinated license exempt} model. In this model, the use of the spectrum is license exempt subject to a centralized coordinating entity, which is the geolocation database provider.

Our focus in particular is on the home networking scenario in which in-home wireless networking among various devices in the household (e.g., home entertainment systems, game consoles, appliances, etc.) is not only becoming more prevalent but also is currently done using already congested unlicensed bands. We envision that such devices in future will be TVWS-capable and can opportunistically use TVWS spectrum to relieve congestion across various spectrum bands used by home wireless devices. The emerging TV white space standards such as IEEE802.11af [6] and ECMA-392 [7] support our view. The motivation to focus on the home networking scenario is two fold. First, TV white spaces promise large amount of additional spectrum for wireless data applications. In urban areas, however, where spectrum scarcity is most apparent, the presence of many TV channels and the proposed regulatory protection requirements for broadcast TV receivers leave very little TVWS spectrum for high-power secondary devices. Shorter range communications have more hope of exploiting it and as it turns out devices operating in the unlicensed bands, as exemplified by WiFi, are most affected by overcrowded spectrum and interference problems. We therefore consider TVWS spectrum as an opportunity to offload traffic from short range wireless technologies (e.g., WiFi, Zigbee) that are increasingly subject to interference in unlicensed bands.  Second, recent studies \cite{ying-mobicom13} have shown that there is substantially more TVWS available indoor when compared to outdoor environments. This is due to the signal attenuation caused by the blocking effect of walls.
}

In this paper, we are interested in TVWS spectrum sharing among secondary users from the perspective of home wireless networking. As such, home networks using TVWS spectrum are secondary users of specific interest to us and we refer to them as home white space networks (HWSNs). Within each HWSN, TVWS spectrum access for in-home white space devices (WSDs) is via the corresponding home TVWS access point (AP). In other words, the TVWS AP acts as the master device and obtains the TVWS channel availability information from the geolocation database on behalf of its associated slave WSDs using the home broadband Internet connection to connect to the database\footnote{The use of broadband connection to consult the database makes the bootstrapping problem discussed in \cite{senseless} a non-issue in our setting.}. Following \cite{ofcom-db}, we assume that each AP includes its location accuracy based on the expected distance to its furthest WSD when querying the geolocation database to obtain available TVWS channels for its WSDs. We work at the level of HWSNs and assume that techniques similar to \cite{whitefi,wifinc,bozidar-conext11} are employed within each HWSN for MAC layer operation and aggregation of non-contiguous channels.

For effective TVWS spectrum sharing among home networks, we introduce an entity called {\em HWSN Spectrum Manager} that is tasked with coordinating TVWS spectrum usage among HWSNs. Specifically, the spectrum manager allocates TVWS channels to HWSNs by taking into consideration the TVWS spectrum availability and secondary usage information from the geolocation database, spectrum demand information from each HWSN and mutual interference relationships among HWSNs. In the spirit of \cite{senseless} and \cite{paws-usecases}, we assume that geolocation database is continually updated with secondary TVWS spectrum usage. Fig.~\ref{framework} illustrates this database assisted TVWS spectrum coordination framework for home networks.

In our proposal, HWSN spectrum manager coordinates TVWS spectrum sharing among HWSNs via {\em short-term auctions}. This would mean that the spectrum manager plays the role of the auctioneer while HWSNs act as bidders (for accessing TVWS spectrum). Unlike the conventional long-term nation-wide spectrum auctions, short-term auctions allow HWSNs to share available TVWS spectrum over space and time while keeping interference between them tolerable.

\subsubsection{Market-Driven TVWS Spectrum Access Model}

As noted at the outset, unlicensed access by secondary users (secondary spectrum commons) with the requirement to consult the geolocation database first (to ensure primary protection) is the approach being adopted by regulators currently. Such an approach is sufficient when following an uncoordinated approach to share TVWS spectrum. However it undermines the coordinated approach in an environment where there are a mix of coordinated and uncoordinated users. For example, bootstrapping the coordinated service by incentivizing uncoordinated users to take up the service becomes an issue. Also in a mixed environment, users who fail to access TVWS spectrum via the coordinated approach at a given point in time can attempt the uncoordinated approach as a fallback and thereby hurt other users accessing the spectrum via the coordination service.

Thus to enable auctioning based coordinated TVWS spectrum sharing, we envision an enhanced TVWS spectrum access model which is {\em market-driven}. In this enhanced model, different HWSNs (or more generally, users of TVWS spectrum) become players in the market who each place a value on the TVWS spectrum that could potentially vary between them and over time. They rely on a spectrum manager to obtain access to TVWS spectrum. At any given point in time, the role of the spectrum manager then is to allocate the TVWS spectrum to users who value it the most while taking into consideration mutual interference relationships among them.

The implementation of the market-driven spectrum access model outlined above is straightforward from a regulatory viewpoint given that as per current regulations the database needs to be consulted as a first step prior to TVWS spectrum use. All that the database / spectrum manager needs to do additionally by way of implementing the market-driven model is to enforce priority to subscribers of the coordination service and use a protocol that involves TVWS users to allocate TVWS spectrum dynamically via short-term auctions. For new white space spectrum bands in future, market-driven access could be mandated by the regulator as the only means of access just as some bands offer licensed or unlicensed access currently.

The market based coordinated access model can be seen as an intermediate one between the extremes of licensed access and unlicensed access; it aims to combine the advantages of both licensed and unlicensed models while avoiding their disadvantages. It can also be viewed as a way of realizing a dynamic secondary market via regulator designated band managers (e.g., database providers). There are other recent proposals for alternate spectrum access models that also conceptually sit between (secondary) spectrum commons and licensed access such as pluralistic licensing~\cite{plural} and licensed shared access~\cite{lsa-wp}. While pluralistic licensing aims to incentivize primary users to allow more spectrum to be available under the secondary commons model, market based access enhances secondary commons via coordination that prioritizes users based on their relative spectrum valuations; in that sense, both pluralistic licensing and market based access are complementary. Unlike licensed shared access which is based on secondary exclusive licenses, market based access can be potentially more efficient because it is effectively a soft/dynamic licensing model. We leave further discussion and comparison among various spectrum access models for another paper.

\subsubsection{Business Models for TVWS Coordination Service}

Within the context of the above market based spectrum access model, every HWSN is a subscriber of coordination service offered by a HWSN spectrum manager. We now briefly discuss the business model aspect of such a coordination service. One possibility is that TVWS access can be an add-on over the user's monthly broadband subscription fee for a small additional fee. Depending on their TVWS related fee, subscribers could fall into different classes (e.g., ``gold'', ``silver'', ``bronze'') with correspondingly different number of monthly credits that put a cap on TVWS spectrum usage (see section III.A.5 for a brief discussion on how monthly credits can play a part when user attempts to use TVWS spectrum). Alternatively, subscribers for the coordination service could be at the level of broadband providers who in turn manage allocation of TVWS spectrum and payment internally with their subscribers.

The HWSN spectrum manager, while conceptually different from the database provider as shown in Fig.~\ref{framework}, can in practice be the same as the database provider and offer the coordination service for additional revenue generation. Moreover, there may be more than one spectrum manager in practice providing the coordination service just as there would be several database providers. In such a case, TVWS spectrum users will subscribe to one among the several managers. We assume that the different spectrum managers constantly synchronize the secondary use of TVWS spectrum with each other.

\comment{
\subsubsection{Market Driven TVWS Spectrum Access Model}

As noted at the outset, unlicensed access with the requirement to consult the geolocation database first (to ensure primary protection) is the TVWS access model being adopted by regulators currently. Such an access model is sufficient when following an uncoordinated approach to share TVWS spectrum. However it undermines the coordinated approach in an environment where there are a mix of coordinated and uncoordinated users. For example, bootstrapping the coordinated service by incentivizing uncoordinated users to take up the service becomes an issue. Also in a mixed environment, users who fail to access TVWS spectrum via the coordinated approach at a given point in time can attempt the uncoordinated approach as a fallback and thereby hurt other users accessing the spectrum via the coordination service.

Thus to enable auctioning based coordinated TVWS spectrum sharing, we envision an enhanced TVWS spectrum access model which is market-driven. With this model,  different HWSNs become players in the market that value the TVWS spectrum differently and this could be time-varying. They rely on a HWSN spectrum manager to obtain access to TVWS spectrum. At any given point in time, the spectrum manager allocates the TVWS spectrum to HWSNs who value it the most while taking into consideration mutual interference relationships among HWSNs.

With this market based model, every HWSN (or more generally, each user of TVWS spectrum) is a subscriber of coordination service offered by a HWSN spectrum manager. TVWS access can be an add-on over the user's monthly broadband subscription fee for a small additional fee. Depending on their TVWS related fee, subscribers could fall into different classes (e.g., ``gold'', ``silver'', ``bronze'') with correspondingly different number of monthly credits that put a cap on TVWS spectrum usage (see section 3.1.4 for a brief discussion on how monthly credits can play a part when user attempts to use TVWS spectrum). Alternatively, subscribers for the coordination service could be at the level of broadband providers who in turn manage allocation of TVWS spectrum and payment with their subscribers.

The HWSN spectrum manager, while conceptually different from the database provider as shown in Fig.~\ref{framework}, can in practice be the same as the database provider and offer the coordination service for additional revenue generation. In practice there may be more than one spectrum manager providing the coordination service just as there are several database providers. In such a case, TVWS spectrum users will subscribe to one among the several managers. We assume that the different spectrum managers constantly synchronize the secondary use of TVWS spectrum with each other.

The implementation of the market-driven spectrum access model described above is straightforward from a regulatory viewpoint given that as per current regulations the database needs to be consulted as a first step prior to TVWS spectrum use. All that the database / spectrum manager needs to do additionally by way of implementing the market driven model is to enforce priority to subscribers of the coordination service and use a protocol that involves TVWS users to allocate TVWS spectrum dynamically via short-term auctions. For new white space spectrum bands in future, market driven access could be mandated by the regulator as the only means of access just as some bands offer licensed or unlicensed access currently. The market based coordinated access model can be seen as an intermediate between pure licensed and pure unlicensed that combines the advantages of both while avoiding their disadvantages.
}
\comment{
Traditionally, spectrum has been managed by regulators with the band-managed and unlicensed access models . In band-managed model, a specific band in the wireless spectrum is licensed to a primary user (such as BBC and FM stations) who retains exclusive rights to use the band with static temporal and spatial limitations.  In the unlicensed model, a particular band of the spectrum is opened for public access with some restrictions such as limiting transmit power levels. In this model, the users do not get exclusive rights to the bands and can be subject to interference from other users.. The critical problem with this model is the interference management among devices in the unlicensed band. This can be seen in the case of the 2.4 GHz band which has become very crowded.  As a solution to this, regulatory agencies recently began experimenting with a new model of spectrum management with the release of TVWS. A coordinated license exempt model, where the TV band is licensed to TV program broadcasters (e.g.., BBC, ITV), but is allowed for use by the public with access to the bands governed by a geolocation database. Although the primary objective of the geolocation database is for the protection of incumbents, OFCOM is considering whitespace plus, an update to white space, where the interference among secondary users are also managed using the geolocation database.

In that spirit, we envision that the HWSN spectrum manager would be a for-profit entity offering the {\em TVWS spectrum coordination service} to HWSNs. In practice, the spectrum manager could be a standalone entity, or part of a geolocation database provider. From users (HWSNs) perspective, they participate in the coordination service by buying bidding credits (using a subscription or a pay as you go model) from the HWSN spectrum manager. A HWSN without bidding credits may also participate in the coordination service and will get to use the channels when there is no competition among its neighbours. In practice there may be more than one spectrum manager to provide coordination services and we assume that the spectrum managers synchronize the secondary use of TVWS among themselves.
}

\subsubsection{Interference Modelling}

Since potential interference among TVWS spectrum users is a key factor to consider in the spectrum allocation, we need to explicitly model interference between different HWSNs. For this, we use the commonly used {\em conflict graph} model to represent interference relationships among HWSNs. Specifically, vertices in the conflict graph $G$ correspond to individual HWSNs. An edge exists between two vertices in $G$ if the corresponding HWSNs can interfere when using the same TVWS channels. Several alternatives are possible to infer such potential interference between a pair of HWSNs, from the simpler protocol model to the more sophisticated physical model with realistic propagation models~\cite{jain-mobicom03}. In this paper, we assume symmetric interference relationships that results in an undirected conflict graph.

Fig.~\ref{FigExample1} shows an example conflict graph for a set of HWSNs A-E within a given area. Interfering HWSNs are connected by a line. Note that conflict graph in general may not be a connected graph. In other words, it can be composed of several connected components.

%We represent the conflict graph using a two dimensional vector $\mathcal{G}$, where $\mathcal{G}[i][j] = 1$ if vertices $i$ and $j$ in $G$ conflict (interfere) with each other, $\mathcal{G}[i][j]=0$ otherwise.

We use notation $N_i$ to refer to the set of nodes to which vertex $i$ (a HWSN) is connected in the conflicted graph $G$. Note that our auctioning based spectrum allocation framework allows for sharing of TVWS channels among potentially interfering HWSNs (i.e., neighbors in $G$) subject to a pre-specified interference temperature limit. This is discussed further in section \ref{sharingsection}.

\subsubsection{Heterogeneous Spectrum Availability}

As incumbent TV broadcasters cover a large area, they may not cause heterogeneous availability of channels within a small area. PMSE (Program Making and Special Events) users, on the other hand, may occupy some channels within a small area thus could potentially cause differences in channels available over space and time. Third party white space networks (WSNs), colored red in Fig.~\ref{framework}, that are subscribers of other HWSN spectrum managers, will likely be the key reason behind heterogeneous spectrum availability across different HWSNs. Third party WSNs access TVWS spectrum through other HWSN spectrum managers and can therefore impact the spectrum availability for other secondary users.

For example, consider the conflict graph shown in Fig.~\ref{FigExample1}. Suppose that HWSN E is accessing the TVWS spectrum through another HWSN spectrum manager, then the channels used by HWSN E would be unavailable only for HWSN D while the spectrum availability for other HWSNs A, B and C may remain unaffected.

%Suppose that HWSN E has not subscribed to the auctioning based TVWS spectrum sharing framework and is instead working on an uncoordinated access model. Then the channels used by it could cause partial or full unavailability of spectrum only for HWSN D while the spectrum availability for other HWSNs A, B and C may remain unaffected.

\subsubsection{Spectrum Demands and Marginal Valuations}

We assume that there is a mechanism available at each home TVWS AP to translate aggregate throughput demands from all in-home WSDs into an overall HWSN spectrum demand for the home in terms of number of TVWS channels. For example, for a given spectral efficiency, the bit-rate achievable for a channel with a certain bandwidth can be determined (e.g., with 1bit/second/Hz spectral efficiency, a 8MHz channel would result in 8Mbps bit-rate.). The achievable bit-rate so computed can be combined with the application bandwidth requirement to deduce the number of channels needed. Continuing the example, if we consider a HDTV streaming application with 20Mbps bandwidth requirement, then accounting for losses in efficiency due to aggregation of non-contiguous channels we need 3 channels each capable of 8Mbps bit-rate. In the case of channel sharing, the spectrum demand can include a fraction of a channel (see section \ref{sharingsection}).

We further assume that there is a mechanism within each bidder (HWSN) to generate private marginal valuations as a vector of length equal to the number of requested TVWS channels. Marginal valuations of a bidder refer to values that the bidder associates with the first channel in its demand and every additional channel. While aggregated bandwidth needs of all in-home WSDs determine the spectrum demand (number of TVWS channels requested) of a HWSN, other factors like number of credits the user is willing to spend at a given point in time, the set of individual applications driving the demand and their service requirements (best effort vs. guaranteed service) will influence the choice of marginal valuations. We assume that marginal valuations of each bidder are {\em weakly decreasing}. Fig.~\ref{FigExample1} shows example marginal valuations (with ``V:'' above each HWSN). Considering HWSN A in Fig.~\ref{FigExample1} as a specific example, A's marginal valuations show that it values the first channel it can get
at 13, the second channel at 8 and a third channel at 6.

\subsection{Problem Statement}

Broadly speaking, the goal of the HWSN spectrum manager in our auctioning framework is to allocate TVWS channels to each actively participating HWSN $i$ in a set of $n$ HWSNs, each with spectrum demand, $D_i \le C$, where $C$ is the total number of TVWS channels across all HWSNs that can be allocated from HWSN spectrum manager's perspective.

Recall that $N_i$ represents the set of neighbors of HWSN $i$. The aggregate demand of HWSN $i$'s neighbours is represented by $D_{-i} = \sum_{j \in N_i} D_j$.

To represent the spectrum available at HWSN $i$, we use a bit vector $X_i$ of size $C$. We refer to this vector as the {\em channel availability vector} defined as

 \begin{equation}
  X_i(k) = \left\{ \begin{array}{ll}
         1 & \mbox{if channel $k$ is available at $i$};\\
         0 & \mbox{otherwise}.\end{array} \right. \\
 \end{equation}

 We define the {\em channel assignment vector} $Y_i$ at $i$ as,

 \begin{equation}
  Y_i(k) = \left\{ \begin{array}{ll}
         1 & \mbox{if channel $k$ is assigned to i};\\
         0 & \mbox{otherwise}.\end{array} \right.
 \end{equation}

 We also define the number of channels available at HWSN $i$ as $x_i = \sum_{k=1}^{C} X_i(k)$, and the number of channels assigned to $i$ as $y_i = \sum_{k=1}^{C}  Y_i(k)$.

%  and the exclusive channel vector $E_i$ at node $i$ as,
%
%  \begin{equation}
%   E_i(k) = \left\{ \begin{array}{ll}
% %
% 1 & \mbox{if $\sum_{j \in N_i} X_j(k) = 0$ };\\
%          0 & \mbox{otherwise}.\end{array} \right.
%  \end{equation}
%

As already stated in section 3.1, in our framework the HWSN spectrum manager coordinates the allocation of the available TVWS spectrum among active HWSNs (secondary users) in an interference-aware manner each time via a short-term auction.

The primary design objective for our proposed auctioning mechanism \texttt{VERUM} described in the next section is that it should lead to efficient allocation, i.e., allocate spectrum to HWSNs that value it the most. Other key considerations underlying our design are listed below.

\begin{itemize}

% \item {\em Efficiency}:  The auction mechanism should be decision efficient, i.e., the winners should always be bidders with highest valuations.

\item {\em Truthful or Strategy-Proof}: For every bidder, bidding based on its true valuation should be its best strategy. This eliminates the possibility of bidding
strategies that can affect the outcome of the auction. In our context, the biggest advantage of a strategy-proof auction is the simplification of bidder's
dominant strategy. This facilitates implementation of the coordination mechanism among HWSNs with less complexity.

 \item {\em High Revenue}: While this is not our primary goal, revenue generation serves as an incentive for HWSN spectrum manager (the auctioneer) to run the secondary spectrum use coordination service.

 \item {\em Low Computational Complexity}: This is needed for real-time allocation and re-allocation of TVWS spectrum. It is challenging to meet with large number of HWSNs with diverse time-varying spectrum demands and channel availability bidding for spectrum in the auction.

\item {\em Efficient Spectrum Utilization}: This is the main motivation behind dynamic spectrum access in general. As such it is also an important goal for our work on facilitating effective TVWS spectrum use among home networks.

\end{itemize}

\section{VERUM Auction Mechanism}
\label{auctions}

\subsection{Overview}
The overall architecture of the system based on our proposed auction mechanism \texttt{VERUM} is shown in Fig.~\ref{auctionmodel} (a). \texttt{VERUM} is an online iterative multi-unit auction mechanism that spans multiple rounds. \texttt{VERUM} supports marginal valuations, channel sharing and heterogeneous spectrum availability. We also prove later in this section that it is truthful and efficient.

The timeline for the system in operation is shown in Fig.~\ref{auctionmodel} (b). Time is seen as a sequence of {\em epochs}, each consisting of a short {\em Auction Phase} followed by a much longer {\em Spectrum Use Phase}.  Each {\em Auction Phase} consists of one or more {\em rounds} involving interaction between the auctioneer (HWSN spectrum manager) and bidders (participating HWSNs) as part of the auction to meet the spectrum demand of the bidders subject to their valuations, spectrum availability and mutual interference relationships.  At the beginning of auction phase, the auctioneer announces the initial price, i.e., the {\em reserve price}. It then waits for a {\em bidding period} to receive demands from HWSNs. Depending on the demands, the auctioneer may allocate one or more channels to some of the bidders at the current round price. The auction then may also proceed to another round by increasing the reserve price to bring down excess demand. This process may continue over several rounds until there there is no more demand to be fulfilled. At the end of the auction phase, HWSNs whose bids are successful proceed to use TVWS channels they won in the following Spectrum Use phase until the end of that epoch. Same process repeats in the next epoch and so on.

\begin{figure}
  \begin{center}
  \begin{tabular}{c}
   \subfigure[]
{
	\includegraphics[width=2.4in]{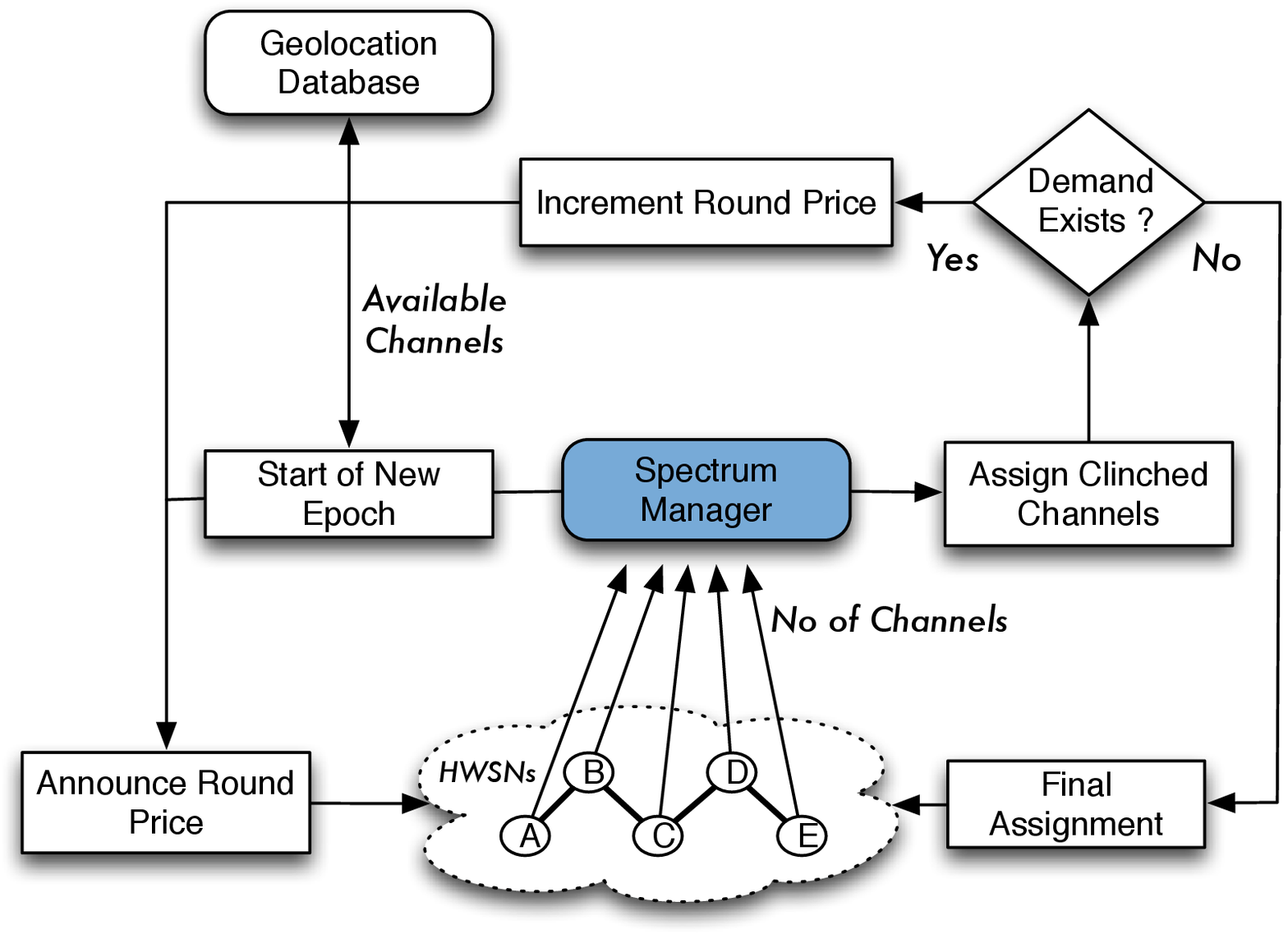}
 }
\\
   \subfigure[]
{
	\includegraphics[width=2.4in]{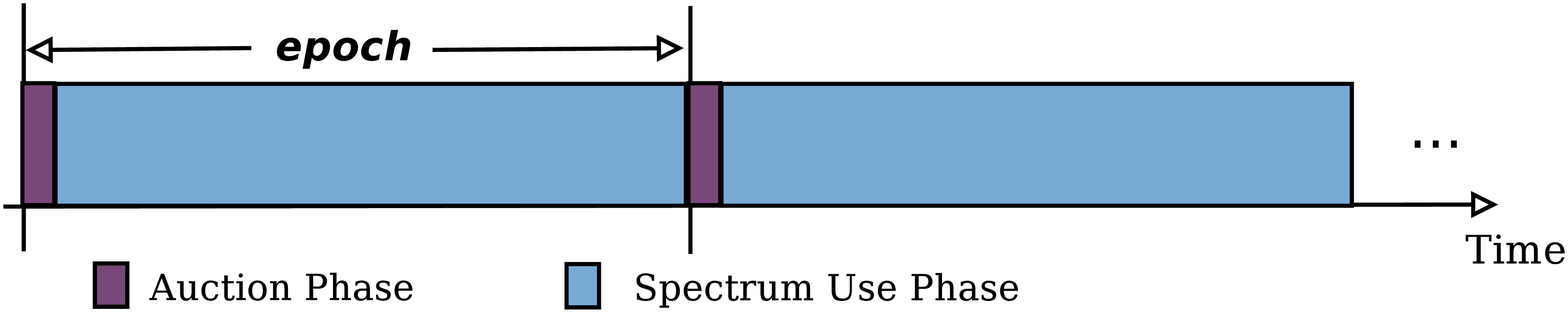}
 }
 \end{tabular}
\end{center}
 \caption{(a) System architecture with VERUM (b) Timeline of the auctioning based coordinated TVWS spectrum access system.}
  \label{auctionmodel}
\end{figure}

In the following, \texttt{VERUM} is described in two steps starting with the simpler case of exclusive channel use where a channel is allocated only to a set of mutually non-interfering HWSNs. In section \ref{sharingsection}, we build on the exclusive use model to support channel sharing among HWSNs.

\subsection{The Case of Exclusive Channel Use}

% One of the most common types of auctions that are strategy proof are the second bid pricing auctions proposed in the seminal paper by Vickery \cite{vickery1961}. In this type of auction, the bidders simultaneously submit sealed bids for the goods and the highest bidder is allocated the goods but the price he pays is the opportunity cost for the goods won which is the highest rejected bid (the highest valuation of the goods in case the winning bidder was not part of the auction).

% As discussed by Zhou et al \cite{veritas}, in a conventional spectrum allocation mechanism using Vickery pricing, a channel is allocated to $n$ highest bidders at the $(n+1)^{th}$ bidder's bid price (which is the highest rejected bid). In auction mechanisms used for spectrum management, channels are assigned in a non-conflicting way starting from the highest bidder, until all the bidders are considered or the channels are exhausted. The winner $i$ is charged the highest bid of the unallocated conflicting neighbor and if one doesn't exist then the winner is charged zero.

\subsubsection{Existing Auction Designs}

We begin by considering alternative designs to motivate our auction design. In our problem, individual bidders may request multiple items (channels) with private and independent valuations. Moreover, we seek an efficient and truthful auction. Both of these would suggest the use of multi-unit Vickrey auction proposed in the seminal paper~\cite{vickery1961}. In the classical multi-unit Vickrey auction, there are $K$ items to be sold at the auctioneer and bidders submit sealed bids. Winner determination is straightforward: $K$ highest bids are deemed winning bids. The price that each winner pays is more involved in that a bidder who wins $M$ items ($M$ $\le$ $K$) pays the opportunity cost for those $M$ items. Specifically, the bidder winning $M$ items will pay the amount of the $M^{th}$ highest losing bid for the first item, $(M-1)^{th}$ highest losing bid for the second item and so on.

When we apply the above multi-unit Vickrey auction mechanism in the spectrum allocation context where spatial reuse is allowed for efficient spectrum utilization and conflict (interference) relationships need to be accounted, computing Vickrey pricing described above becomes complex as each bidder could be allocated a channel and there may as such be no losing bid. We could simplify the pricing scheme but that can come at the expense of truthfulness as illustrated in the rest of this paragraph. Suppose channels are assigned in a non-conflicting way starting from the highest bidder until all the bidders are considered or the channels are exhausted. Then consider the following pricing scheme: the winner $i$ is charged the highest bid of the unallocated conflicting neighbor if there exists one, otherwise it is charged zero. However such as a scheme violates truthfulness. To see this, consider the conflict graph shown in Fig.~\ref{prooffig1} where HWSNs $A$, $B$, $C$, and $D$ are bidding for one channel each and suppose that there are two channels ($C_1$ and $C_2$) available at all the HWSNs. When the bidders bid truthfully (i.e., at their valuations), it can be seen that the utilities are 5, 3, 0, and 1 with $C_1$ assigned to $A$ and $D$, and $C_2$ assigned to bidder $B$ at prices 0, 1, and 1 respectively. If however bidder $C$  strategically bids 3 (exceeding its valuation and thus untrue) with resulting increase in its utility to 1, all the bidders are now charged zero (or some reserve price). An implication of a truthful auction scheme is that the best strategy for each bidder is to bid truthfully. By showing that the best strategy for bidder $C$ is not to bid truthfully we show that the auction scheme is not truthful. This is also discussed by Zhou et al.~\cite{veritas}.

  \begin{figure}[h]
  \centering
  \includegraphics[width=2.8in]{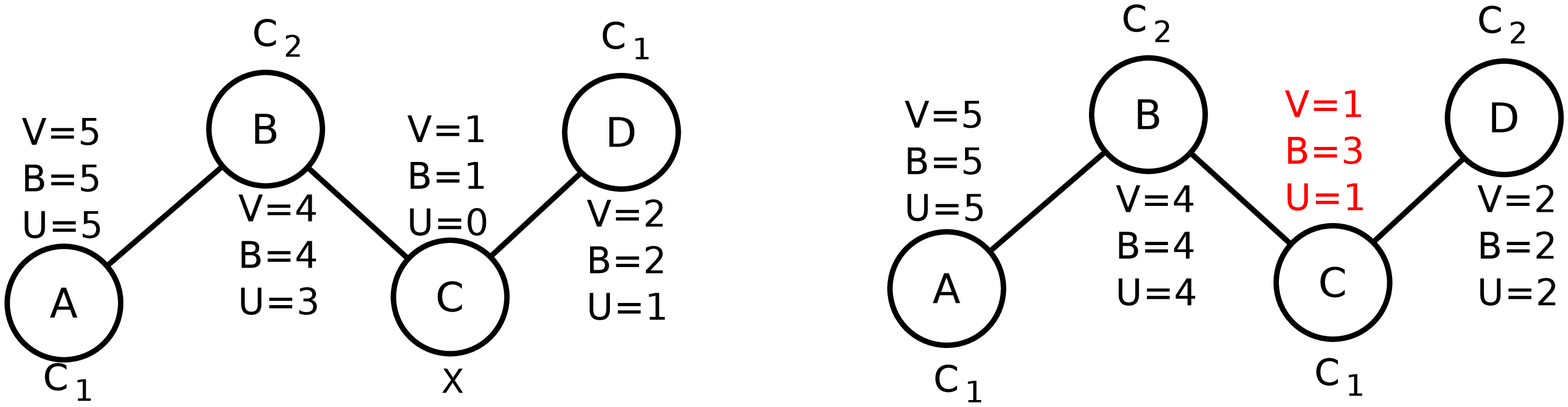}
  \caption{Example which illustrates that variants of Vickrey auction with simplified pricing can lead to untruthful allocation when applied to spectrum allocation. Left side of figure corresponds to the truthful bidding case while right one is for strategic bidding. In both cases, V stands for bidder's valuation for obtaining a channel, B is the bid amount and U denotes the bidder's utility calculated as difference between its valuation and the price it pays.}
  \label{prooffig1}
  \end{figure}
%
%  \begin{figure}[h]
%  \centering
%  \includegraphics[width=2in]{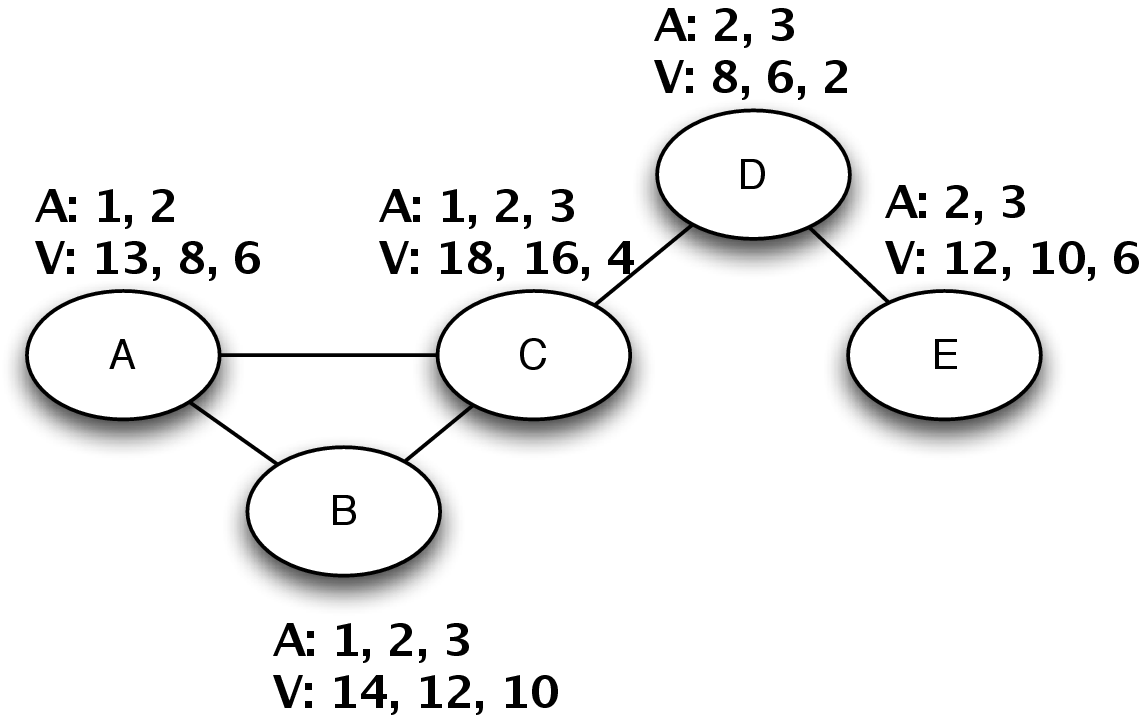}
%  \caption{Example conflict graph with 5 HWSNs A, B, C, D and E.}
%  \label{FigExample1}
%  \end{figure}

To ensure truthfulness, existing multi-unit spectrum auction schemes~\cite{veritas,satya,small,Hoefer-comm12} employ different means to realize Vickrey pricing and end up being polynomial only under certain constraints or have an exponential run time. All these auction mechanisms are variants of the sealed bid auction scheme where the bidders submit sealed bids and the auctioneer computes the winners based on all the bids. The price that the winning bidders have to pay must enable truthfulness (e.g., via Vickrey or VCG pricing).

In VERITAS~\cite{veritas}, the auctioneer collects sealed bids from all the bidders and uses a greedy algorithm for allocation. It then determines a critical neighbor for each of the bidders based on which the winning price is computed. To support channel sharing, the critical neighbor determination scales exponentially with the number of neighbors for each winning bidder.

In SATYA~\cite{satya}, the auctioneer collects sealed bids from all the bidders and uses a greedy algorithm for allocation. During allocation, buckets are used to prevent non-monotonic allocations and ironing is used to remove all possible non-monotonic allocations. Note that an auction scheme is monotonic if a larger bid from a bidder always leads to at least as many channels being allocated as with a smaller bid. Monotonicity is known to be a necessary and sufficient condition for an auction scheme to be truthful. The use of bucketing and ironing in SATYA results in loss of some channel sharing opportunities. On the other hand, its running time scales exponentially with the number of neighbors with whom channel sharing is considered.

In both SMALL~\cite{small} and \cite{Hoefer-comm12}, the auctioneer collects sealed bids from all the bidders and uses VCG pricing to achieve a truthful auction. VCG pricing is considered truthful only if the underlying allocation scheme is optimal but finding an optimal channel assignment is NP hard. SMALL relies on obtaining the optimal solution for graph coloring underneath. \cite{Hoefer-comm12}, on the other hand, approximation algorithms are converted to truthful mechanisms using the randomized meta-rounding technique and is used to achieve truthfulness in expectation; however as noted before applying randomized meta-rounding results in prohibitive running times for large conflict graphs.

\subsubsection{Proposed Auction Design}
We develop an iterative auction mechanism that is fundamentally different from sealed bid auction schemes discussed above. Specifically, we seek an iterative auction that has the same outcome as a Vickrey auction. Moreover, as stated at the outset, we would like our auction mechanism to support marginal valuations, heterogeneous spectrum availability and channel sharing.

Iterative auctions have multiple advantages over sealed bid auctions. The foremost advantage of an iterative auction is that it provides a simpler means to achieve truthfulness. Another compelling advantage is that iterative auctions are transparent in the way they determine the outcome of an auction. In other words, bidders can verify and validate the auction outcome. To see why this might be important, notice that when bidders do not pay the bidding price, they can doubt the correctness of the auction scheme. In fact, a frequently mentioned problem with sealed bid auction schemes is that the auctioneer could create a fake second-highest bid after receiving all the sealed bids from the bidders in order to increase its revenue. This is a non-issue with iterative auctions as faking a second highest bid might result in revenue loss to the auctioneer as it is unaware of the private values of the highest bidder.

Yet another advantage of iterative auctions is that they are better at protecting the privacy of bidders' valuations. It is preferable for bidders not to be required to disclose their private values as they could be based on sensitive information. When compared to sealed bid (non-iterative) auctions, iterative auctions incur lower information revelation as they do not share their value with the auctioneer. While several solutions are available to protect the privacy of bidders in a sealed bid auction, they either require a third party to compute the outcome of the auction or require cryptography techniques. Also, the desire to realize transparent auction (i.e., the ability for bidders to validate the auction outcome) might compromise privacy with sealed bid auctions by requiring disclosure of several other bids.

%We take an iterative approach towards designing a truthful spectrum auction mechanism that does not resort to complex pricing. Additionally, as noted at the outset, our auction mechanism should support marginal valuations, heterogeneous spectrum availability and channel sharing.

%In contrast, an iterative auction that has the same outcome of a Vickery auction, has a constant run time that only depends on the step size of the round price.

%Finally, preference elicitation is a crucial problem in non-iterative auctions. In a sealed bid auction scheme, the bidder has to compute the hard valuations for each of the bundle. This becomes relevant if the wireless channels are considered non identical and require increased computation at the bidders. In iterative auctions since the bidder is responding to dynamic ask prices only values for relevant bundles have to be computed.

The basis for our proposal, \texttt{VERUM}, is the ascending-bid multi-unit auction proposed in \cite{ausubel2004} that is simpler yet shown to be efficient and also replicate the outcome of Vickrey auction. But Ausubel's mechanism~\cite{ausubel2004} was not intended for dynamic spectrum sharing. As such it does not account for any of the unique characteristics associated with the dynamic spectrum allocation in general and TVWS spectrum secondary sharing in particular. Spatial reuse, which allows multiple users to be allocated the same channel provided they do not interfere with each other, is one such characteristic that is already mentioned above. For example, consider the conflict graph shown in Fig.~\ref{FigExample1} (a) . If HWSN E is using channel 1, it is still available for use by HWSNs A, B and C. Heterogeneous spectrum availability discussed in section 3.1.3 is another characteristic. Channel sharing discussed in section \ref{sharingsection} is yet another characteristic that needs to be supported. Our contribution in terms of auction design lies in adapting the mechanism in \cite{ausubel2004} to factor in all the above mentioned characteristics while preserving its desirable properties like simplicity, truthfulness and efficiency. To the best of our knowledge, this is the first time Ausubel's mechanism has been applied in the dynamic spectrum management context even though it has been around for some time.

We now describe \texttt{VERUM}. Referring to Figs. \ref{auctionmodel}(a) and \ref{auctionmodel}(b) , the auctioneer (HWSN spectrum manager) announces at the beginning of the epoch a reserve price for the channels and bidders (HWSNs) respond with the number of channels they are willing to buy at price $p_1$. The round price controls the demand from each bidder in the sense that the number of channels it can bid is determined by the number of channels within its private marginal valuations that have higher valuations than the current round price. In the example shown in Fig.~\ref{FigExample1}(a), if the current round price is 13, then demand from node B is 1 as it has only one channel that has a higher valuation than 13.

At each round $t$ with price $p_t$, the auctioneer determines if for any bidder $i$ the aggregate demand of bidder $i$'s neighbors in the conflict graph $D_{-i}(p_t) =\sum_{j \in N_i} D_j(p_t)$ is less than $x_i$, the number of channels available at $i$. If so, the difference is deemed clinched and the new channels clinched in this round are considered won by the bidder $i$ at that round price $p_t$. Note that to allow for spatial reuse, we view only the neighbors of a node in the conflict graph as its competing bidders. For example in the conflict graph shown in Fig.~\ref{FigExample1}(a): A competes with B and C in the auction; C competes with A, B and D; and E competes only with D. This is unlike the classical multi-unit Vickrey auction or Ausubel's mechanism where all bidders compete with each other.

To handle heterogeneous spectrum availability, we introduce the notion of exclusive channels. A channel $k$ is considered exclusive to HSWN $i$ if it is available for use only by $i$ and not by any of its neighbors. Some channels maybe exclusive to a HWSN right at the first round of the auction. Alternatively, channels may become exclusive to a HWSN in subsequent rounds (associated with higher reserve prices) when the demand of any of its neighbors reduces to zero. In each round $t$, we identify the set of exclusive channels and consider them clinched by the respective HWSNs at the round price $p_t$.

%If there are any channels that are available to a bidder exclusively (and not to its neighbors) then they are also deemed clinched at that round price.

The above process repeats with increasing round prices until there is no demand from the bidders.

Fig.~\ref{FigExample1}(b) illustrates the working of the \texttt{VERUM} auction mechanism for the example in Fig.~\ref{FigExample1}(a). Note that in Fig.~\ref{FigExample1}(a), the set of channels available at each HWSN with ``A:''. As a specific example, HWSN A has two channels available (1 and 2). The number of channels a bidder is assigned is limited by the number of channels it has available. In the example, A can be assigned at most 2 channels even if its demand is more.

  \begin{figure*}
  \begin{center}
  \begin{tabular}{cc}
   \subfigure[]
{
	\includegraphics[width=2.4in]{Figures/example_topo.eps}
 }
 &
   \subfigure[]
{
	\includegraphics[width=2.4in]{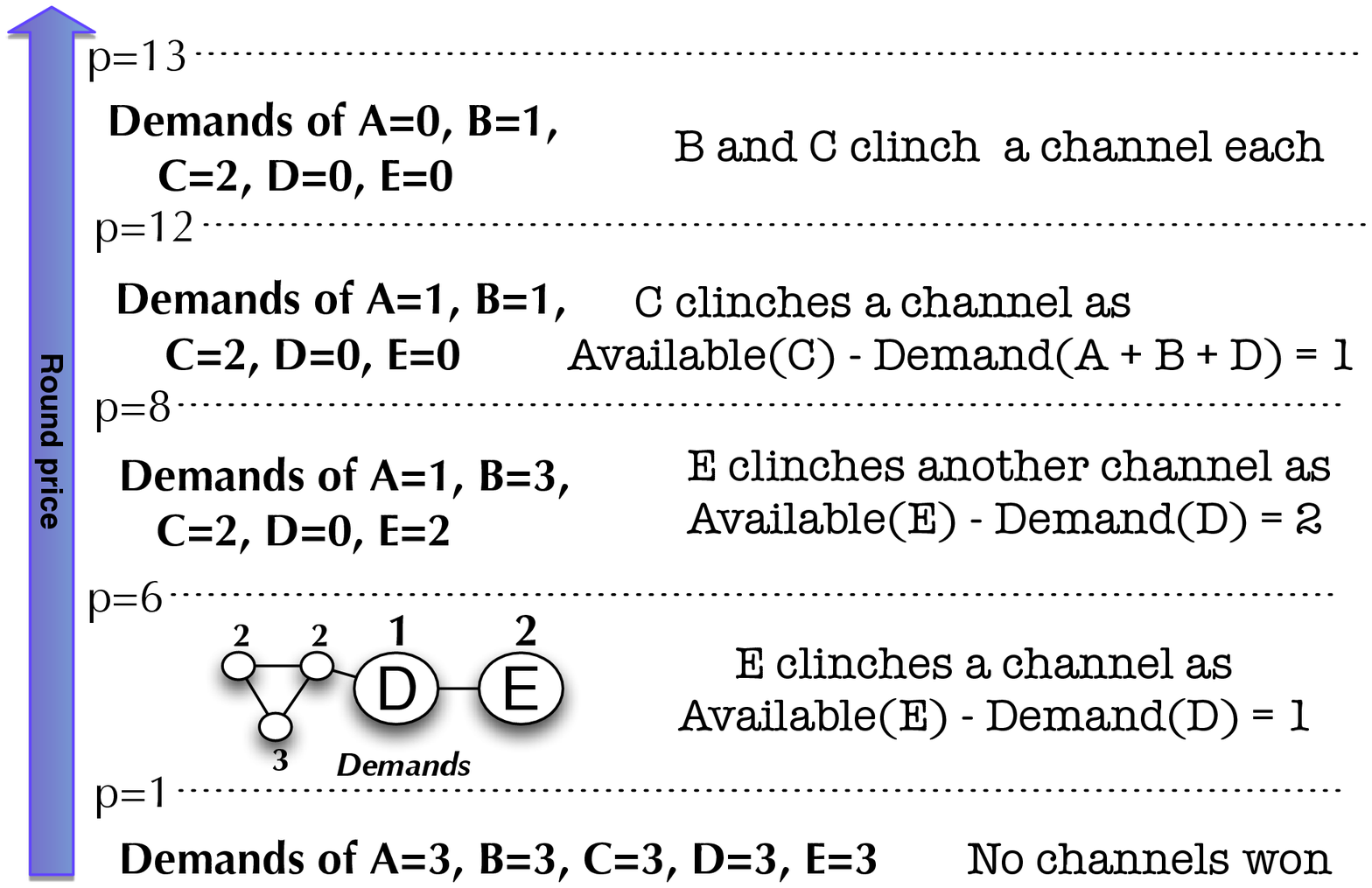}
 }
 \end{tabular}
\end{center}
 \caption{(a) Example conflict graph with 5 HWSNs A, B, C, D and E (b) Example illustrating the working of \texttt{VERUM} auction mechanism }
  \label{FigExample1}
\end{figure*}

%
%  \begin{figure}[h]
%  \centering
%  \includegraphics[width=2.8in]{Figures/example_tab.eps}
%  \caption{Example illustrating the working of \texttt{VERUM} auction mechanism.}
%  \label{example}
%  \end{figure}

In the first round of the auction with price $p=1$, the five bidders A, B, C, D, and E bid for 3 channels each (as they have higher valuations than the current round price). Since there is excess demand, the auction proceeds to subsequent rounds with the price getting incremented at each round. At price $p=6$, the cumulative demand of bidder E's neighbors is 1, where as 2 channels are available for use by E hence it is assured of winning at least one channel. Considering this, a channel is deemed clinched by bidder E at price $p=6$. Similarly at price $p=8$, as bidder D's demand drops to 0, E clinches another channel. At this point bidder E has clinched two channels as $x_E - D_{-E} = 2$, bidder E is assured of at least two channels. At price $p=12$, the cumulative demand of bidder C's neighbors (A, B and D) is 2 whereas there are 3 channels available for use by C. Hence bidder C clinches a channel at price $p=12$. Finally,  at price $p=13$, bidder A's demand drops to 0 and bidders B and C win a channel each. Since no more channels can be assigned at this point, the auction comes to an end with bidders B, C and E winning 1, 2, and 2 channels, respectively.

%   $(\sum_{i \in N_i} D_i < 4)$

%  \begin{table}[!h]
% \renewcommand{\arraystretch}{1}
%  \begin{center}
%  \begin{tabular}{|c|c|c|c|c|c|c|}
%  \hline
%  \textbf{Price}&\textbf{A}&	\textbf{B}&\textbf{C}&\textbf{D}&\textbf{E}&
%clinching bidders\\
%    \hline
%  $p=1$ & 3 & 3 & 3 & 2 & 3 &   \\
%      \hline
%  $p=6$ & 2 & 3 & 2 & 1 & 2& E\\
%    \hline
%  $p=8$ & 1 & 3 & 2 & 0 & 2& E\\
%    \hline
%  $p=12$ & 1 & 1 & 2 & 0 & 0& C\\
%    \hline
%  $p=13$ & 0 & 1 & 2 & 0 & 0& B, C\\
%   \hline
%  \end{tabular}
%  \end{center}
%\caption{Demands from bidders at each round}
%  \label{Taballoc1}
%  \end{table}

It can be clearly seen from the above example that the result of the auction is efficient: the auction has allocated the channels to the bidders who value them the most. The formal proof is provided in section~\ref{proofs}. It can also be seen that the resultant pricing for channels won is equivalent to that of multi-unit Vickery auction. For example, bidder C wins its first channel at the second highest losing bid ($p=12$) among its neighbors in the conflict graph and the second channel at the highest losing bid ($p=13$). Similarly, bidder B wins a channel at the highest losing bid amongst its neighbors.

Although the mechanism computes winner determination and payments effectively, it does not allocate channels. In the example discussed above, although we determine that the bidders B, C, and E won 1, 2, and 2 channels respectively, we did not identify the specific channels they won. This is because we consider all items (channels) as substitutes in the auction mechanism. We determine the actual channel allocation using a greedy algorithm. We select the channel that is available in the least number of the winning node's neighbors. This directly reduces the number of HWSNs for whom a channel becomes unavailable.

%This is the number of  This is the sum of the number of assured channels and exclusive channels for bidder $i$. The number of assured channels is the difference between the number of available channels and the cumulative demand of its neighbours.

From the above description, we can determine that \texttt{VERUM} has a time complexity of $O(LCn)$ where $L$ is the number of rounds that the auction takes and $n$ is the number of HWSNs in the network and $C$ is the total number of channels available in the network. This shows that \texttt{VERUM} scales linearly with the number of HWSNs in the network.

\subsection{Channel Sharing Case}
\label{sharingsection}

Here we show how to extend \texttt{VERUM} to support channel sharing among interfering HWSNs. Towards this end, we introduce the notion of {\em usable channel opportunities}. The usable channel opportunities represent the total number of opportunities potentially available to each HWSN to use the available channels. In the exclusive use case, each HWSN has no more than one opportunity to use an available channel among a set of conflicting HWSNs. However with channel sharing enabled, conflicting HWSNs may have more than one opportunity between them to use a channel.

The usable channel opportunities of a HWSN $i$, $C_i^{opp}$, is defined as:

\begin{equation}
  C_i^{opp} = \sum_{k=1}^{C} \sum_{j\in N_i} F_j(k) \times X_i(k)
\end{equation}

where $F_j(k) \in \{0,1\}$ is the channel usability factor that indicates if channel $k$ can be used by HWSN $j$.

The mechanism to determine if a HWSN can use a channel can be as simple as fixing the number of HWSNs per channel, or based on a more realistic function like we do in the following. Specifically, we use a function of interference temperature and bandwidth given by:

 \begin{equation}
  F_p(k) = \left\{ \begin{array}{ll}
         1 & \mbox{if $T(p,b_p(k)) < \tau$ and $\sum_{q\in N_p} b_q(k) < 1-b_p(k)$};\\
         0 & \mbox{otherwise}.\end{array} \right.
 \end{equation}
where $T(p,b_p(k))$ is the interference temperature, $b_p(k)$ is the fraction of the channel bandwidth that the bidder $p$ intends to use on channel $k$ and $\tau$ is the threshold representing tolerable interference level. The interference temperature is a metric proposed in the literature~\cite{intf-temp} to quantify the interference among dynamic spectrum users. It is similar to noise temperature and used to measure the power and bandwidth occupied by interference.

Interference temperature can be computed for each HWSN $p$ as:

\begin{equation}
  T(p,b_p(k)) = \frac{Power_p (p,b_p(k)*B)}{kB}
\end{equation}
where $Power_p (p,B)$ is the interference power in watts centered at $p$ while using channel bandwidth $B$ in Hz and $k$ is the \textit{Boltzmann}'s constant.

With usable channel opportunities $C_i^{opp}$ defined as above, extending \texttt{VERUM} to allow channel sharing reduces to changing the criteria for clinching channels. Specifically, a bidder $i$ clinches $m$ channels if its number of usable channel opportunities $C_i^{opp}$ exceeds the aggregate demand of its neighbors by $m$. The price paid by the winning bidder for a channel $k$ would now be a fraction of the exclusive price: $Price_i^{shared}(k) = Price_i(k) * b_i(k)$, where $Price_i(k)$ is the price paid if $k$ was an exclusive channel and $b_i(k)$ is the fraction of channel $k$ that HWSN $i$ utilizes.

Note that channel sharing realized as stated above would always result in spectrum utilization and winning bidders that is no worse than the exclusive use case. This is because even if the channel demands of conflicting HWSNs are high (e.g., $b_i (k) \approx 1$), the iteratively increasing price with each new round would drive out excess demand eventually so that Eq. 4 would be satisfied for at least one HWSN.

\subsection{VERUM Truthfulness and Efficiency}
\label{proofs}

\newtheorem{theorem1}{Theorem}
\begin{theorem}
%\noindent {\bf Theorem 1}:
\texttt{VERUM} is truthful.
\end{theorem}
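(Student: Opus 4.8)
The plan is to establish strategy-proofness by a \emph{clinching} argument in the style of Ausubel's ascending-bid auction~\cite{ausubel2004}, adapted to the conflict-graph setting. Fix a bidder $i$ and fix the reported (weakly decreasing) marginal-valuation vectors of all the other HWSNs; I will show that reporting its true marginal valuations $v_i^1 \ge v_i^2 \ge \cdots$ is a best response for $i$, and since this will hold for every fixed profile of the others' reports it yields that \texttt{VERUM} is truthful. The central object is, for each $m$, the \emph{clinch price} $q_i^m$: the smallest round price at which the residual supply seen by $i$ --- namely $x_i - D_{-i}(p)$, together with any channels that have turned exclusive to $i$ --- first reaches $m$. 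Since $D_{-i}(\cdot)=\sum_{j\in N_i}D_j(\cdot)$ is weakly decreasing in price and depends only on the \emph{neighbors'} valuations, the sequence $q_i^1 \le q_i^2 \le \cdots$ is well defined and weakly increasing, and --- this is the heart of the argument --- it does not depend on $i$'s own report at all. Bidder $i$'s report only controls $D_i(p)$, i.e. how many channels $i$ is still willing to accept at each price, and, through the $D_{-j}$ terms of its neighbors, the clinch prices of \emph{other} bidders, which are irrelevant to $i$'s own utility; it cannot move $i$'s own residual supply.

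Given this invariance, I would characterize $i$'s outcome as a pure function of its report: if $i$ ends up winning $m$ channels it pays exactly $q_i^1+\cdots+q_i^m$ (the $k$-th channel it clinches is clinched at price $q_i^k$), so its utility is $\sum_{k=1}^{m}(v_i^k - q_i^k)$. Because $(v_i^k)$ is weakly decreasing and $(q_i^k)$ weakly increasing, the per-channel surplus $v_i^k - q_i^k$ is weakly decreasing in $k$, so this sum is maximized at $m = m_i^\star := \max\{\, m \le x_i : v_i^m \ge q_i^m \,\}$. The remaining step is to check that sincere bidding realizes precisely $m_i^\star$ channels at prices $q_i^1,\dots,q_i^{m_i^\star}$: under sincere bidding $D_i(p)\ge m$ holds iff $v_i^m \ge p$, so at price $q_i^m$ we have $D_i(q_i^m)\ge m$ exactly when $q_i^m \le v_i^m$, in which case $i$ clinches its $m$-th channel there, while if $q_i^m > v_i^m$ the demand has already dropped below $m$ and $i$ never clinches an $m$-th channel --- exactly the channels $i$ would want at those prices. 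Manipulations are then ruled out: bidding above true values can only make $i$ clinch extra channels at prices exceeding their value (appending negative-surplus terms), bidding below can only make $i$ forgo channels of non-negative surplus or terminate the auction before some $q_i^m$ is reached, and in neither case can $i$ reduce the price of a channel already clinched, since those prices are the $q_i^k$, which are outside $i$'s control.

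Finally I would note that the two TVWS-specific features fit the same framework. Heterogeneous availability is handled because a channel exclusive to $i$ adds to $i$'s residual supply but to no neighbor's demand, so it only shifts the $q_i^m$ while preserving their independence from $i$'s report; the channel-sharing extension of section~\ref{sharingsection} replaces $x_i - D_{-i}(p)$ by $C_i^{opp}-D_{-i}(p)$ in the clinching test and scales the $k$-th payment by the bandwidth fraction $b_i(k)$, and one verifies that the clinch prices remain functions of quantities $i$ cannot influence so that the monotone-surplus argument still applies. The main obstacle is exactly this invariance lemma for $q_i^m$: in Ausubel's single-pool setting it is immediate, whereas here I must argue carefully that overlapping neighborhoods, channels turning exclusive mid-auction, and the global ``run until no demand'' termination rule together leave $i$'s clinch prices untouched by $i$'s bids --- and, in the sharing case, that reporting $b_i(k)$ does not open a side channel for manipulating either the clinching test or the payment.
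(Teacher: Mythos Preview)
Your proposal is correct but follows a genuinely different line from the paper. The paper's own proof is very short: it appeals to the standard characterization that an auction is truthful whenever (i) the price charged to a winner is independent of that winner's own bid and (ii) the allocation rule is monotone (if $i$ wins at bid $p$ then $i$ also wins at any bid $p^*>p$). It then verifies (i) by observing that both the clinching condition $x_i-D_{-i}(p_t)$ and the round price $p_t$ involve only the neighbors' demands, and (ii) by noting that weakly decreasing marginal valuations force $D_{-i}$ to be non-increasing in price, so an increased bid can never cost $i$ a channel it had already clinched. That is the whole argument.

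You instead rebuild Ausubel's clinching proof from scratch: you isolate the clinch prices $q_i^1\le q_i^2\le\cdots$, argue they are functions of the neighbors' reports only, write $i$'s utility as $\sum_{k\le m}(v_i^k-q_i^k)$, and use the opposite monotonicity of $(v_i^k)$ and $(q_i^k)$ to identify the optimal $m$ and show sincere bidding attains it. This buys you a self-contained argument that does not rely on an external ``critical-value'' theorem, handles the multi-unit/marginal-valuation structure more cleanly (the paper's phrase ``any bid $p^*>p$'' is a little informal when bids are vectors), and makes the Vickrey-outcome equivalence visible. The paper's route, in exchange, is a two-paragraph proof. Your flagged ``main obstacle'' --- that neither overlapping neighborhoods, exclusive channels appearing mid-auction, nor the global termination rule lets $i$ perturb its own $q_i^m$ --- is exactly the content of the paper's point (i), so once you write that lemma out the two proofs converge on the same invariance fact; you just exploit it directly rather than through the monotonicity-plus-bid-independence template. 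Your caveat about the sharing extension (whether $b_i(k)$ can leak into $C_i^{\mathrm{opp}}$ or the scaled payment) is well taken, and the paper's proof does not address it separately either.
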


\begin{proof}
% {\bf Proof}:
In order to prove that an auction mechanism is truthful, we need to show: (i) the pricing function does not depend on the bid of the winning bidder; and (ii) it is monotonic, i.e., if bidder $i$ wins a channel at bid $p$ then he will win the channel at any bid $p^* > p$.

It is indeed the case that pricing function in \texttt{VERUM} does not depend on the bid of the winning bidder. In any given round, the number of channels won by a bidder $i$ is not dependant on $i$'s demand but instead on the cumulative demand of $i$'s conflicting neighbors. Even more crucially, the price that $i$ needs to pay for the channels it clinches in a round $t$ is the round price $p_t$, which does not have any relation with $i$'s bid.

%To see that, consider equation (7) which represents the number of channels won by bidder $i$ at a round $t$ paying a price $t$. By substituting equation (6) in equation (7), we get  $c_i^t = D_{-i}(p_{t-1}) - D_{-i}(p_{t})$ which does not contain any information about $i$'s bid and only captures the drop in demand at $i$'s neighbors.

%Pricing in the auction mechanism occurs only through clinching and the cumulative clinches by bidder $i$ at round t is show in Eq. \ref{clinches}. It can be seen that the right side of Eq. \ref{clinches} is independent of the bidder's actions and changing one's bid strategy would have no influence on the payoff.

Now to the monotonicity. Assume HWSN $i$ won a channel at bid $p$ and at any of its subsequent bid $p^* > p $, the cumulative demand of $i$'s neighbours $D_{-i}(p) >= D_{-i}(p^*)$.  The only way $i$ could not win the channel at higher price $p^*$ is if the aggregate demand of $i$'s neighbors increases with the bid $p^*$. This is not possible since we have assumed that the marginal valuations are weakly decreasing (see Section 3.1.4), which results in monotonically non-increasing demands with each new round. Thus $i$ will always win the channel at any bid $p^* > p$.
\end{proof}

%%%

\begin{theorem}
%\noindent {\bf Theorem 2}:
\texttt{VERUM} yields an efficient allocation, i.e., spectrum is allocated to bidders who value it the most.
\end{theorem}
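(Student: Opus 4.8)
The plan is to lean on two facts already in hand: \texttt{VERUM} is truthful (Theorem 4.1), so we may assume every HWSN reports its true weakly-decreasing marginal-valuation vector and hence that bids equal valuations; and \texttt{VERUM} is an adaptation of Ausubel's ascending-bid multi-unit auction~\cite{ausubel2004}, whose efficiency and coincidence with the Vickrey outcome are known. The task is then to show that the three adaptations --- restricting competition to conflict-graph neighbors (spatial reuse), introducing exclusive channels (heterogeneous availability), and replacing $x_i$ by usable channel opportunities $C_i^{opp}$ (sharing) --- all preserve efficiency.

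First I would pin down the demand curves. Because marginal valuations are weakly decreasing, each $D_i(p)$ is a non-increasing step function of the round price with breakpoints exactly at the marginal-valuation entries; this is the same structural fact that made the monotonicity step of Theorem 4.1 go through, and it is what makes the clinching argument clean. Second, I would establish a \emph{local clinching lemma}: whenever bidder $i$ clinches a channel at round price $p_t$, the clinching inequality $D_{-i}(p_t) < x_i$ holds, so among $i$'s conflicting neighbors fewer than $x_i$ of them still want an additional channel above $p_t$, while $i$ itself still values this channel at least $p_t$ (otherwise $D_i(p_t)=0$ and $i$ would not be clinching). Hence assigning the channel to $i$ crowds out no neighbor who values an extra channel more than $i$ does --- any such higher-value neighbor has already clinched, or will clinch, its own channel. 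Exclusive channels are the degenerate case (a channel available only to $i$ is clinched by $i$ and is useless to every conflicting bidder), and the sharing case is identical once $x_i$ is replaced by $C_i^{opp}$ and per-channel values are scaled by the utilized fraction $b_i(k)$, exactly as in Section~\ref{sharingsection}.

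Third, I would assemble the local statements into a global efficiency claim by an exchange argument. Suppose, for contradiction, the final assignment $Y$ is not of maximum total value and let $Y'$ be an optimal assignment differing from $Y$. Then there is a channel (or a chain of conflicting bidders) along which reassignment strictly raises total value, i.e.\ some bidder receiving the channel under $Y$ values it less than a conflicting bidder who receives it under $Y'$; tracing back to the round $p_t$ in which the low-value bidder clinched that channel, the higher-value bidder would have had positive demand at price $p_t$ and thus been counted in $D_{-i}(p_t)$, contradicting $D_{-i}(p_t) < x_i$. Once efficiency is established, observing that the clinching prices are precisely the opportunity-cost (Vickrey) payments --- as the worked example makes explicit for bidders $B$ and $C$ --- shows the outcome matches the multi-unit Vickrey auction, giving efficiency in the strong VCG sense.

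I expect the main obstacle to be exactly this last step in the presence of spatial reuse and non-uniform availability: unlike Ausubel's single global market, \texttt{VERUM} effectively runs one "local market" around each vertex, and one must verify both that the per-vertex clinching decisions are mutually consistent and that the greedy channel-labelling rule (choosing the channel available in the fewest neighbors) actually realizes the clinched counts $y_i$ as a conflict-free, availability-respecting assignment. Making the chain-of-reassignments argument watertight on an arbitrary conflict graph --- rather than the single augmenting swap that suffices when availability is homogeneous --- is where the real care is needed; for the homogeneous, exclusive-use case the claim reduces directly to Ausubel's theorem applied to each connected component, and the heterogeneous and sharing cases follow from the exclusive-channel and usable-channel-opportunity refinements already introduced.
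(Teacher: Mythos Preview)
Your proposal is correct but considerably more elaborate than the paper's argument, and it targets a stronger conclusion. The paper's proof is a five-line direct contradiction: assume bidder $i$ wins channel $k$ at round price $p_t$ with $V_i(k) < \max_{j\in N_i} V_j(k)$; for $i$ to clinch $k$ at price $p_t$, the demand of every conflicting neighbor for $k$ must already have dropped to zero, so $V_j(k)\le p_t$ for all $j\in N_i$ while $V_i(k)>p_t$, a contradiction. That is essentially your ``local clinching lemma'' stated as the entire proof --- the paper reads ``efficient'' as the purely local property that each assigned channel goes to the highest-valuing bidder in its conflict neighborhood, and stops there.

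You instead aim at global welfare maximization: you layer a local clinching lemma, then a global exchange argument, and you explicitly flag the realizability question for the greedy channel-labelling step and the consistency of the per-vertex ``local markets''. These are genuine issues for the stronger claim, and the paper simply does not engage with them: it neither argues that the clinched counts $y_i$ are always realizable as a conflict-free, availability-respecting assignment under the greedy rule, nor that per-channel local optimality implies a globally welfare-maximizing allocation on an arbitrary conflict graph with heterogeneous availability. So your route is more rigorous and proves more, at the cost of substantially more machinery; the paper's route is short and self-contained but establishes only the weaker per-channel statement. If you want to match the paper, your second paragraph alone (the local clinching lemma, phrased as a contradiction) suffices.
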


%\textit{Proof} :
\begin{proof}
The proof is by contradiction. Suppose that bidder $i$ wins channel $k$ with valuation given by $V_i(k) < max_{j \in N_i} V_j(k)$ at price $p_t$. Note that for the purposes of this proof we consider that each channel is distinct, a general case compared to what we had so far viewing all channels as identical and as substitutes for each other.

When the bidder $i$ wins channel $k$ at round $t$ with round price $p_t$, as per \texttt{VERUM}, the aggregate demand for channel $k$ from $i$'s conflicting neighbors must be zero. That would happen if and only if $\forall j \in N_i V_j(k) \le p_t$ and $V_i(k) > p_t$. In other words,  $D_{-i}(p_t)(k) = 0$ and $V_i(k) > max_{j \in N_i} (V_j(k))$, a contradiction.

Hence we have the following. If a bidder $i$ wins a channel $k$ then $ \forall j \in N_i ~~~ V_i(k) >  V_j(k)$ --- the channel is allocated to its highest valuing bidder, proving that the auction mechanism is efficient.
\end{proof}

%%%

\subsection{Revenue Maximizing Integer Linear Programs}
\label{ilpappendix}

Even though revenue maximization is not the main objective behind the design of \texttt{VERUM}, we are still interested in evaluating the revenue generated using it. In order to benchmark its performance in terms of revenue, below we formulate the revenue maximizing spectrum allocation problems corresponding to the exclusive use and channel sharing models as integer linear programs.

We first introduce the notion of cumulative clinches ${C_i^t}$ that represents the total number of channels won by a bidder $i$ including the number of exclusive channels at $i$ until round $t$. Using this notion, the number of channels won until any given round $t$ can be quantified. Specifically it is defined as:

 \begin{equation}
 C_i^t = \max \left\{0, x_i - D_{-i}(p_t) + E_i(p_t)
\} \right\}
\label{clinches}
 \end{equation}

where $x_i$ is the number of available channels at bidder $i$. $D_{-i}(p_t)$ is the cumulative demand of bidder $i$'s neighbors and $E_i(p_t)$ is the number of channels available exclusively at $i$, both till round $t$.

The number of channels clinched at round $t$,  ${c_i^t}$, can then be defined as:

\begin{equation}
c_i^t = C_i^t - C_i^{t-1}.
\end{equation}

Now the total price $S_i$ paid by bidder $i$ can be stated as:

\begin{equation}
 S_i = \sum_{t=1}^L p_t \times c_i^t
\end{equation}
where $L$ is the number of rounds the auction takes and $p_t$ is the price at round $t$.

%This greedy allocation algorithm is the source of sub-optimality in \texttt{VERUM}.
%
%In order to quantify this sub-optimality we formulate the special case of computing channel allocations where the same set of channels are available for all the HWSNs as a fractional coloring problem. Since the fractional coloring problem \cite{nphard} is known to be NP-hard, we formulate it as an integer linear program given by,

\subsubsection{Exclusive Use Case}

If there are $n$ HWSNs participating in the auction, then linear program for the revenue maximization problem in the exclusive use case is as follows.

\begin{align*}
&Maximise : R = \sum_{i=1}^n S_i \;\;\; \mathtt{subject \;to} \\
&\forall i \sum_{\substack{k=1\\i\not=j}}^{C} \{Y_i(k) \times Y_j(k) \} = 0, \forall j \in N_i\\
&\forall i \sum_{k=1}^{C} \{Y_i(k) \times X_i(k) \}= y_i \\
\end{align*}

The first constraint ensures that the same channel is not assigned to conflicting neighbors and the second constraint ensures that only available channels are allocated to the HWSNs.

\subsubsection{Channel Sharing Case}

Similar to the exclusive usage model, we can formulate the revenue maximizing spectrum allocation problem as the following integer linear program.

%\begin{eqnarray*}
%Maximise : R = \sum_{i=1}^n S_i \;\;\; \mathtt{subject \;to} \\
%\forall i \sum_{\substack{k=1\\i\not=j}}^{C^*} T(i,b_i(k)) < \tau\;\;\; \forall j \in N_i\\
%\forall i \sum_{j\in N_i} b_j(k) < 1-b_i(k) \\
%\forall i \sum_{k=1}^{C^*} \{Y_i(k) \times X_i(k) \}= y_i \\
%\end{eqnarray*}

\begin{align*}
& Maximise : R = \sum_{i=1}^n S_i \;\;\; \mathtt{subject \;to} \\
&\forall i ~~~ \forall k \in \{1, .., C\} ~~~ T(i,b_i(k)) \times Y_i(k)< \tau\;\;\; \\
&\forall i ~~~\forall k \in \{1, .., C\} \sum_{j\in N_i} \{b_j(k) \times Y_j(k)\} < 1-[b_i(k) \times Y_i(k)] \\
&\forall i \sum_{k=1}^{C} \{Y_i(k) \times X_i(k) \}= y_i \\
\end{align*}

While sharing the channel among multiple HWSNs, the first constraint ensures that the interference is less than a given threshold. The second constraint ensures that enough bandwidth is available for all the HWSNs sharing a channel. The last constraint is as in the exclusive case.

\section{Evaluation}
\label{eval}

For our evaluation, we follow the auctioning based coordinated TVWS spectrum sharing framework described in section~\ref{model} with the HWSN Spectrum Manager as the auctioneer and participating HWSNs with non-zero demand as bidders. We compare the results obtained with different auction mechanisms. Specifically, we compare \texttt{VERUM} with channel sharing enabled against VERITAS~\cite{veritas} and SATYA~\cite{satya}, the two existing truthful and efficient auction schemes that can be applied for our problem setting. Recall that VERITAS does not support channel sharing, whereas SATYA supports channel sharing as well as heterogeneous channel availability. For SATYA, we use a bid price based bucketing function as was done by its authors in \cite{satya}. To benchmark the above three mechanisms with respect to the optimum, we use the ILP from \ref{ilpappendix} with Vickrey pricing. Results shown for optimal solution are obtained by solving the ILP using the GUROBI solver\footnote{\url{http://www.gurobi.com/}}.

To examine the impact of the density of HWSNs, we consider two different real residential environments: dense-urban and urban. The house and building layout data for each of these environments represent 1 square kilometer areas. These data are obtained from the UK Geographical Information System (GIS) database~\cite{maps-uk} for parts of London. The dense-urban residential environment comprises of 5456 houses or buildings, whereas 2435 houses/buildings are present in the urban environment. The number of TVWS channels available at any given location is computed based on pixelized data of coverage maps for UK TV transmitters. Across the two environments we consider, we find that the maximum number of TVWS channels available over all locations is 21 but some locations may have fewer than maximum channels in the case of heterogeneous channel availability.

 Fig.~\ref{uncoordinated} shows average number of potential interferers for the dense-urban and urban scenarios considered. As the interferers are computed purely based on node (house) distribution and interference range without any assumption on spectrum coordination, Fig.~\ref{uncoordinated} reflects the worst-case interference experienced with uncoordinated TVWS spectrum sharing among HWSNs. It can be seen that the potentially number of interferers are significantly high, especially for the dense-urban scenario and higher interference ranges. These results support the coordinated spectrum sharing approach advocated in this paper.

\begin{figure}[h]
  \begin{center}
   \includegraphics[width =2in]{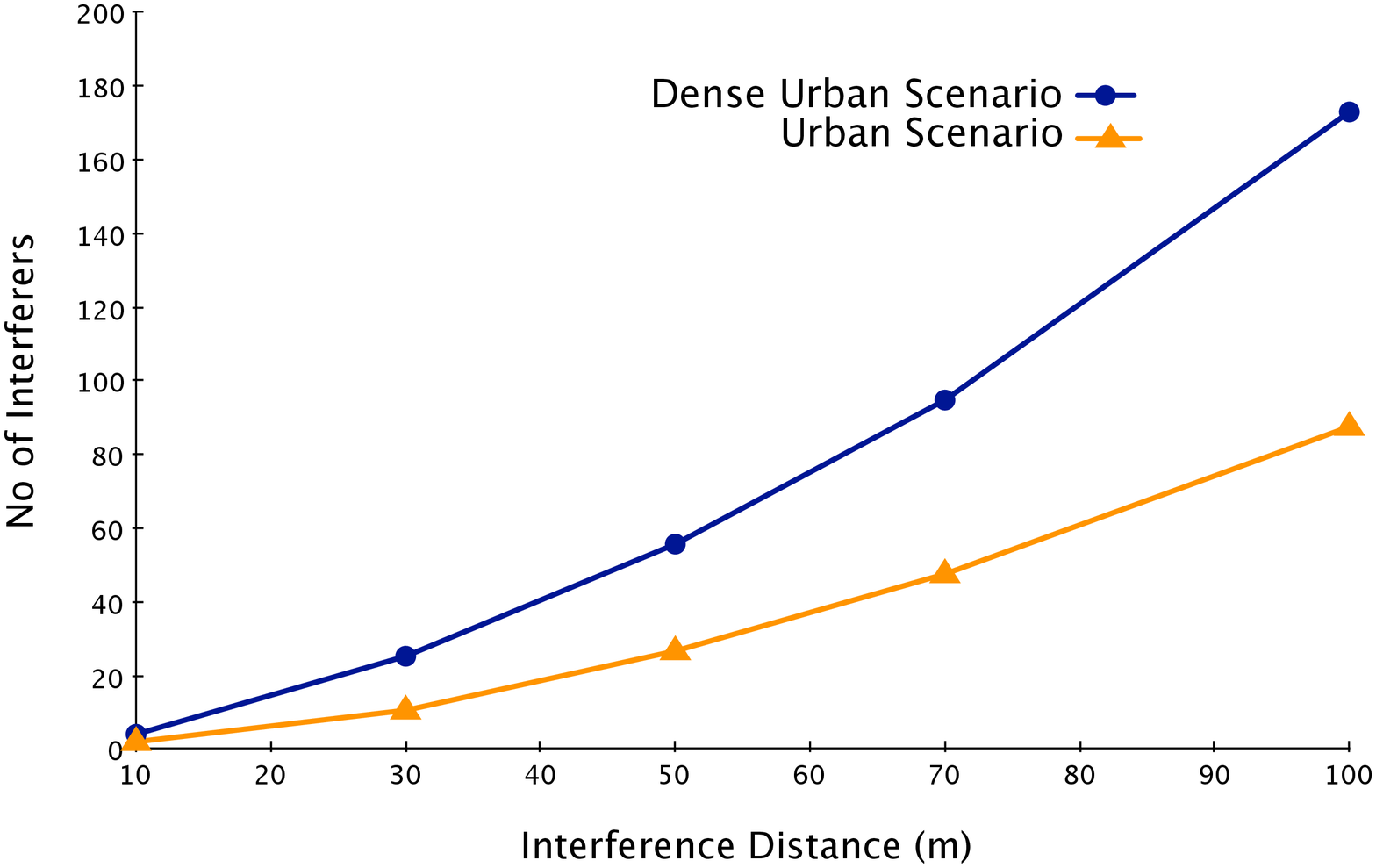}
    \caption{Average number of potential interferers in urban and dense-urban scenarios with different interference ranges.}
\label{uncoordinated}
 \end{center}
\end{figure}

We realize diverse spectrum demands from different HWSNs as follows. The average demand across all subscribed and active HWSNs is a variable input parameter specified as a percentage value. Demand for each HWSN (again as a percentage value) is then randomly generated so that overall average conforms to the specified input value. The demand for a HWSN in terms of number of channels needed is obtained by taking the product of its percentage demand and available number of channels (e.g., A HWSN with 80\% demand and 5 available channels needs 4 channels). Unless otherwise specified, the default value for average demand across all HWSNs is set to 60\%. Marginal valuations for the set of channels requested by a HWSN are also randomly generated so that they are weakly decreasing and each value falls within the range of [0, 100] virtual currency units.

For evaluation purposes, we use a fixed interference range based model to study the impact of different interference conditions. By default we set the interference range to 30 meters. We also analyse the performance of VERUM with different interference ranges in section \ref{additionalResults}.

\subsection{Metrics}

We use the following metrics in our evaluations:

\begin{itemize}

\item {\em Revenue}: The revenue the auctioneer (HWSN spectrum manager) obtains for managing access to TVWS spectrum in an interference-aware manner. This is the sum of the market clearing prices paid by all $k$ winning bidders in the auction $R = \sum_{i=1}^k S_i$ where $S_i$ is given by Eq. 8. Note that the primary goal for \texttt{VERUM} is to maximize the utility for all auction participants rather than maximizing revenue, so we set the initial round price (reserve price) for \texttt{VERUM} to a small value (10 virtual currency units by default in our evaluations) and increase the price with the sole aim of removing excess demand.

\item {\em Spectrum Utilization}: The percentage of available channels at each HWSN that are allocated (to any HWSN), averaged across all HWSNs. One of the primary reasons for the interest around TVWS spectrum is to achieve better spectrum utilization. An efficient mechanism in the spectrum utilization sense should be able to allocate all the available channels given enough demand for the channels.

\item {\em Percentage of Winners}: The percentage of bidders who are allocated at least one channel at the end of the auction. A higher value of this metric is preferred as that indicates users have incentive to successfully participate in the auctioning based coordinated access method. The number of winners, however, is constrained by the density of the area and the mutual interference among HWSNs in that area.

\end{itemize}

\subsection{Revenue}
\begin{figure}
  \begin{center}
  \begin{tabular}{c}
   \subfigure[]
{
	\includegraphics[width=2.4in]{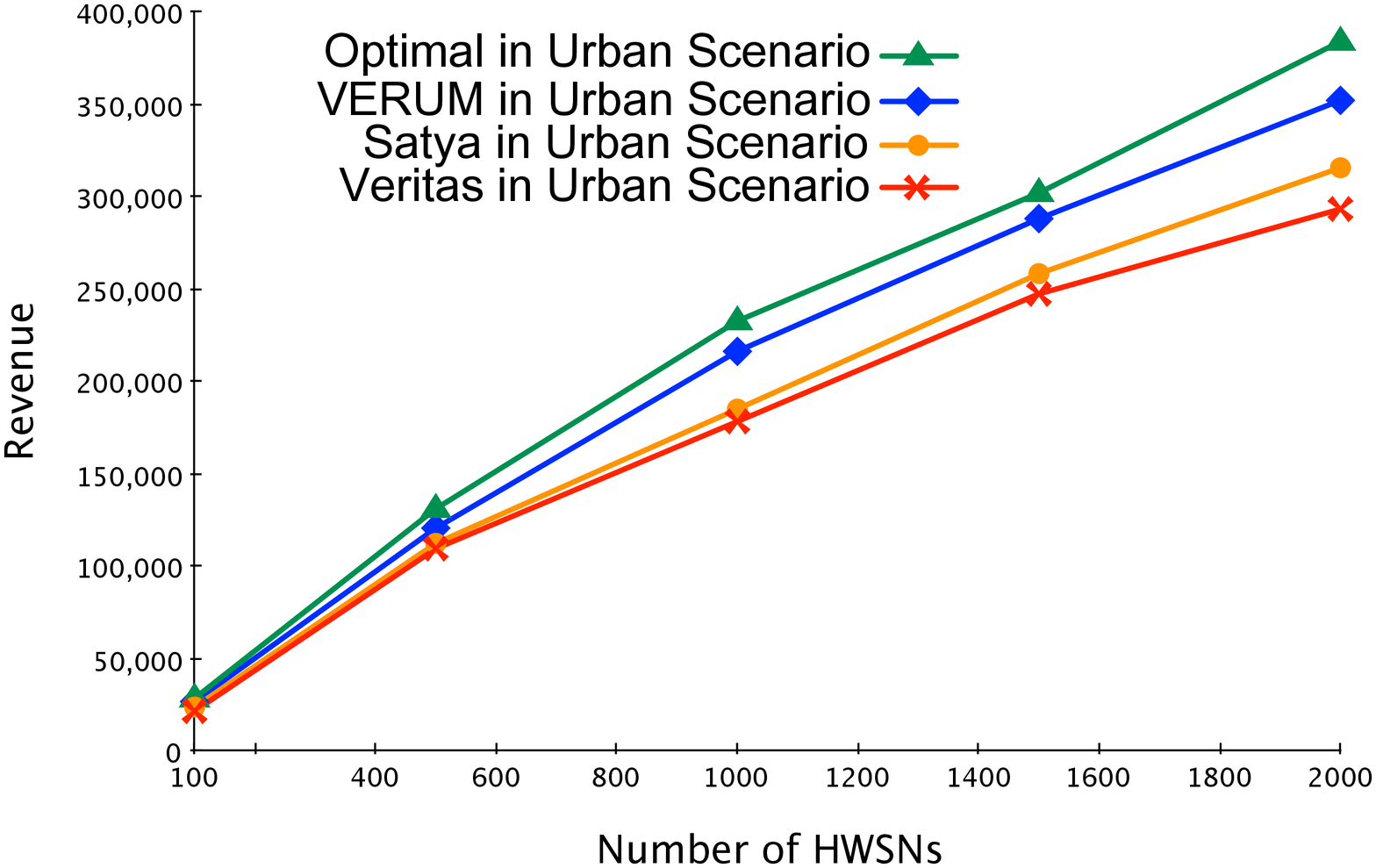}
        	\label{revenue_noofnodes_absolute}
 }
\\
   \subfigure[]
{
	\includegraphics[width=2.4in]{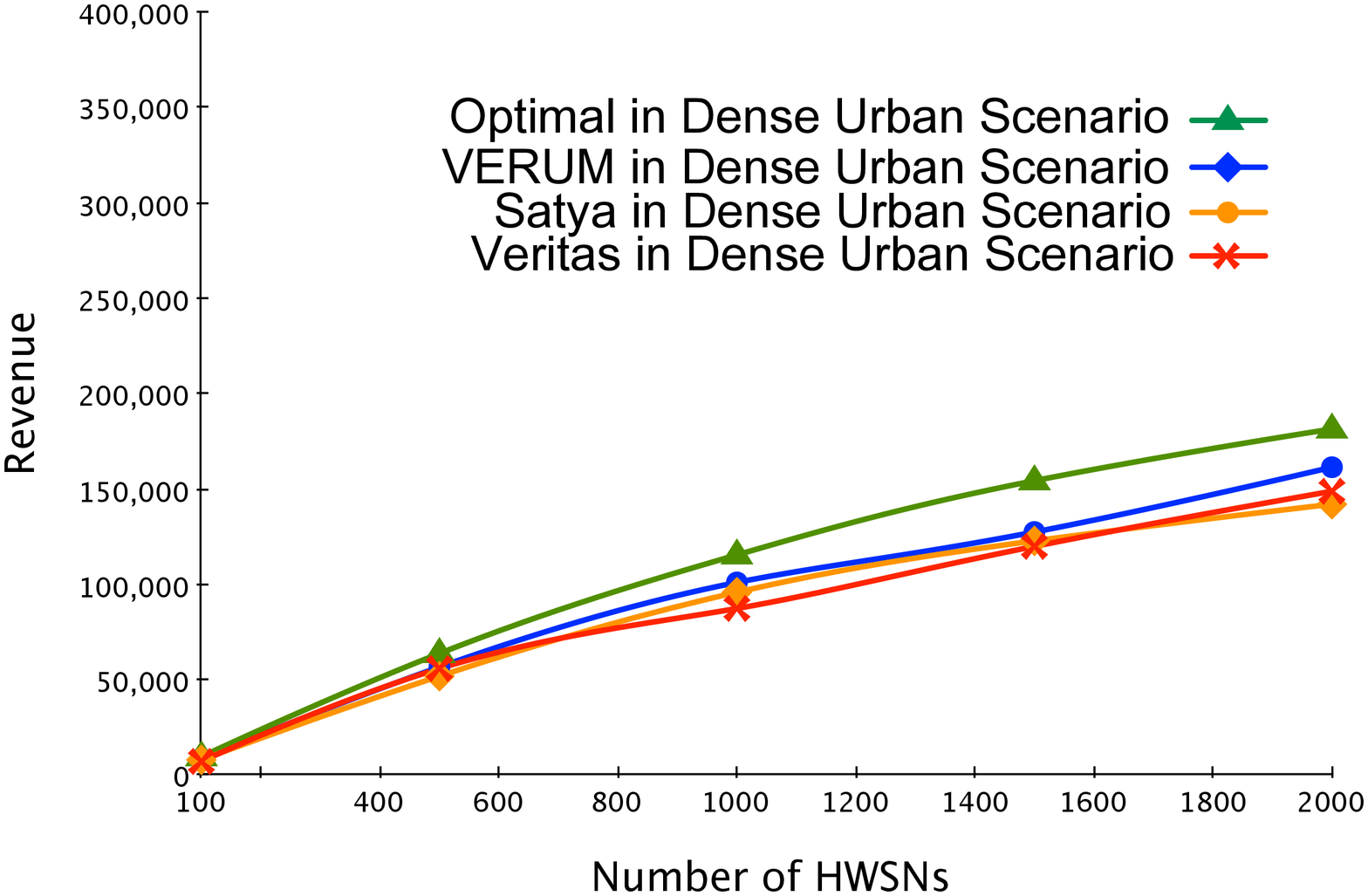}
        	\label{revenue_noofnodes_absolute_hd}
 }
 \\
   \subfigure[]
{
	\includegraphics[width=2.2in]{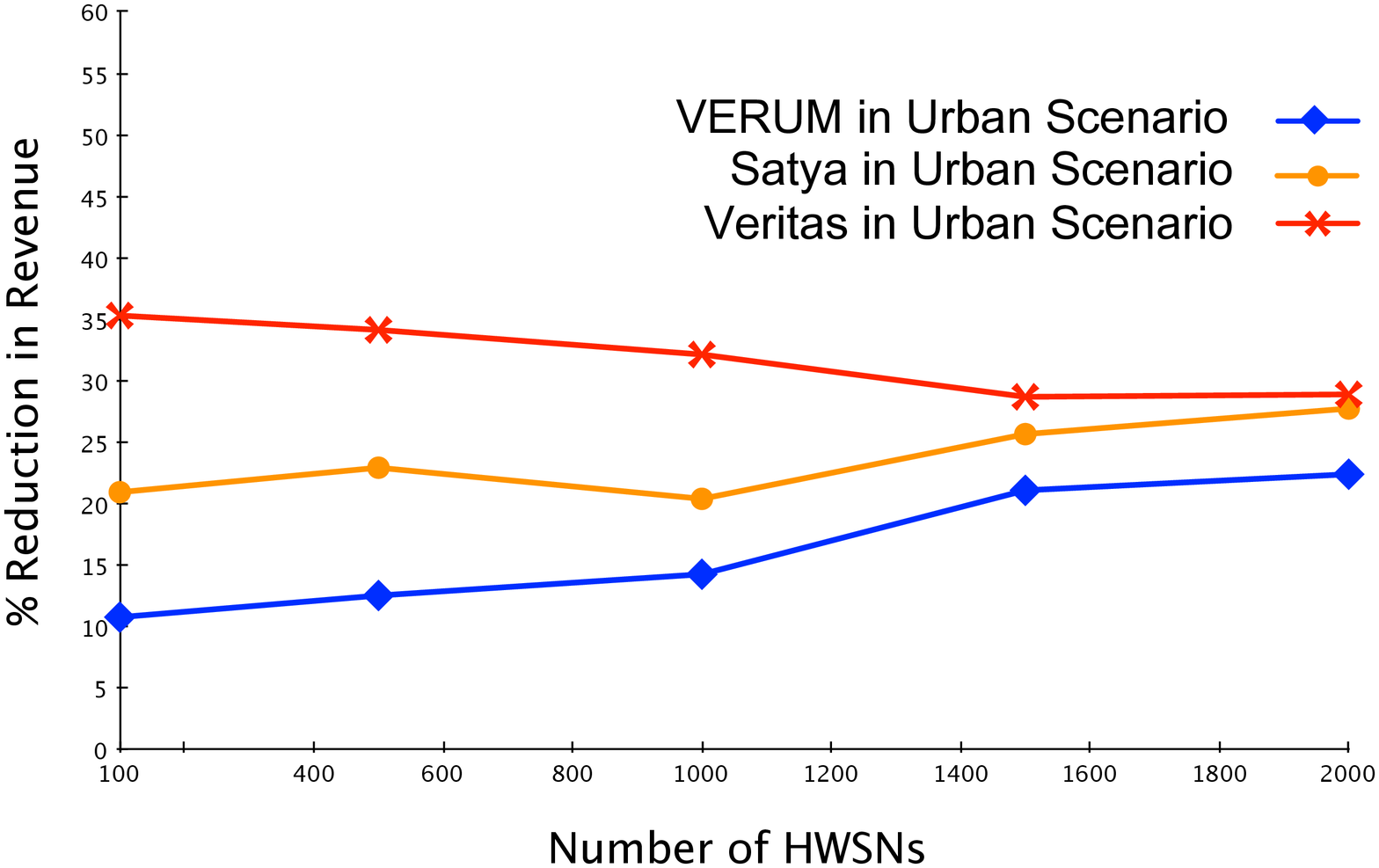}
        	\label{revenue_noofnodes}
 }
\\
   \subfigure[]
{
	\includegraphics[width=2.2in]{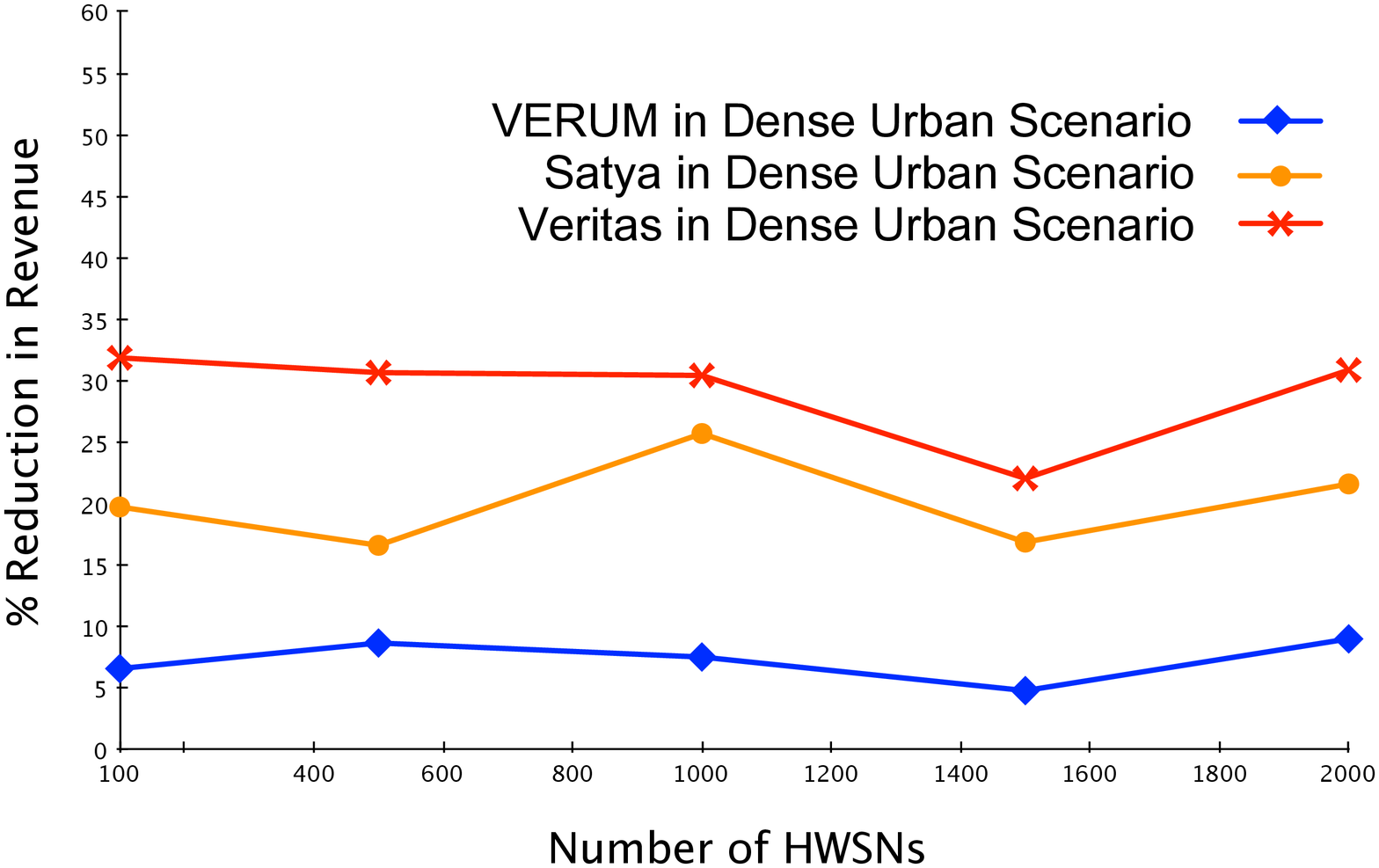}
        	\label{revenue_noofnodes_hd}
 }
 \end{tabular}
\end{center}
 \caption{Revenue with varying number of active subscribed HWSNs.}
  \label{revenue-nodes}
\end{figure}
 
Figs.~\ref{revenue-nodes} (a) and (b) show the revenue (in virtual currency units) obtained with different auction schemes with varying number of active subscribed HWSNs for urban and dense-urban scenarios respectively. Note that varying the number of HWSNs in either scenario only has the effect of varying the area under consideration, thereby allowing us to study how different mechanisms react to increasing scale in terms of area and number of HWSNs for a given density.  On the other hand, the effect of HWSN density can be observed by comparing urban and dense-urban scenarios. We notice that there is a substantial drop in revenue generated by the auction for all mechanisms in the dense-urban case when compared to the urban scenario. As the number of potential interferers is greater in the dense-urban scenario (see Fig.~\ref{uncoordinated}), there are correspondingly fewer channel reuse opportunities in that scenario. This in turn has the effect of fewer winners per channel and consequent reduction in revenue relative to the urban scenario.

\begin{figure}[h!]
  \begin{center}
  \begin{tabular}{c}
   \subfigure[]
{
	\includegraphics[width=2.4in]{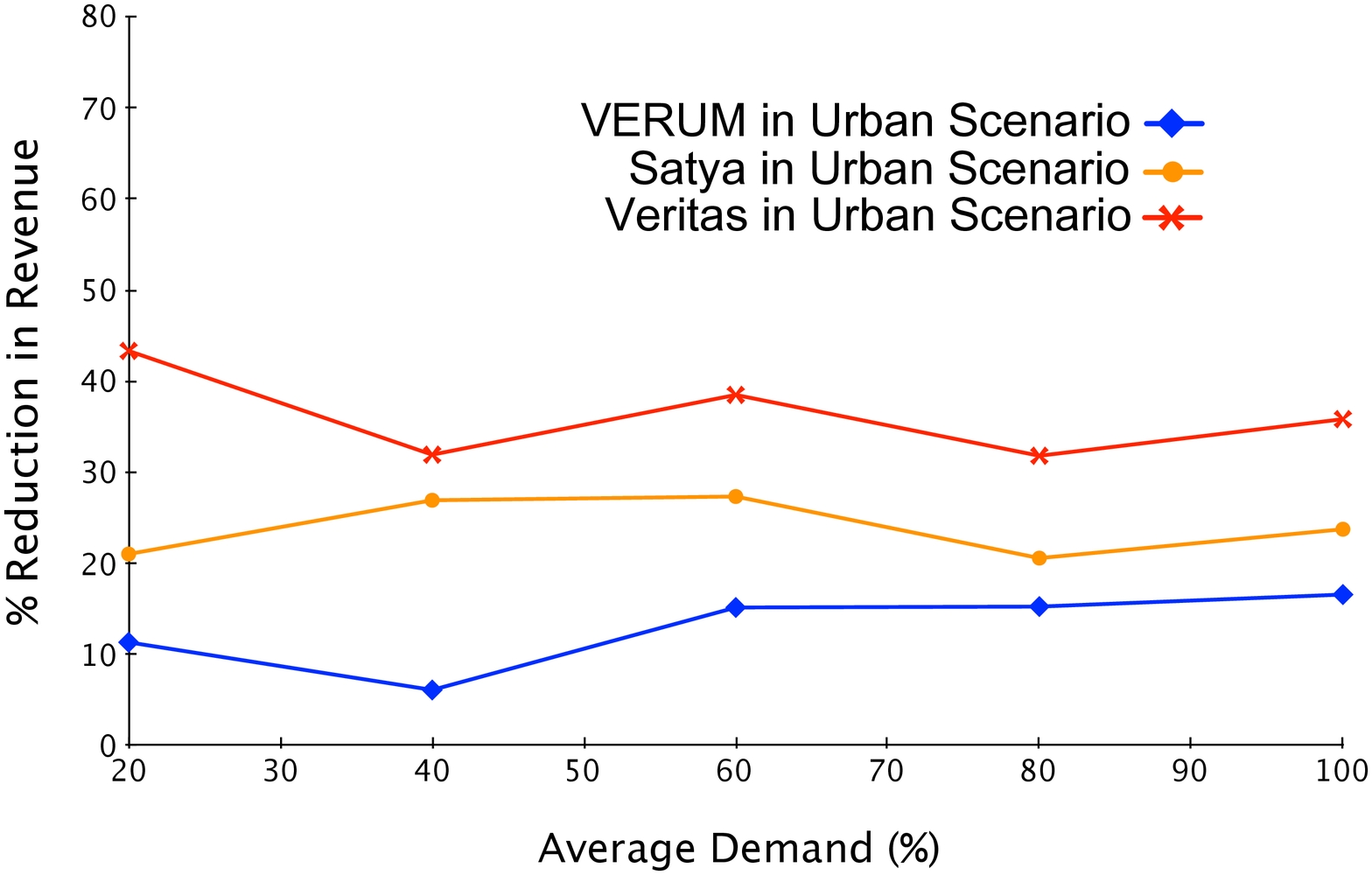}
        	\label{revenue_demand}
 }
 \\
   \subfigure[]
{
	\includegraphics[width=2.4in]{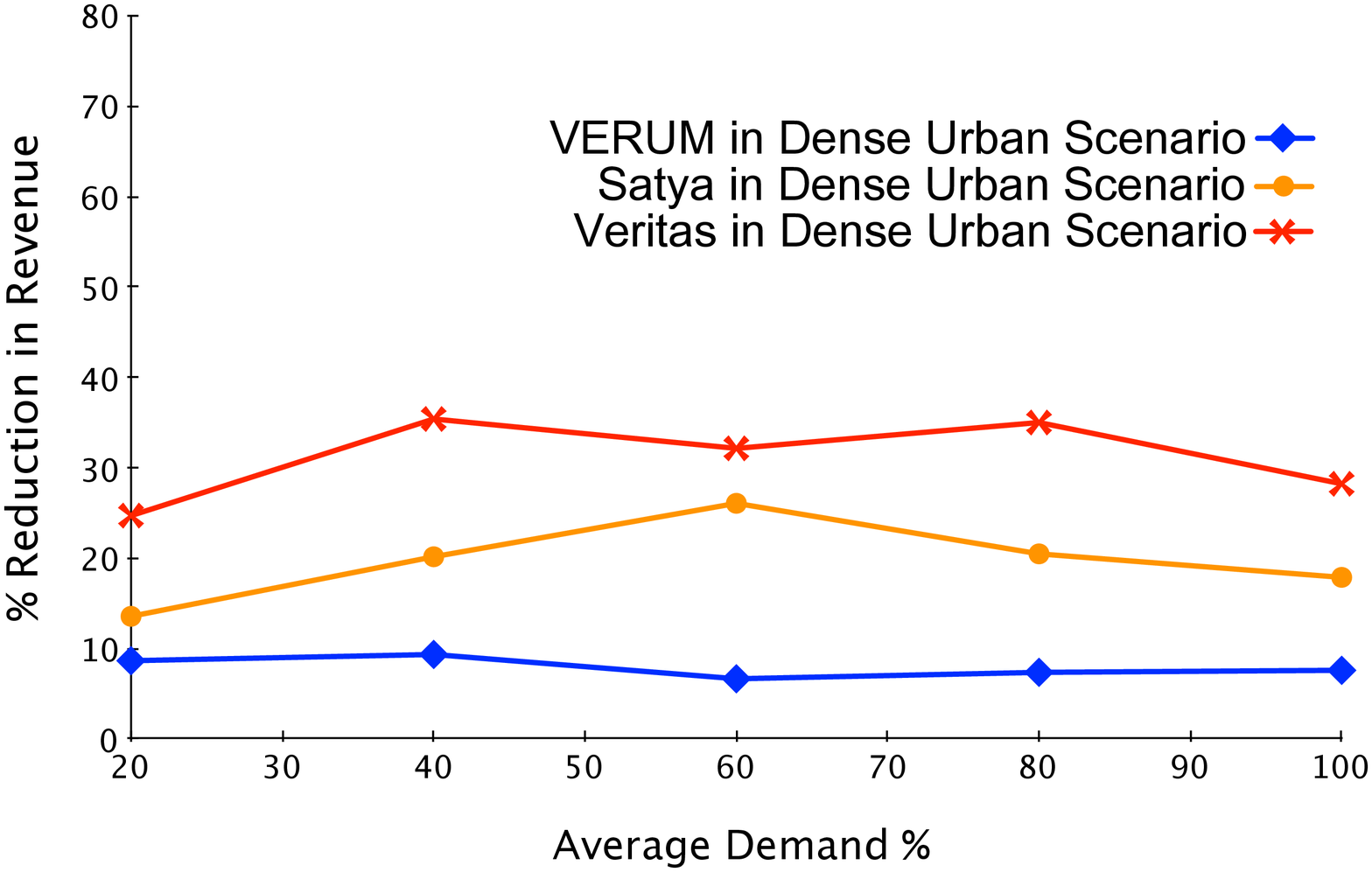}
        	\label{revenue_demand_hd}
 }

 \end{tabular}
\end{center}
 \caption{Revenue with varying demand.}
  \label{revenue-demand}
\end{figure}

To better highlight the benefit of \texttt{VERUM} for revenue generation compared to VERITAS and SATYA, we show percentage reduction in revenue for these three mechanisms with respect to the optimal case (ILP solution) in Figs.~\ref{revenue-nodes} (c) and (d) for urban and dense-urban scenarios. The non-monotonic trend seen for individual curves in those figures can be attributed to the fact that auction mechanisms and the optimal case have different objectives --- while the optimal seeks a revenue maximizing solution, the three auction mechanisms assign channels to the bidders with the highest valuations. This is also a key reason behind the reduced amount of revenue generated by \texttt{VERUM}, SATYA and VERITAS to different degrees compared to the optimal (non-zero percentage reduction in revenue in Figs.~\ref{revenue-nodes} (c) and (d)). The optimal case to maximize revenue may allocate channels to HWSNs with lower valuations if they allow for more winning bidders. On the other hand, auction mechanisms considered assign
channels to HWSNs who value it the most regardless of the consequence on revenue.

Comparing the three mechanisms in Figs.~\ref{revenue-nodes} (c) and (d), we see that VERITAS causes the most drop in revenue with respect to the optimal by as much as 35\%. This is because it does not support channel sharing. SATYA relatively fares better primarily due to its support for channel sharing. Even SATYA too results in close to 30\% reduction in revenue in some cases because the bucketing and ironing techniques it employs limit channel sharing opportunities and hence the revenue. Specifically, bidders in SATYA are not allowed to share the channel with a neighbor placed in a higher bucket. As \texttt{VERUM} does not impose such constraints to ensure truthfulness, it offers the best relative performance in all cases, mostly within around 10\% of the optimal and around 20\% revenue drop in the worst case. The harmful effect of constrained channel sharing support in SATYA is more noticeable in the dense-urban scenario (Fig.~\ref{revenue-nodes} (d)), where effective channel sharing is crucial to higher
revenue and number of winning bidders; in this scenario SATYA is more closer to VERITAS than \texttt{VERUM}.

% The effect of bucketing and ironing in SATYA can also be observed in Fig. \ref{revenue_demand} and Fig. \ref{revenue_demand_hd}.
% This problem is worsened by the lack of a standard bucketing function. The number of channel sharing opportunities lost would directly depend on the bucketing function and the size of the buckets.

Similar observations as above can be made even with diverse and varying demands. Fig.~\ref{revenue-demand} shows percentage reduction in revenue relative to optimal for the three mechanisms across a wide range of average demand settings from 20\%--100\% while keeping the number of active subscribed HWSNs fixed at 2000. Results in Fig.~\ref{revenue-demand} clearly demonstrate the effectiveness of \texttt{VERUM} over SATYA and VERITAS in optimizing revenue for the auctioneer while remaining truthful and efficient.

%%%
% Revenue vs. Interference Distance in urban and dense-urban scenarios
%%%

% \begin{figure*}
%  \begin{center}
%  \begin{tabular}{cc}
%   \subfigure[]
%
%%\\
%%   \subfigure[]
%%{
%%	\includegraphics[width=2.4in]{}
%%        	\label{revenue_demand_absolute}
%% }
%% &
%%   \subfigure[]
%%{
%%	\includegraphics[width=2.4in]{}
%%        	\label{revenue_demand_absolute_hd}
%% }
% \end{tabular}
%\end{center}
% \caption{Revenue for various interference ranges.}
%  \label{revenue-interference}
%\end{figure*}

Fig.~\ref{revenue-demand} (c) and (d) demonstrates similar results as above for different interference ranges while keeping HWSNs fixed at 2000 and average demand at 60\%. In terms of reduction in revenue with respect to the optimal, \texttt{VERUM} offers more stable result across the two scenarios and different interference ranges. As increasing the interference range has a similar effect to increasing the density, the explanation given for comparison between the results in Figs.~\ref{revenue-nodes} (c) and (d) (and Figs.~\ref{revenue-demand} (a) and (b)) applies till a certain interference range, to a further extent for the relatively sparser urban scenario. But for very high interference ranges, conflict graph becomes fully connected severely limiting spectrum reuse and consequently the revenue for all three mechanisms.

\subsection{Spectrum Utilization and Percentage of Winners}

We now look at the relative performance of \texttt{VERUM}, SATYA and VERITAS in terms of the other two metrics: spectrum utilization and percentage of winners. Fig.~\ref{util-winners-demand} shows the results. It can be observed that spectrum utilization is higher in the dense-urban scenario as a result of greater amount of channel sharing among HWSNs. Percentage of winners is seen to be lower in the dense-urban scenario. As noted already in section 5.2, with higher number of potential interferers channel reuse opportunities reduce, which in turn reduces the number of winners despite channel sharing.
\begin{figure}[h!]
  \begin{center}
  \begin{tabular}{c}
   \subfigure[]
{
	\includegraphics[width=2.4in]{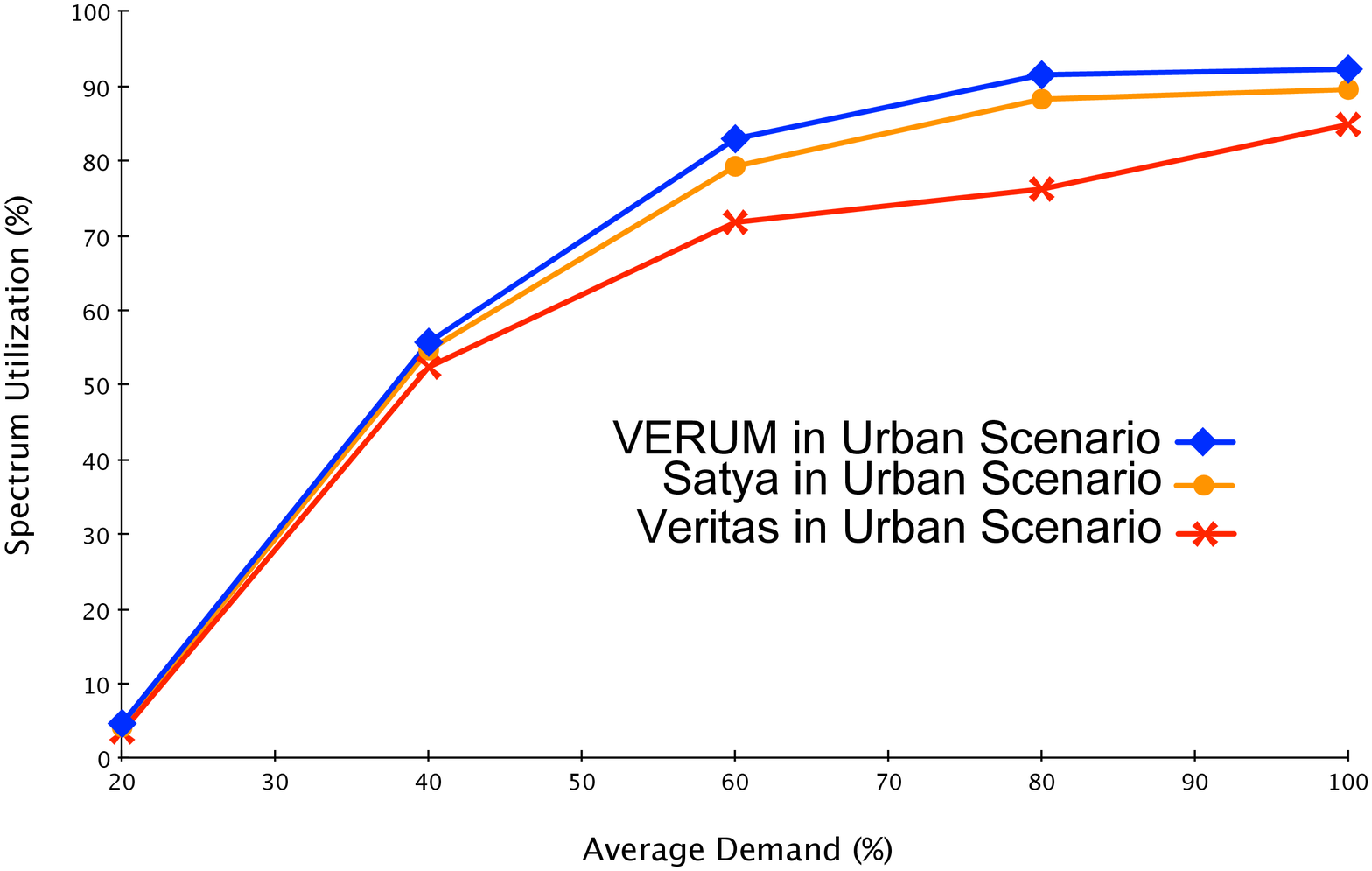}
 }
 \\
 
   \subfigure[]
{
	\includegraphics[width=2.4in]{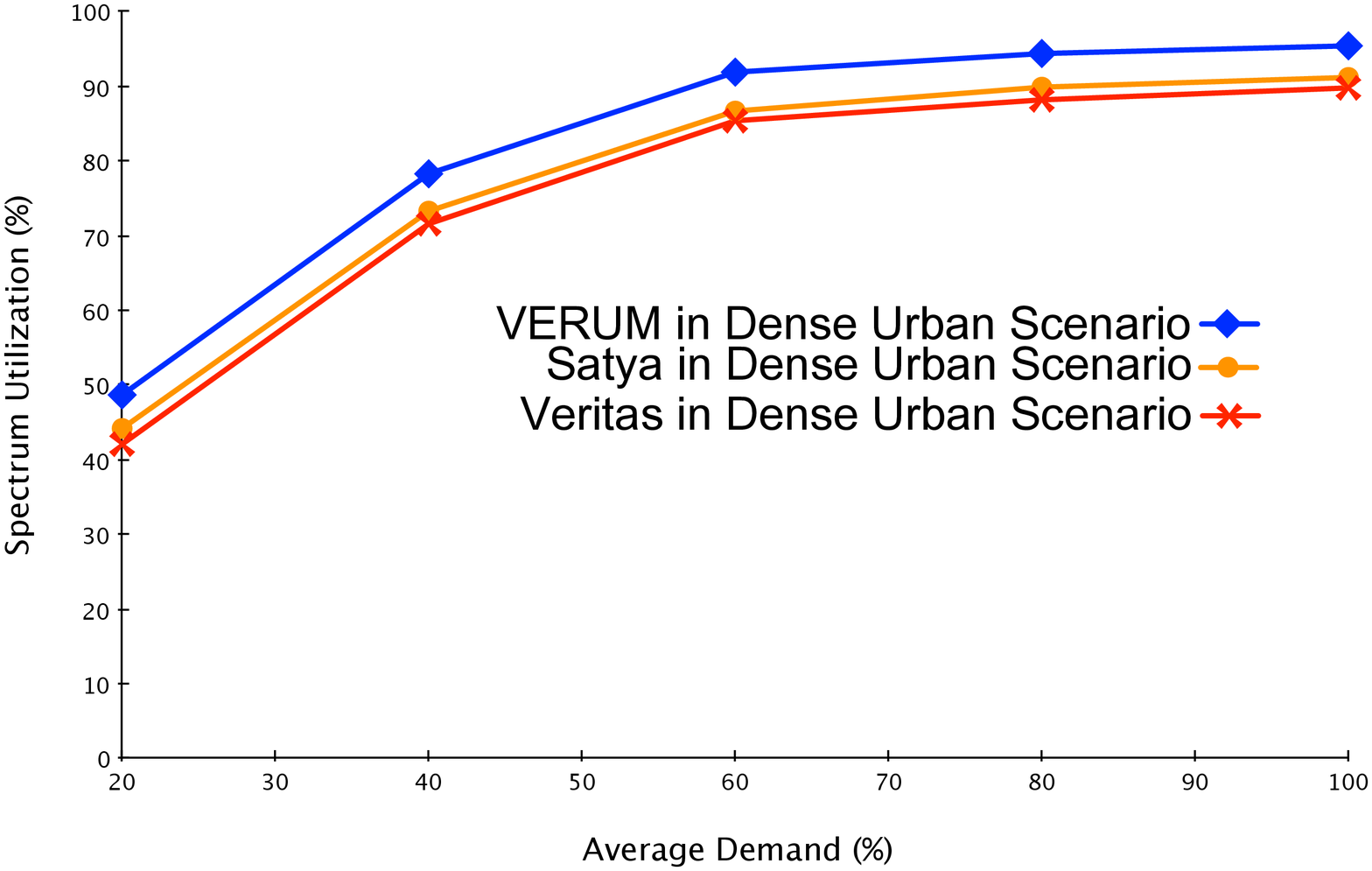}
 }
 \\
 \subfigure[]
{
	\includegraphics[width=2.4in]{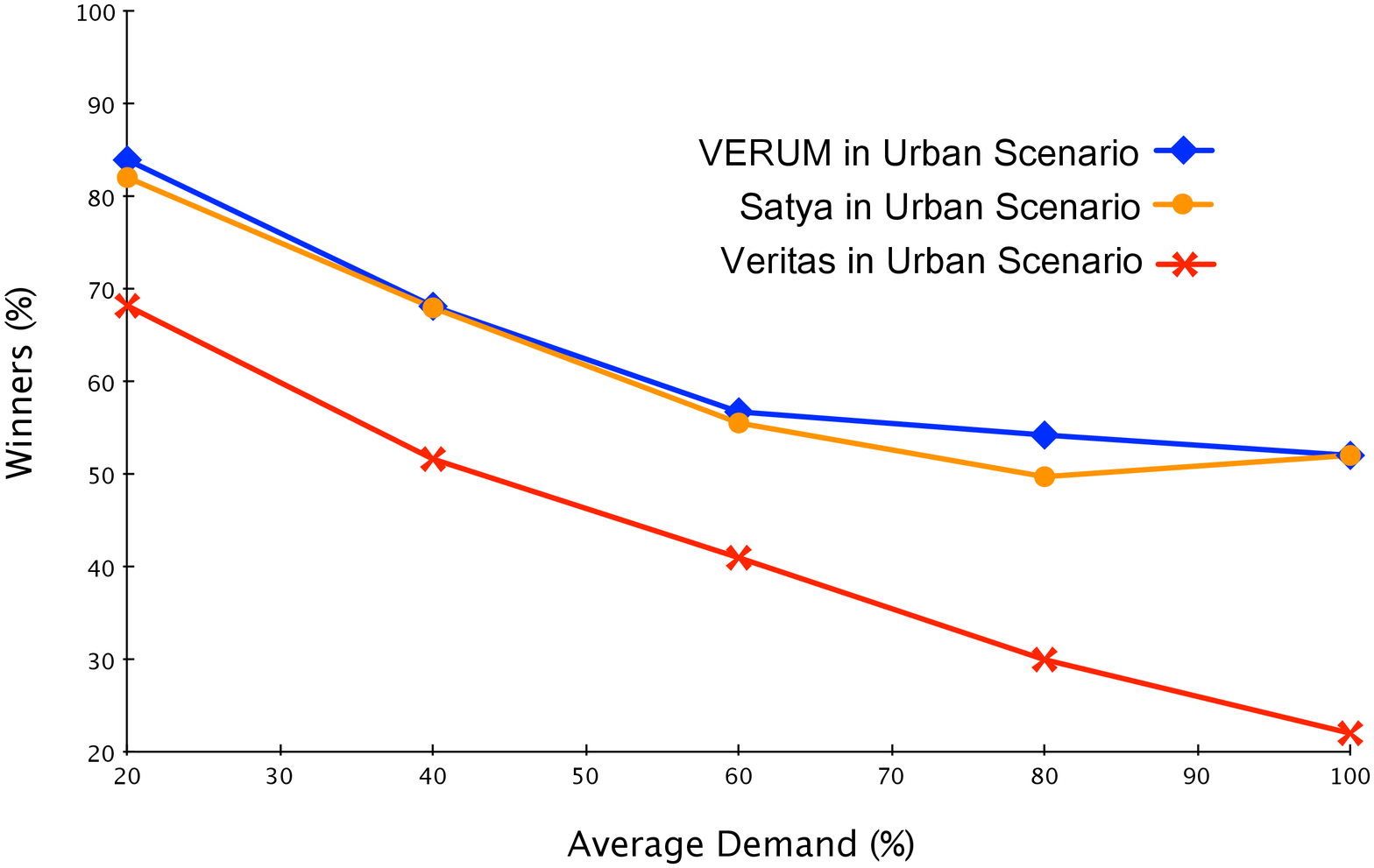}
 }
\\
   \subfigure[]
{
	\includegraphics[width=2.4in]{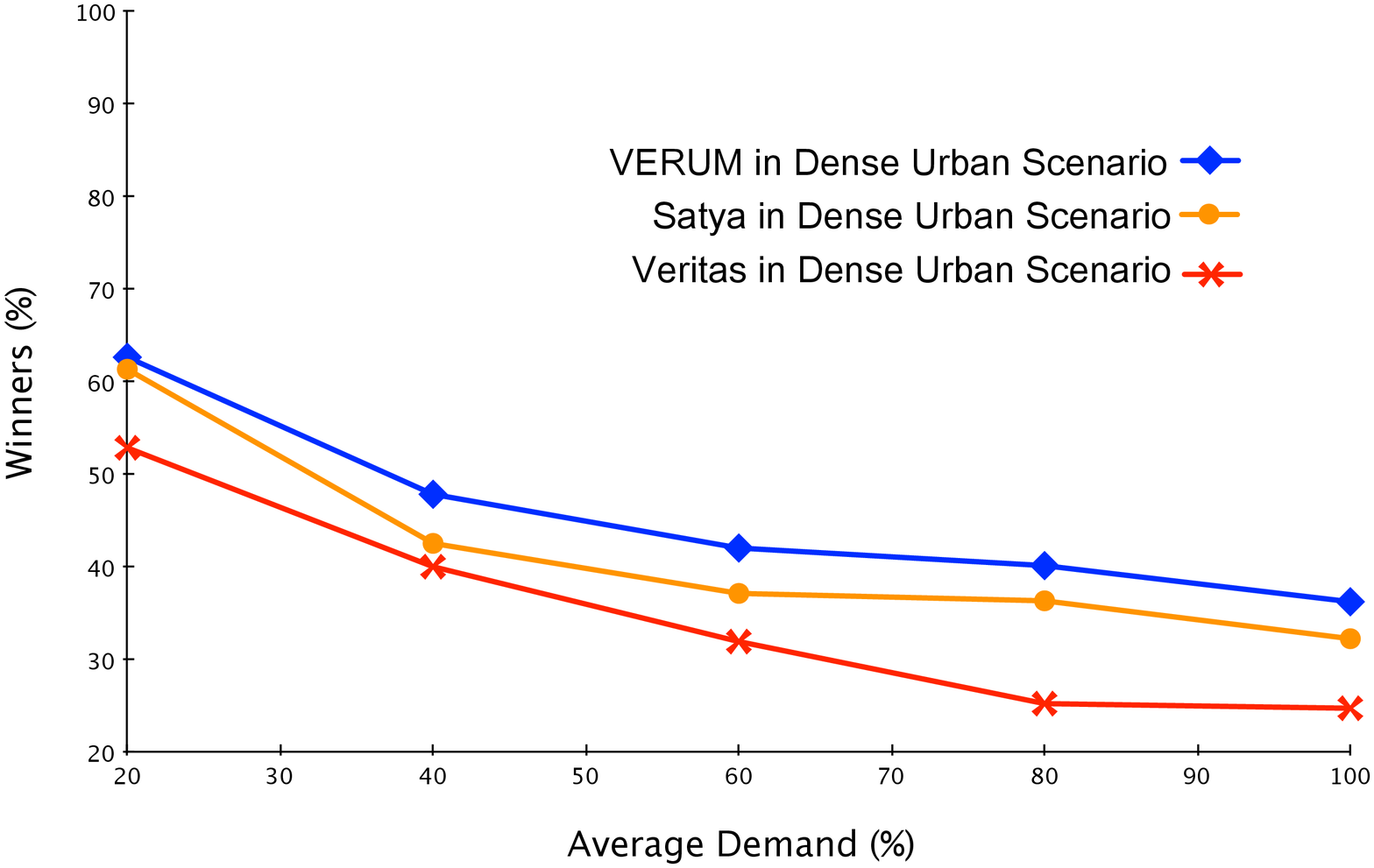}
 }
 \end{tabular}
\end{center}
 \caption{Spectrum utilization and percentage of winning bidders with varying demand.}
  \label{util-winners-demand}
\end{figure}
We can observe from Fig.~\ref{util-winners-demand} that \texttt{VERUM} does comparatively better both in terms of spectrum utilization and percentage of winners. VERITAS performs worse in all cases as it lacks support for channel sharing. SATYA gains in comparison with VERITAS as it allows channel sharing. But the bucketing approach underlying SATYA makes it lose out on some channel sharing opportunities, explaining the performance gap between SATYA and \texttt{VERUM}. This is more clearly seen in the dense-urban scenario where effective channel sharing is more crucial. In that scenario, SATYA's spectrum utilization is almost like that with VERITAS, and its differences with VERUM for both the metrics are also more appreciable.

\subsection{Impact of VERUM Parameters}

In this section, we look at the sensitivity of \texttt{VERUM} to two of its underlying parameter settings: reserve price and step size. To study the impact of these two parameters we focus on the case with 2000 HWSNs and 60\% average demand.

\subsubsection{Reserve Price}

%Instead of setting the price of the first round to 1, we start the auction at a reserve price $p>1$. The reserve price could be used to drop bidders with valuations lower than the reserve price. While the reserve price can improve the revenue by charging the bidder a minimum price it results in lower spectrum utilization due to fewer bidders participating in the auction.

%Instead of setting the price of the first round to 1, we start the auction at a reserve price $p>1$. The reserve price could be used to drop bidders with valuations lower than the reserve price. While the reserve price can improve the revenue by charging the bidders a minimum price, it results in lower spectrum utilisation due to some channels not being allocated even when there is demand. For example if a HWSN wins a channel at price $p$ when the reserve price is less than $p$, that channel would remain unallocated if the reserve price is greater than $p$, resulting in loss of revenue and spectrum utilisation.

Reserve price in \texttt{VERUM}  (the price announced by the auctioneer in the first round) could be 1 or start at a higher level $p>1$. A higher reserve price can improve the revenue as it is the minimum price that bidders need to pay to get access to a channel. But increasing the reserve price can also result in lower spectrum utilisation and in turn loss in revenue due to some channels remaining unallocated even when there is a real demand. The latter can occur because a high reserve price effectively throttles bidders' demand by allowing them to bid only for the number of channels in their demand for which they have a valuation greater than the reserve price.
 \begin{figure}[h!]
  \begin{center}
  \begin{tabular}{c}
   \subfigure[]
{
	\includegraphics[width=2.4in]{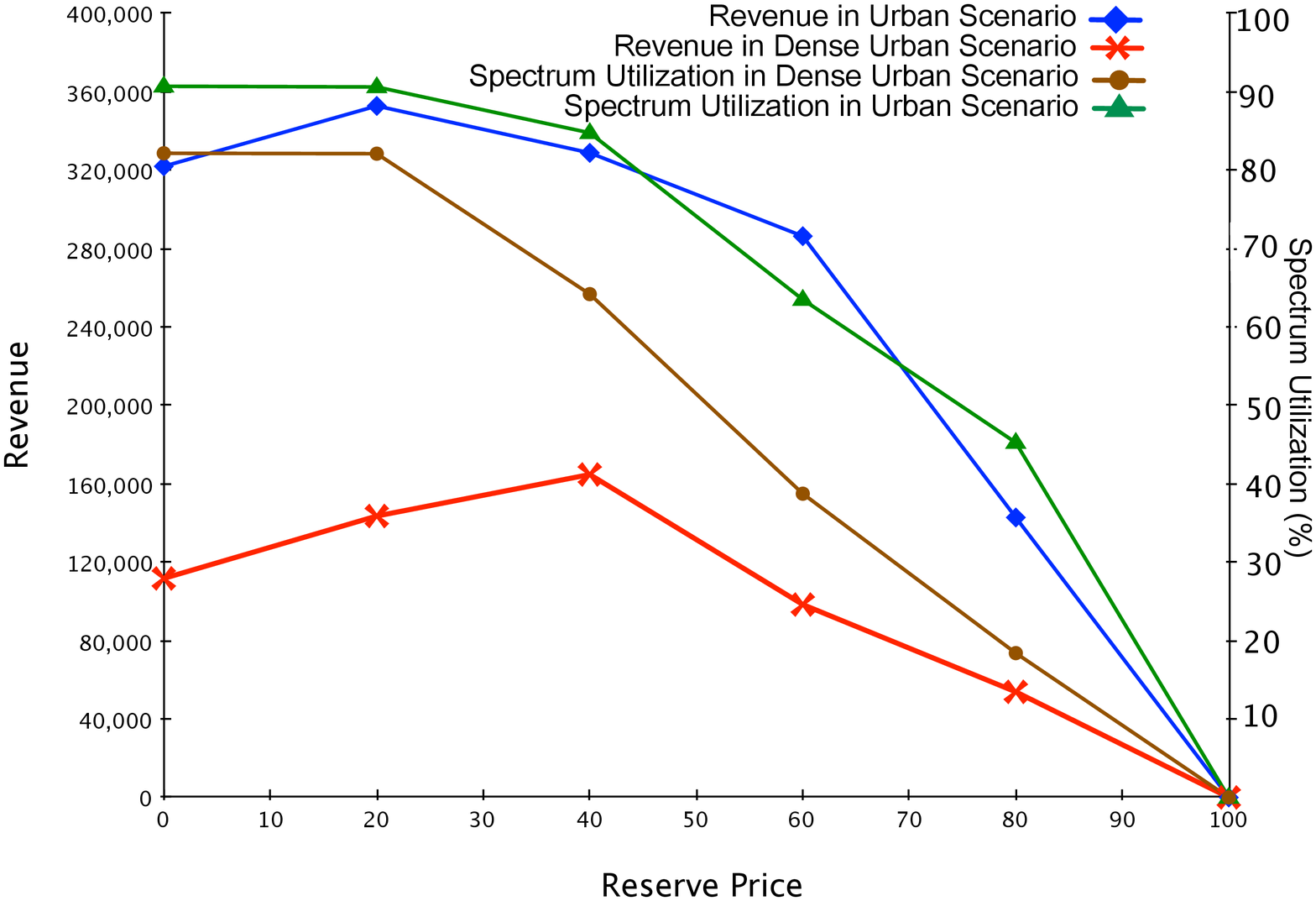}
 }
\\
   \subfigure[]
{
	\includegraphics[width=2.4in]{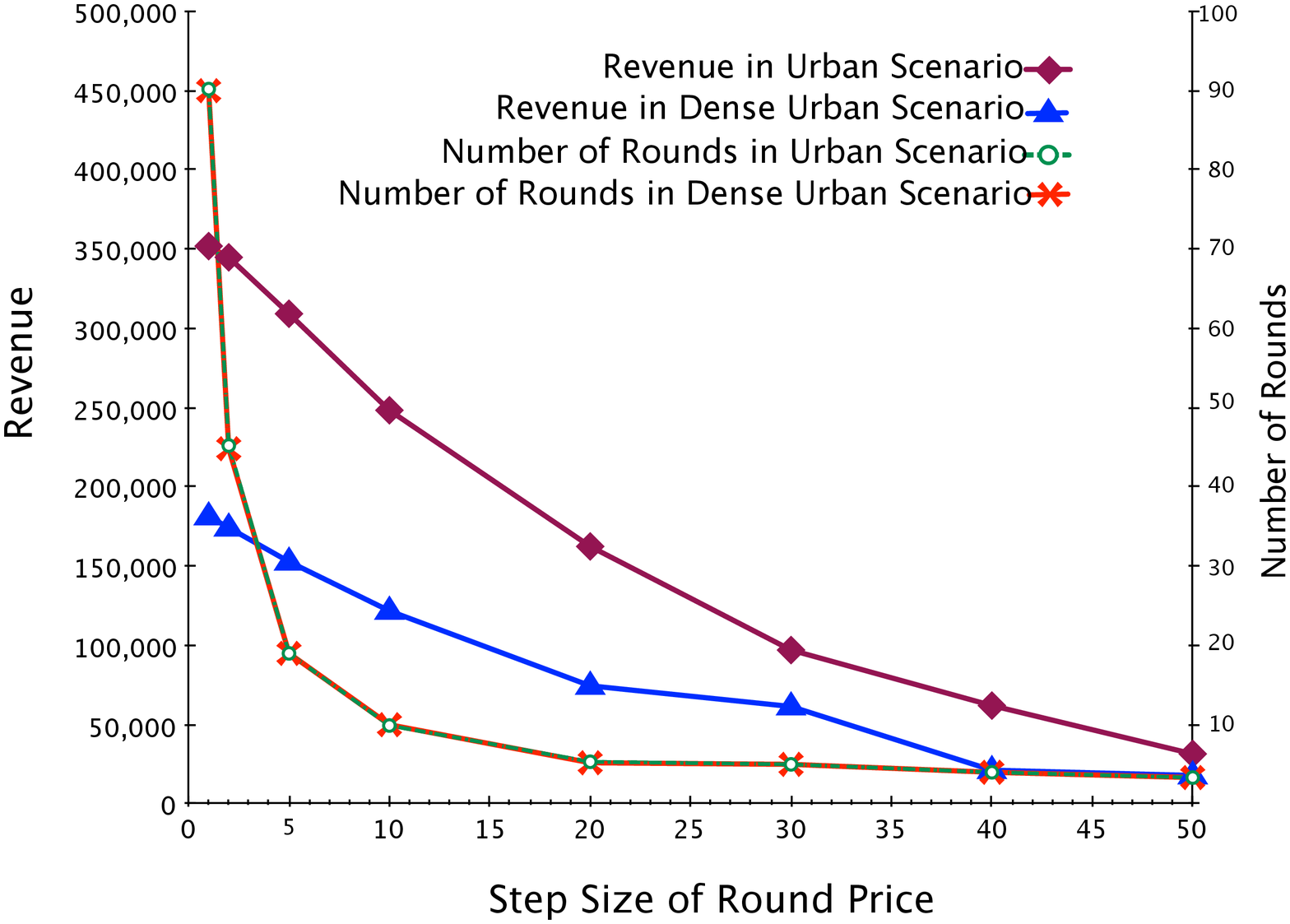}
 }
 \end{tabular}
\end{center}
 \caption{Impact of reserve price (a) and step size (b) on VERUM.}
  \label{reserve-step}
\end{figure}
The interaction between revenue and spectrum utilization due to reserve price is shown in Fig.~\ref{reserve-step} (a). The maximum spectrum utilization is achieved when the reserve price is $0$. The reduction in spectrum utilization at a higher reserve price represents the percentage of channels that remain unallocated even when there is demand. Unlike spectrum utilization, revenue increases with reserve price until it is $20$ for the urban scenario ($40$ for the dense-urban) as the effect of the reserve price being the minimum price paid for a channel is higher than the revenue lost due to unallocated channels. At higher reserve prices, the revenue lost due to unallocated channels is much higher than any gain from higher reserve prices, resulting in a much lower revenue.

\subsubsection{Step Size}

Step size is another parameter that influences the behavior of \texttt{VERUM}. It is the increment amount by which price is increased in one round to the next. Fig.~\ref{reserve-step} (b) shows the impact of different step sizes on revenue and duration of auction (in terms of number of rounds to completion). Note that the auction completes when none of the HWSNs have a valuation greater than the current round price; in other words, the demand for all of them effectively goes to zero. Both urban and dense-urban scenarios have identical number of rounds because we use the same marginal valuations for both the scenarios.

As expected, the highest revenue is obtained with the smallest step size. On the other hand, the auction takes most number of rounds to complete. Increasing the step size decreases the revenue because a higher step size drives out demand quickly and get the auction to a point when effectively all HWSNs have zero demand even though there are still unallocated channels, explaining the drop in revenue. Clearly, a higher step size leads to auction completing in fewer rounds. Interesting point to note is about how reduction in revenue with increasing step size relates to the reduction in number of rounds. It can be seen from Fig.~\ref{reserve-step} (b) that increasing the step size from 1 to 5 results in about 15\% loss of revenue but reduces the number of rounds by 80\%, suggesting a value for step size that keeps the duration and overhead of auction minimal without hurting the revenue much.

\subsection{Heterogeneous Spectrum Availability}

Spatio-temporal heterogeneity in spectrum availability is inherent to the TVWS spectrum, even more so when TVWS devices rely on different spectrum access models (e.g., coordinated, uncoordinated) or when TVWS devices are divided amongst multiple HWSN spectrum managers. To understand how different auction mechanisms behave in such a situation, we model heterogeneous spectrum availability as follows. Given a percentage of third party WSNs (40\% by default in our evaluations), we randomly select an equivalent number of homes in the scenario considered (urban or dense-urban) and label them as third party WSNs. These third party WSNs are assumed to consume a certain specified fraction of the available channels at their respective locations (0.7 by default in our evaluations). The combined effect of the above is that different active and subscribed HWSNs would see different sets of channels available, thus realizing heterogeneous spectrum availability situations.

 \begin{figure}[h]
  \begin{center}
  \begin{tabular}{c}
   \subfigure[]
{
	\includegraphics[width=2.in]{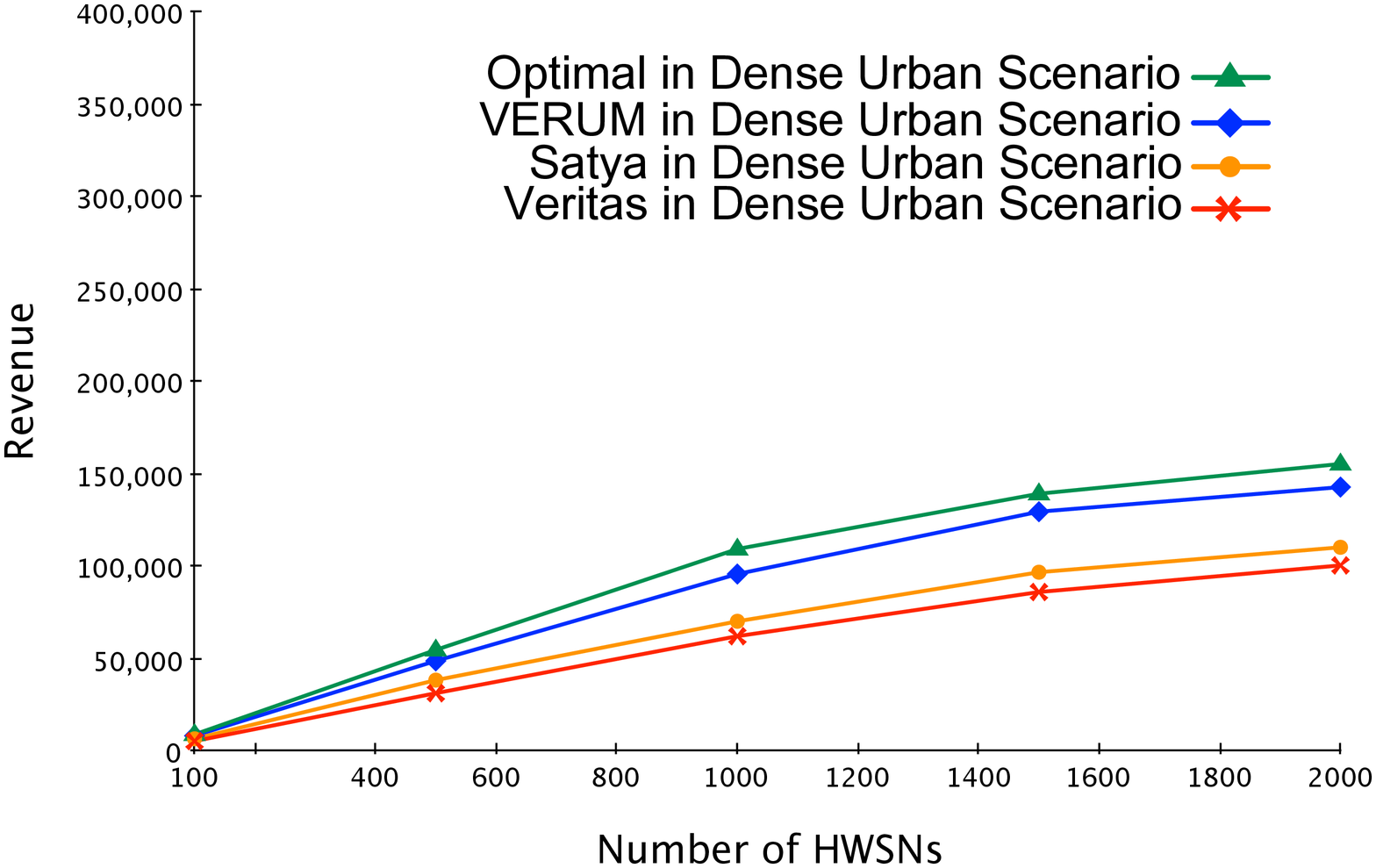}
 }
 \\
   \subfigure[]
{
	\includegraphics[width=2in]{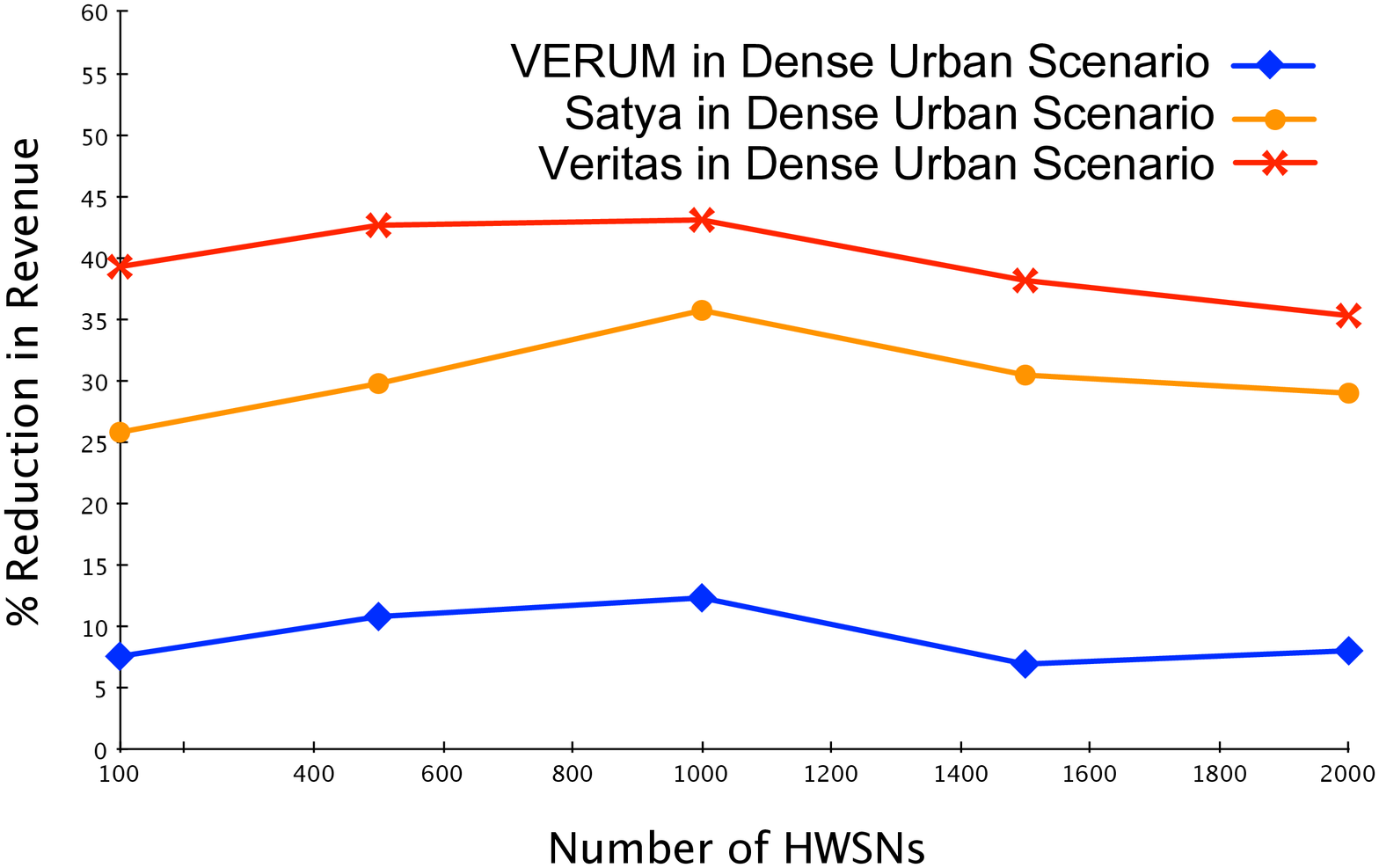}
 }
\\
    \subfigure[]
{
	\includegraphics[width=2in]{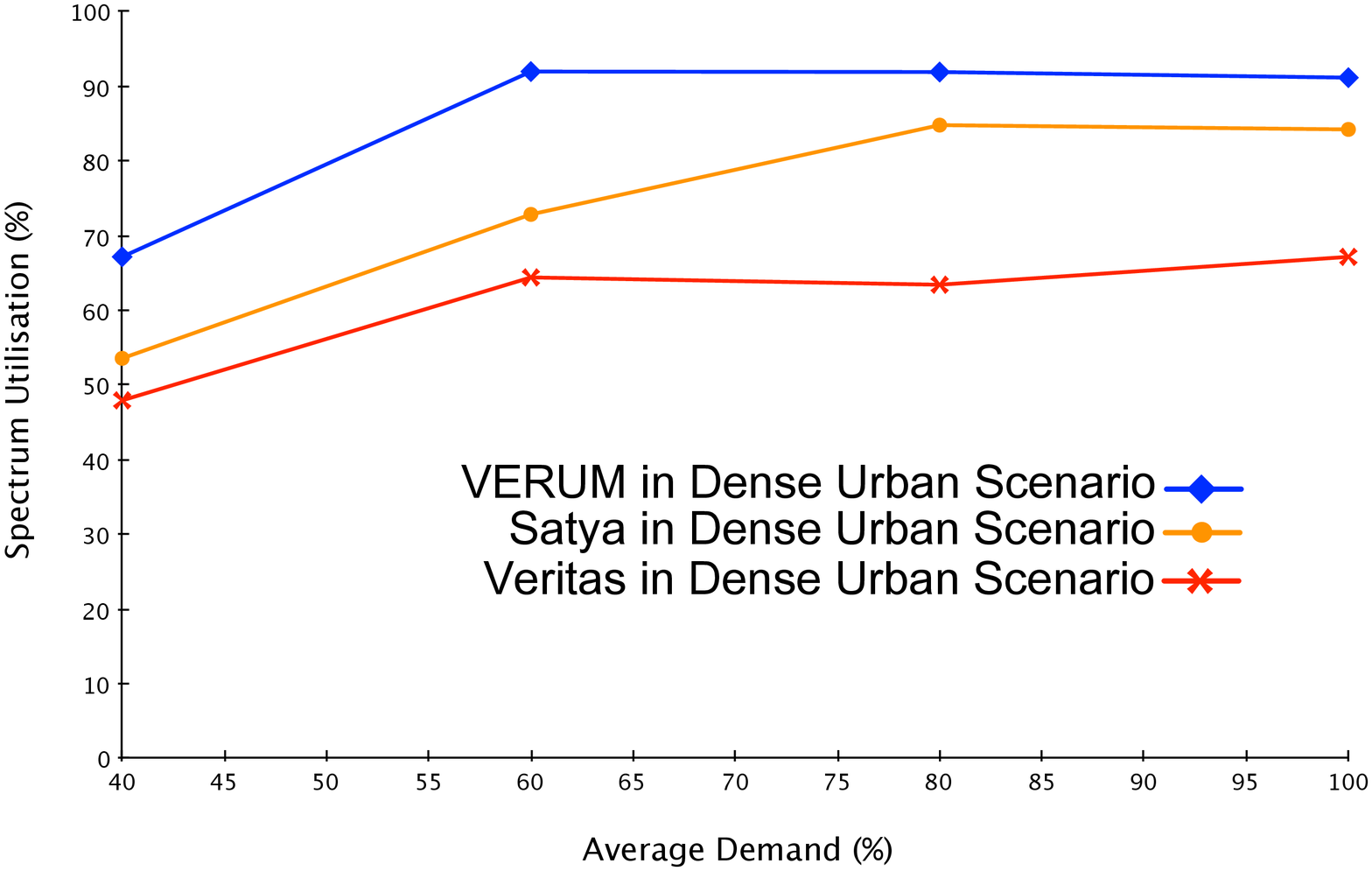}
 }
 \end{tabular}
\end{center}
 \caption{Revenue and spectrum utilization with varying number of HWSNs in the dense-urban scenario and 40\% third party WSNs.}
  \label{third-revenue}
\end{figure}

Fig.~\ref{third-revenue} (a) shows the absolute revenue in virtual currency units for different auction mechanisms including the optimal in the dense-urban scenario with varying number of active/subscribed HWSNs and 40\% third party WSNs. Fig.~\ref{third-revenue} (b) shows the percentage reduction in revenue for \texttt{VERUM}, SATYA and VERITAS with respect to the optimal. We see that the revenue with \texttt{VERUM} is fairly close to the optimal, within 10\% of the optimal in most cases. On the other hand, SATYA's performance is only slightly better than VERITAS but much worse than \texttt{VERUM}. Heterogeneous spectrum availability necessitates a higher dependance of channel sharing opportunities on the dividing HWSNs into buckets and ordering of buckets in SATYA. The ordering schemes used in SATYA are too simple to exploit all the available channel sharing opportunities. For the same reason, \texttt{VERUM} outperforms SATYA and VERITAS in terms of spectrum utilization too as seen from Fig.~\ref{third-revenue}(c). For instance, at 40\% demand \texttt{VERUM} improves spectrum utilization by around 30\% compared to SATYA and even more relative to VERITAS.

 \begin{figure}[h!]
  \begin{center}
  \begin{tabular}{c}
   \subfigure[]
{
	\includegraphics[width=2in]{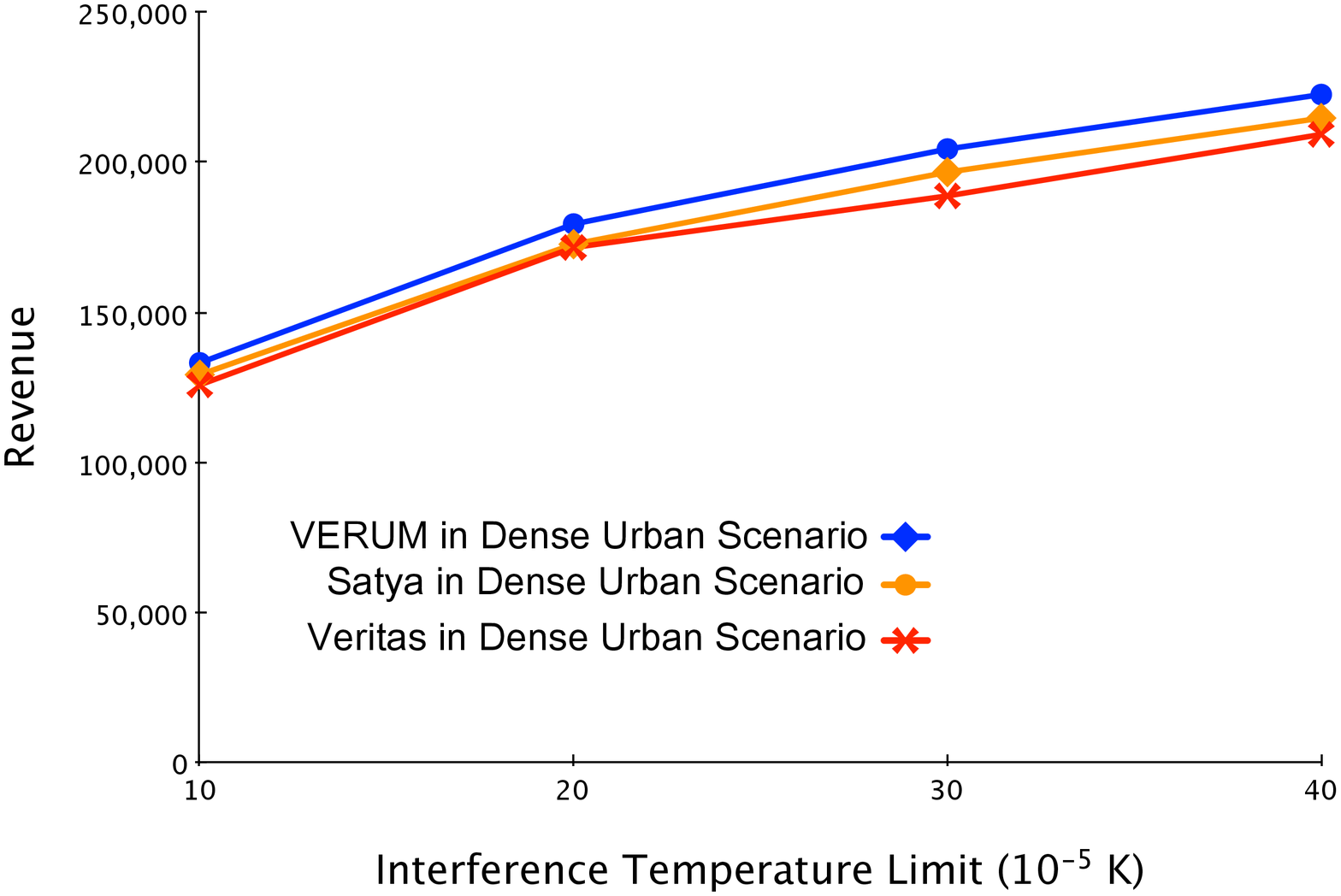}
 }
\\
   \subfigure[]
{
	\includegraphics[width=2in]{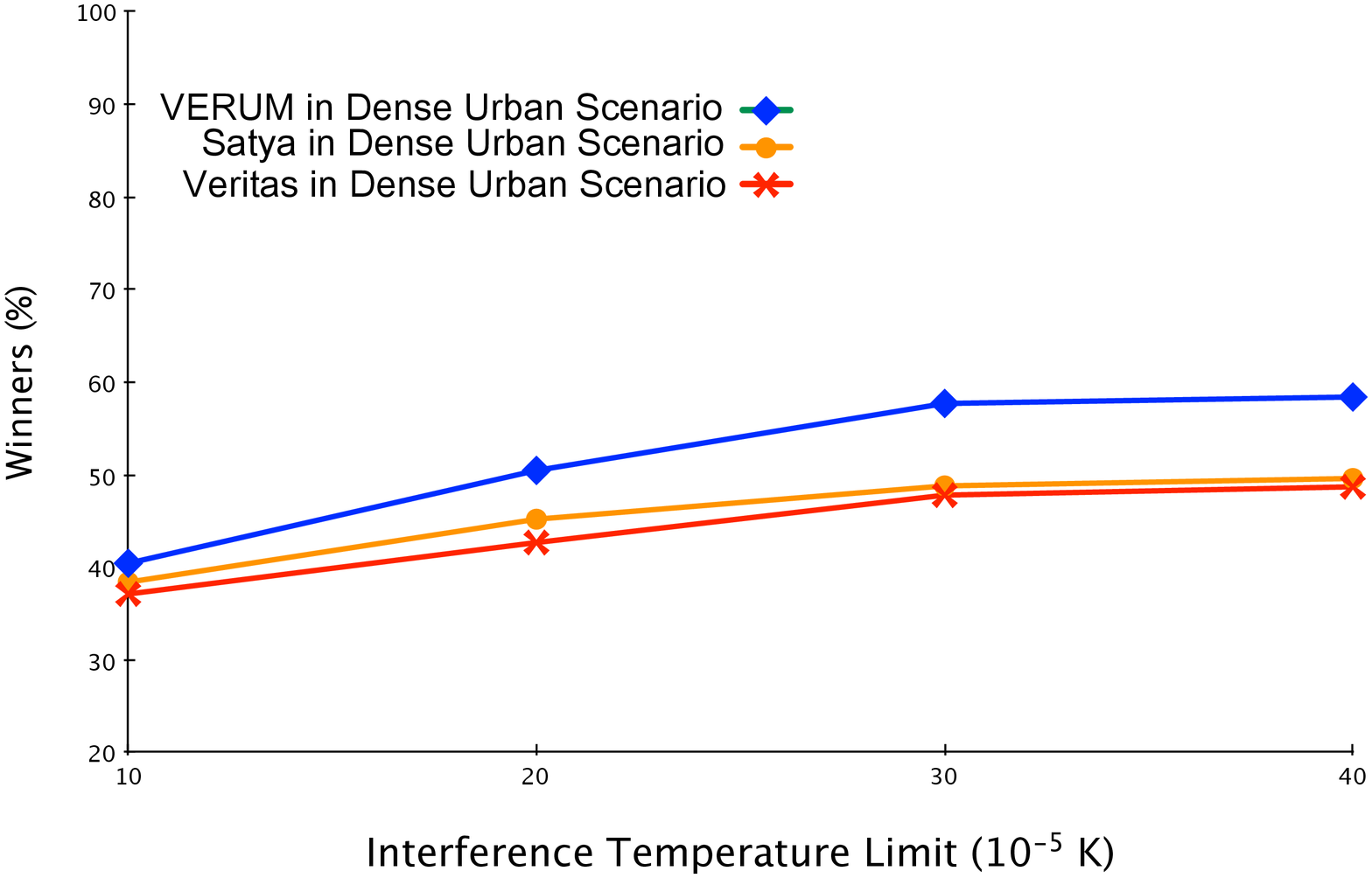}
 }
 \\
 \subfigure[]
{
	\includegraphics[width=2in]{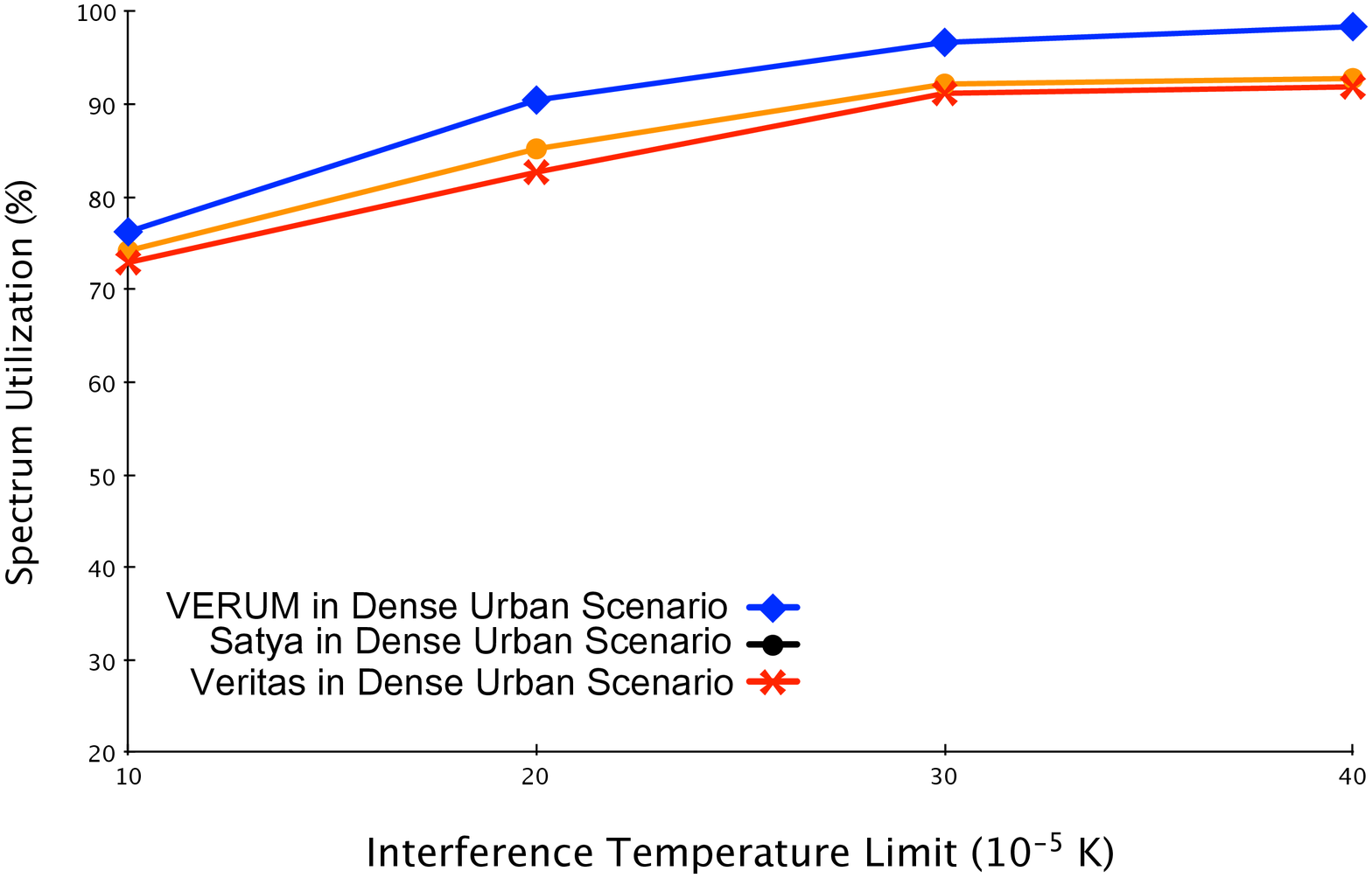}
 }
 \\
 \end{tabular}
\end{center}
 \caption{Varying Interference Temperature Limit}
  \label{int-temp-vary}
\end{figure}

\subsection{Additional Evaluation Results}
\label{additionalResults}
 \begin{figure*}[th]
  \begin{center}
  \begin{tabular}{cccc}
   \subfigure[]
{
	\includegraphics[width=1.5in]{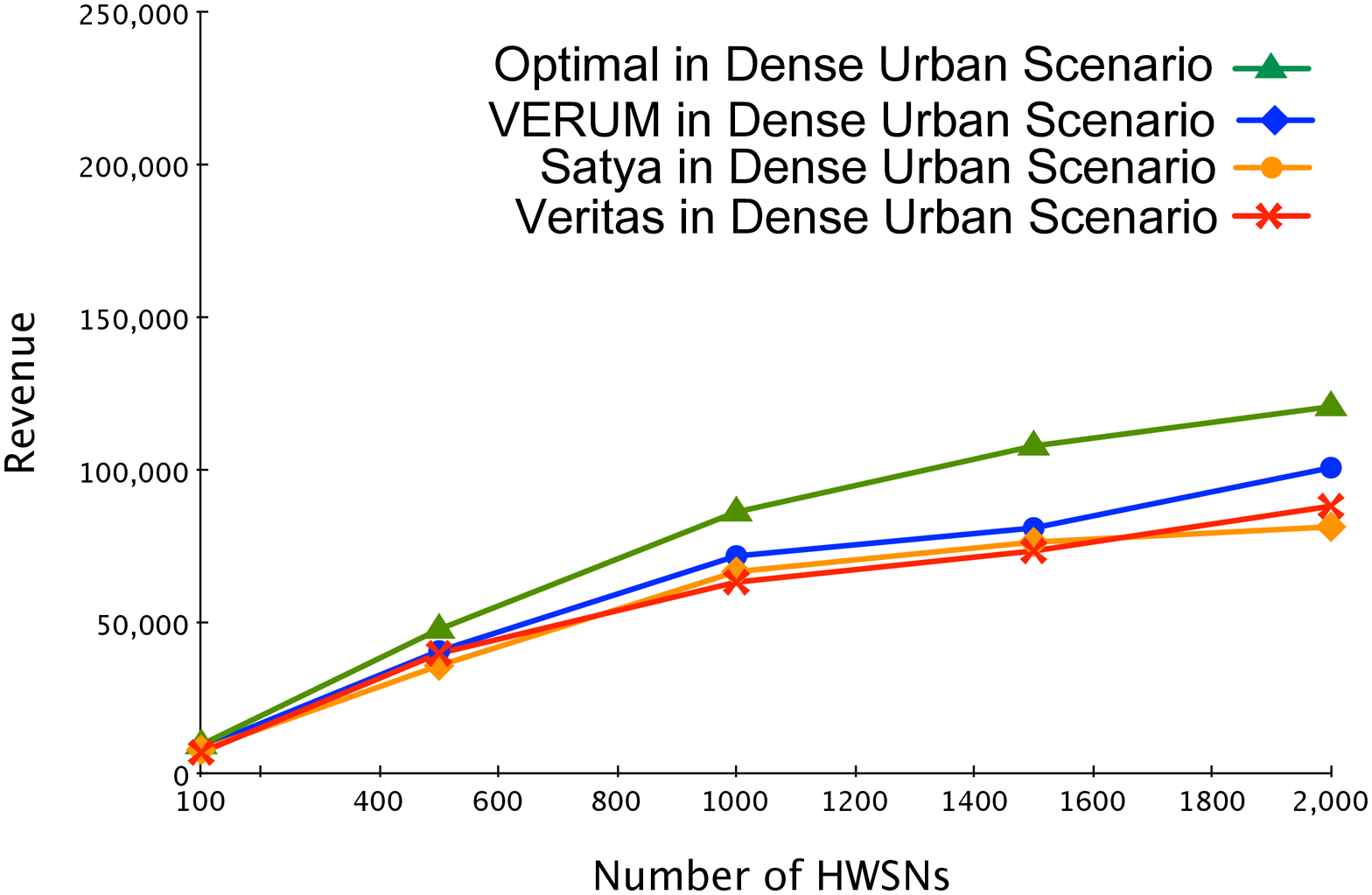}
 }
 &
   \subfigure[]
{
	\includegraphics[width=1.5in]{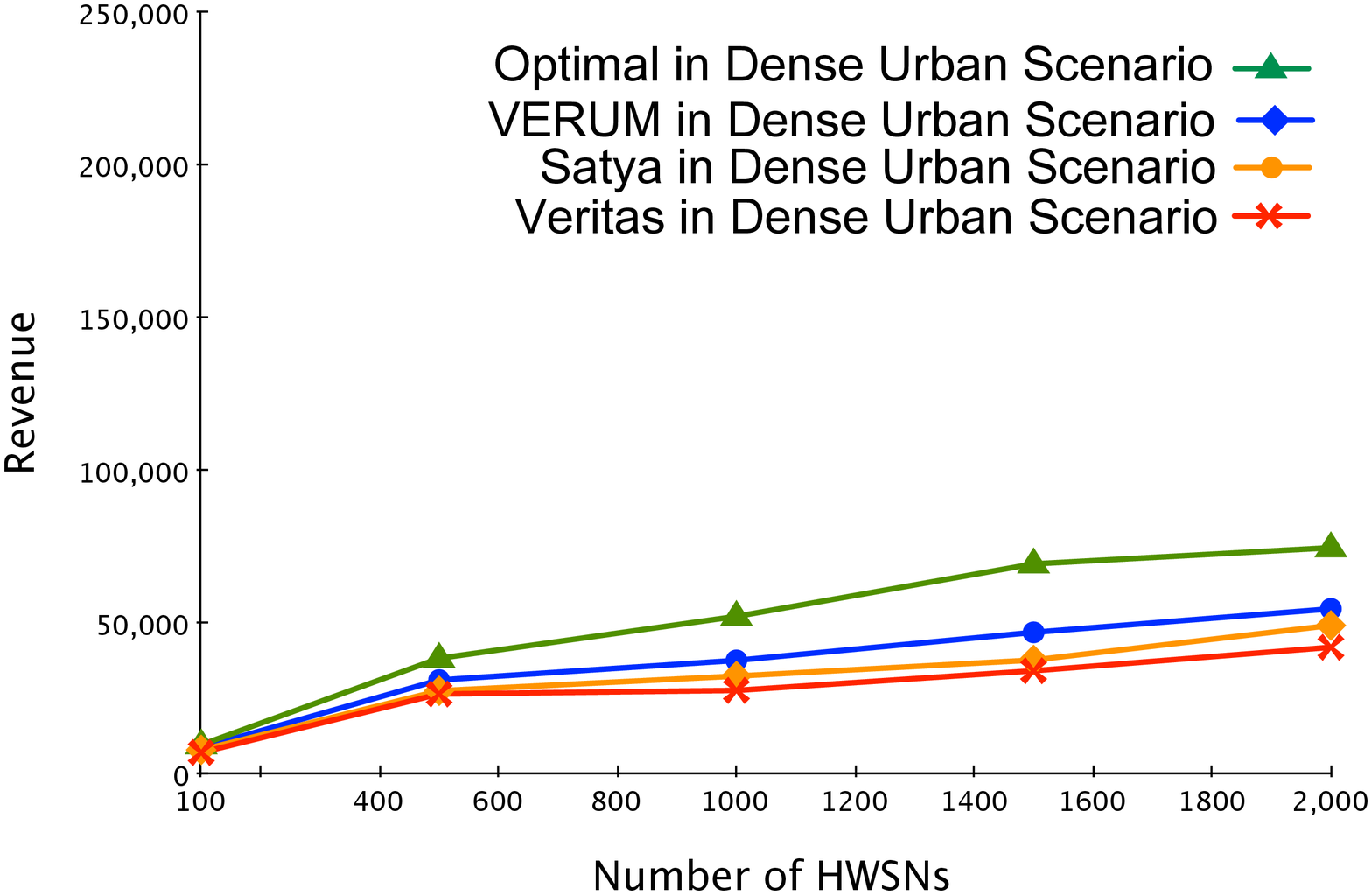}
 }
&
 \subfigure[]
{
	\includegraphics[width=1.5in]{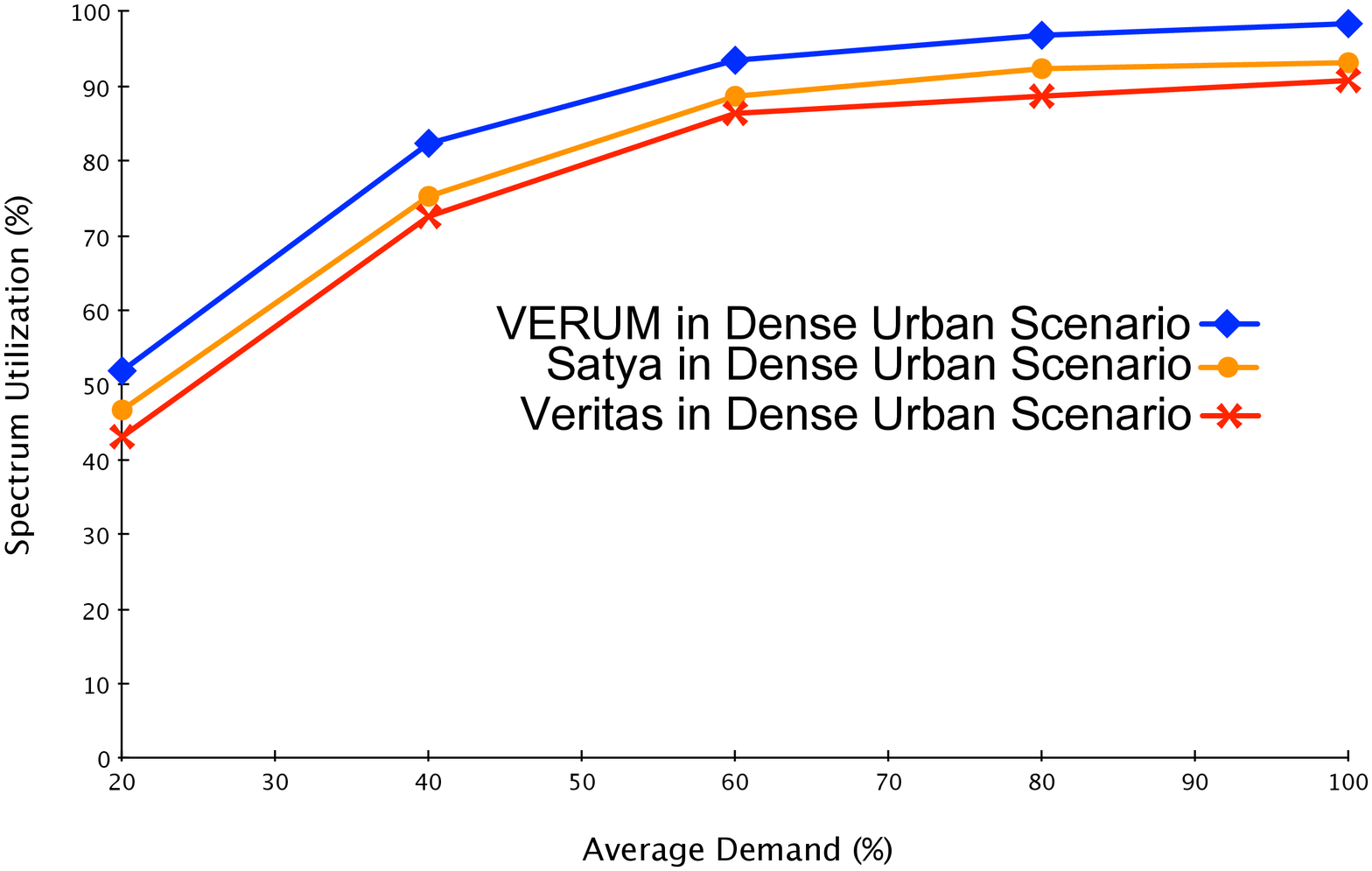}
 }
&
   \subfigure[]
{
	\includegraphics[width=1.5in]{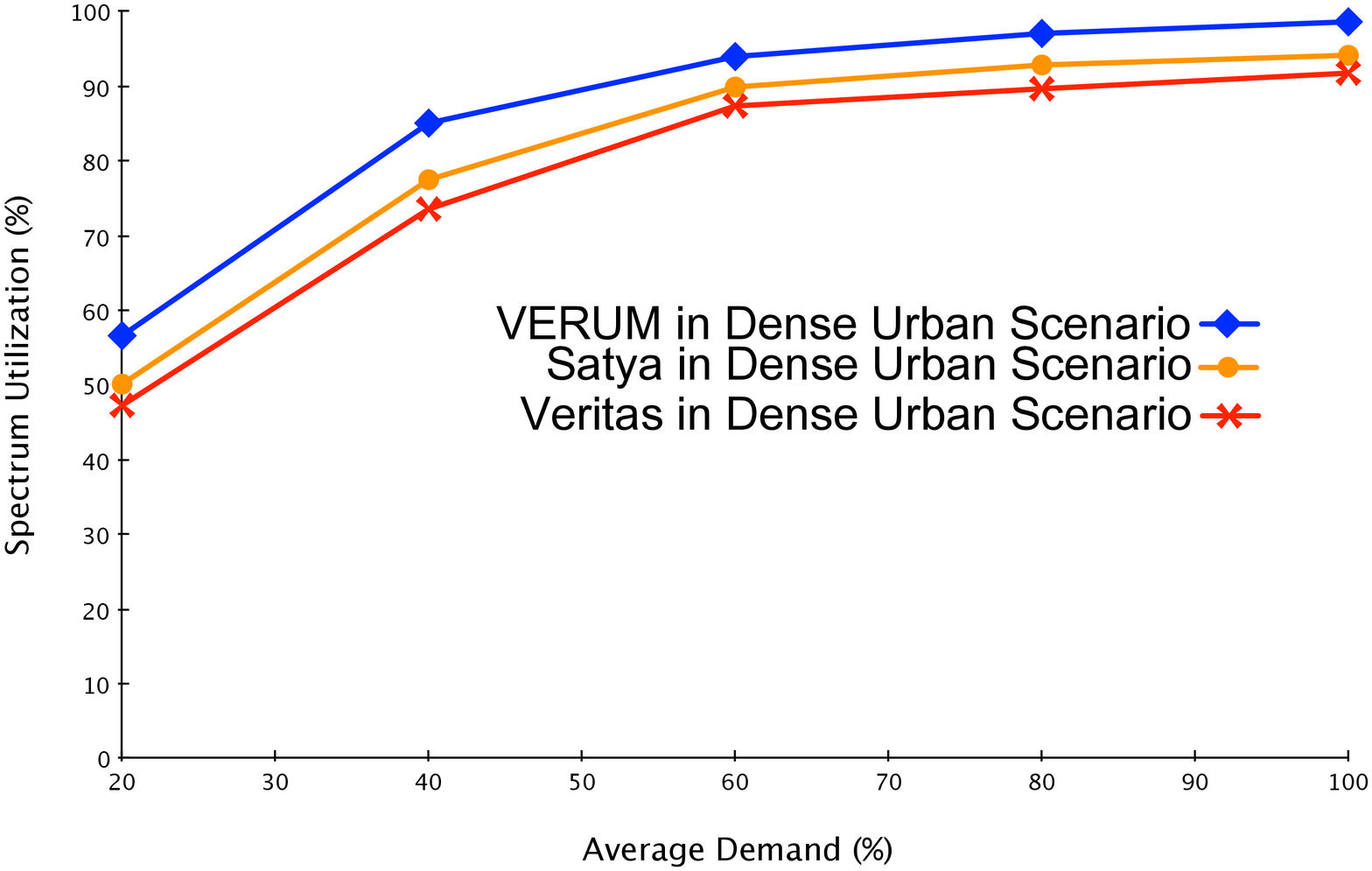}
 }
 \\
 \subfigure[]
{
	\includegraphics[width=1.5in]{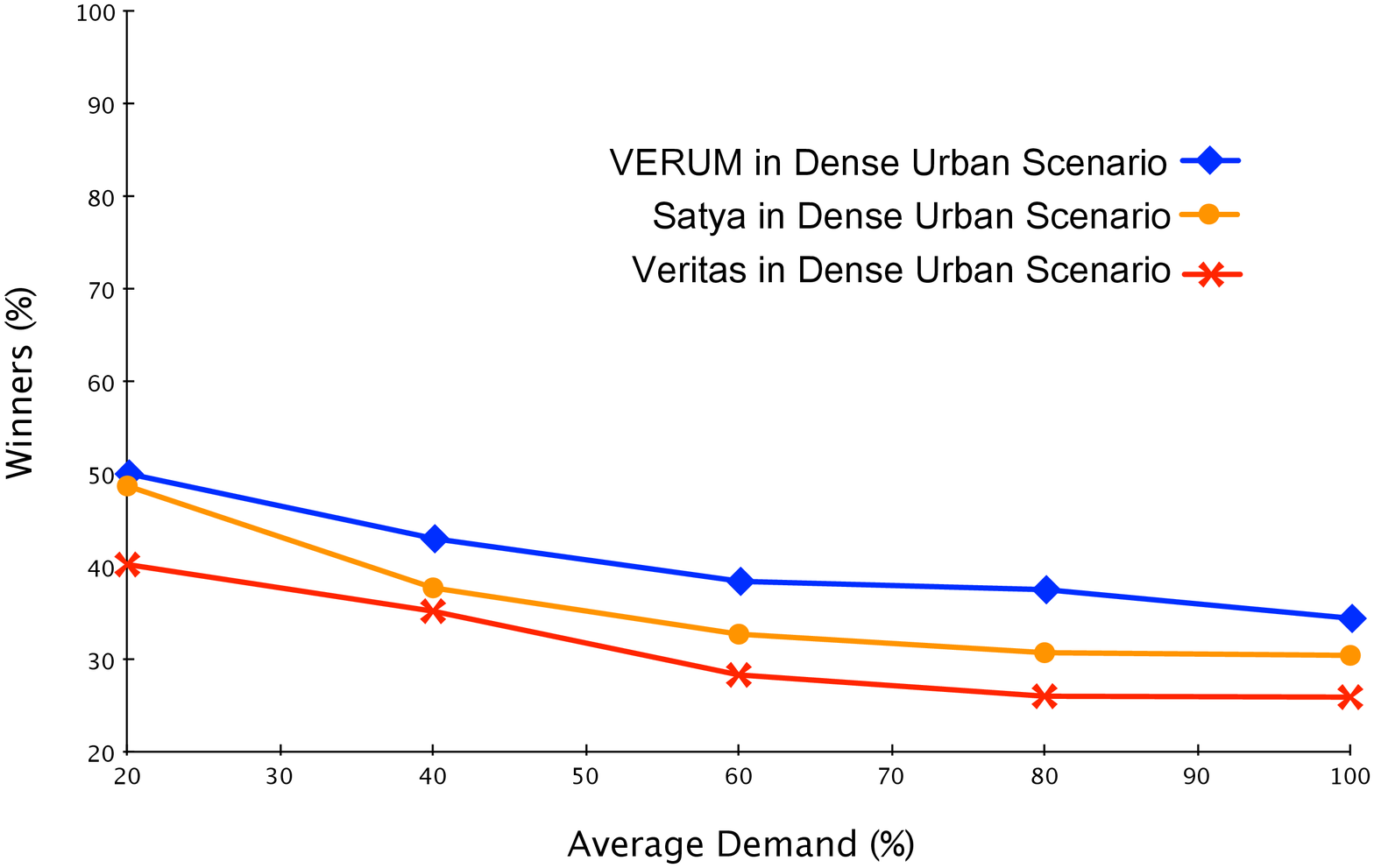}
 }
 &
   \subfigure[]
{
	\includegraphics[width=1.5in]{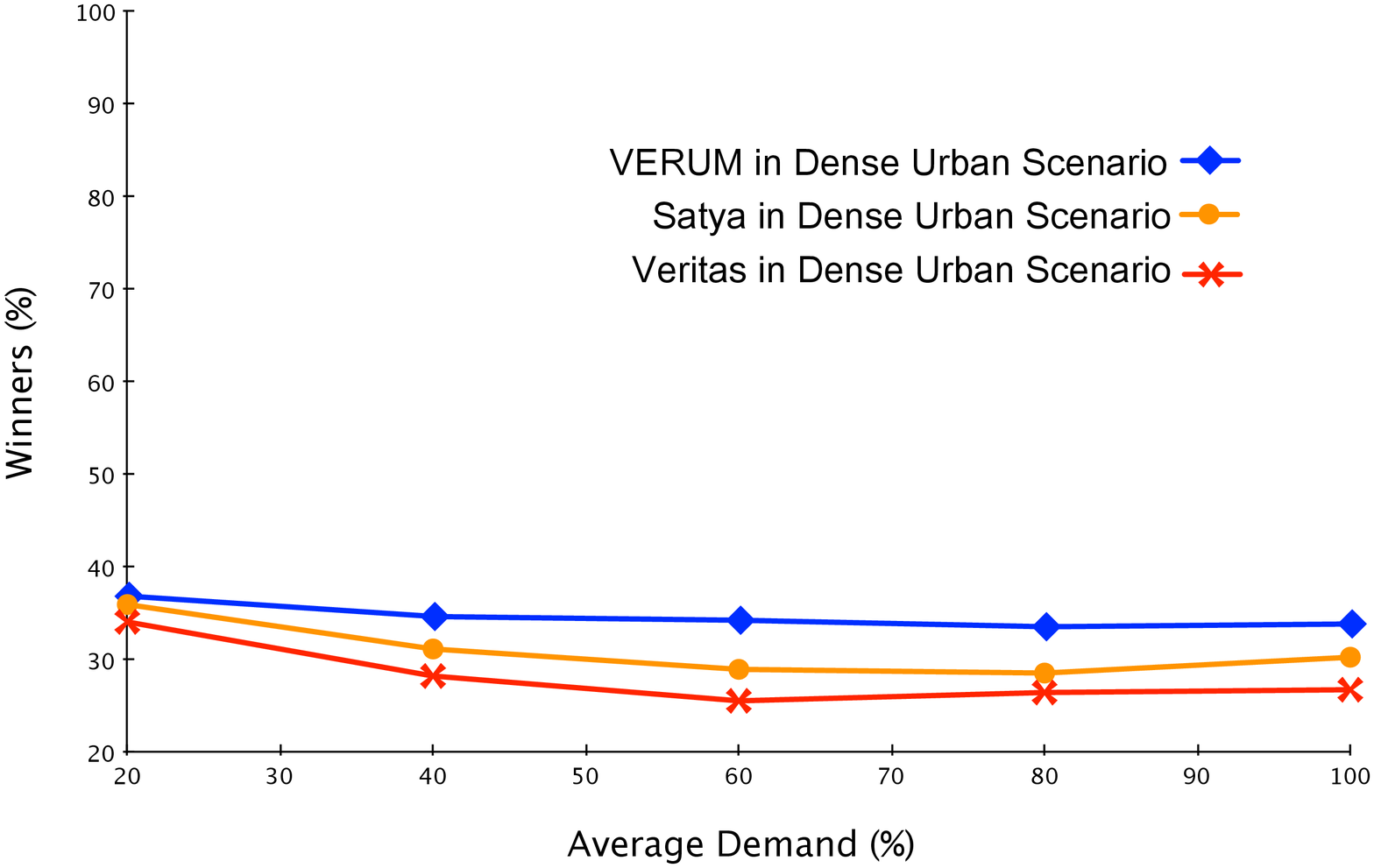}
 }
  &
 \subfigure[]
 {
	\includegraphics[width=1.5in]{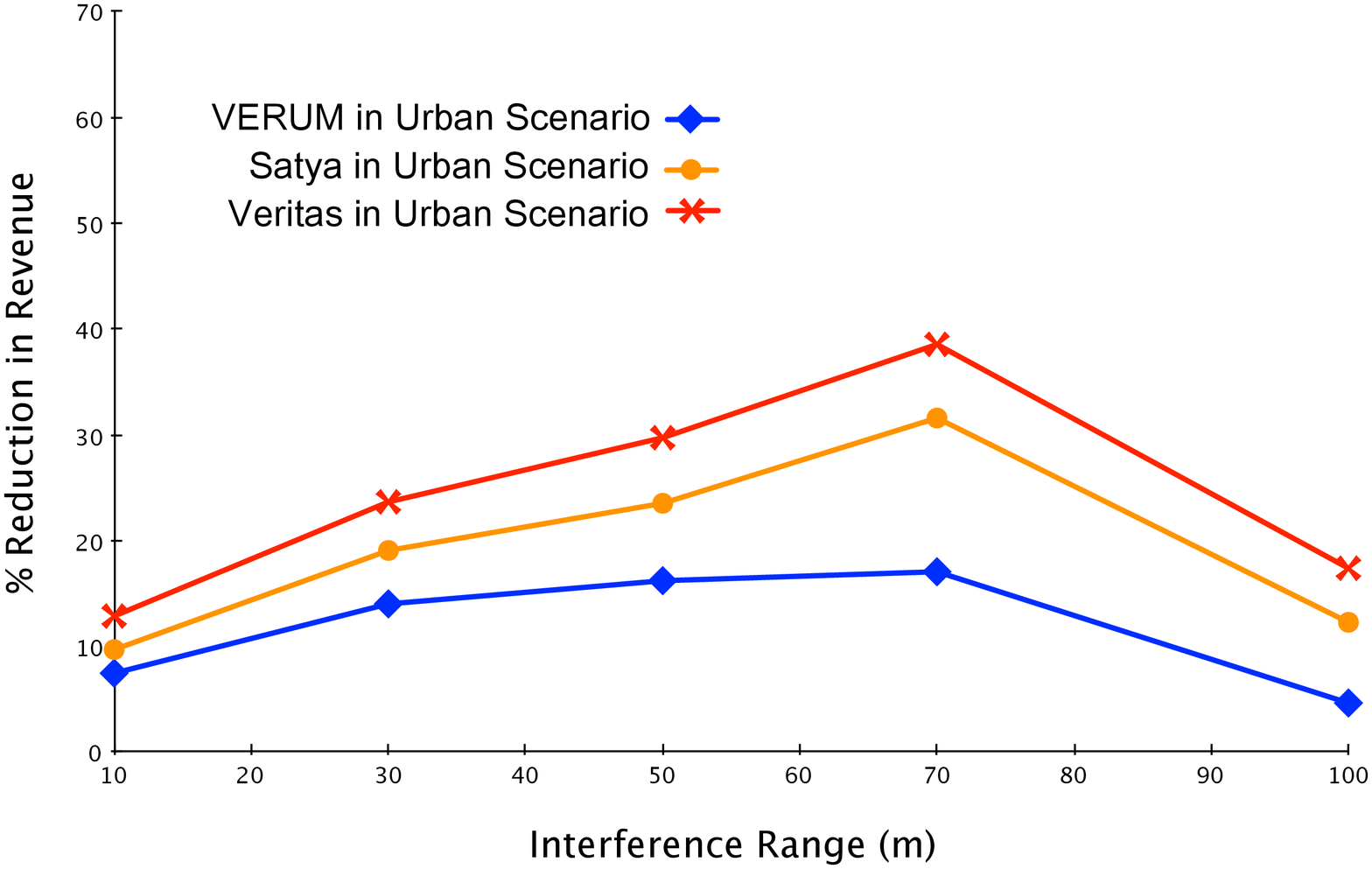}
        	\label{revenue_demand}
 }
 &
   \subfigure[]
{
	\includegraphics[width=1.5in]{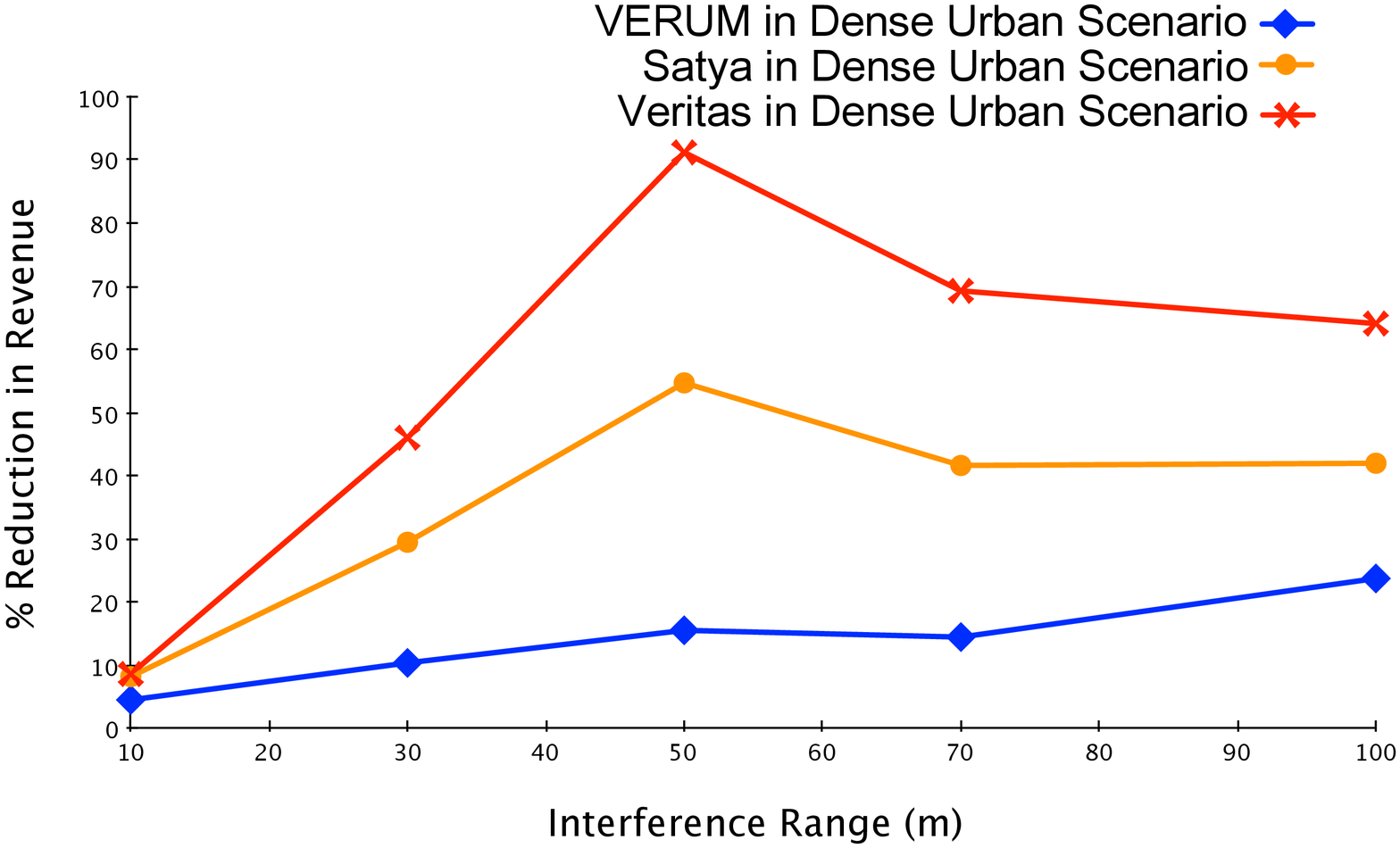}
        	\label{revenue_demand_hd}
 }
 \end{tabular}
\end{center}
 \caption{Varying Interference Range: Figures a, c, and e corresponds to 50 m, and Figures b, d, f to 70 m}
  \label{int-range-vary}
\end{figure*}

\subsubsection{Varying the Interference Range}

So far the it has been assumed that the interference range of the WSDs are fixed at a distance of 30 meters. Keeping the values of the other parameters constant we vary the interference range of WSDs to 50m and 70m and the performance of VERUM is shown in Fig. \ref{int-range-vary}. Increasing the interference range of the WSDs reduces the degree of channel sharing possible in the network. The consequence of lower channel reuse can be seen in the significant reduction in revenue generated with higher interference ranges.  The number of winners also decreases with higher interference ranges due to the same reason.

Although the total number of opportunities for channel reuse is lower, the average spectrum utilization in the network does not change significantly. This is because, the channel is still considered to be in use, although the number of WSDs using the channel is lower. This shows the need to keep the interference range of the WSDs minimal.

\subsubsection{Varying the Interference Temperature}

We effectively control the average number of neighbouring WSDs sharing a channel, by varying the interference temperature limit. The performance of VERUM with varying interference temperature is shown in the Fig.\ref{int-temp-vary}. The revenue generated increases with increasing interference temperature limit as more WSDs are allowed to share the same channel. It can be seen that the revenue increase is significant because with a lower interference temperature limit, the only sharing opportunities lost are the ones due to the interference temperature. This means that bandwidth is still available in that channel and relaxing the interference temperature limit would enable the WSD to use that channel thereby increasing revenue.

A WSD may share a channel with its neighbour if two conditions are met, a) it does not violate the interference temperature limit b) capacity available in the channel. At lower interference temperature limits, even though there is capacity available in the channels, some sharing opportunities are not utilized as it would violate the interference temperature limit. This can be seen as the increase in percentage of winners with increasing interference temperature limit. Spectrum utilization also increases with higher limits as WSDs who were prohibited from sharing can now share the channel with its neighbours.

\section{Discussion}
\label{discuss}

%In extending this auction scheme to scenarios other than the home networking case, equitable sharing (where HWSNs share TVWS fairly among their neighbours) is of interest to us. 

For simplicity of exposition, we have assumed that channel lease periods are identical across all HWSNs although it is straightforward to accommodate variable lease times. Extension to allow for varying leasing periods can be done similarly to the "stickiness" concept used in \cite{buddhikot-dyspan05}. Specifically, each channel can be associated with a minimum period of time T units, but HWSNs are allowed to request each channel for multiples of time T. However, longer duration allocations prohibit spectrum access by other HWSNs. HWSNs, therefore, should be allowed prolonged spectrum access only if they pay proportionately higher price. The per channel reserve price is set by the auctioneer independently for each HWSN depending on the requested leasing period. Auctioneer will determine the spectrum availability for secondary access at each epoch by additionally taking into account previous allocations that span multiple lease periods. Since our mechanism has a built-in discriminatory pricing model, the implementation of this feature is trivial.

Secondly, \texttt{VERUM} is a type of multi-unit auction scheme. While it considers all the channels as interchangeable goods, in reality this is not the case. The characteristics of a channel could vary depending on the physical location of the user. In order to capture this it is not essential that the auction has to be modeled as a combinatorial auction. By providing the desired channel characteristics as an input to computing the channel opportunity factor a similar outcome can be achieved. 

Moreover, we have assumed that all channels have the same fixed power limits. This however, is not realistic, since the power limits depend on the distance from the TV-stations and the environment. Moreover, two channels with different power limits can support different data rates. The higher data rate channels are considered of higher quality and, following the modern market economy, they should be more expensive. To capture this, but at the same time maintain the simplicity of our approach, channels can be categorized into distinct classes each associated with a distinct quality. The reserve prices of the channels within each class can then be determined based on the quality of the channels within the class (i.e., the higher the class, the higher the quality). 

Also related to this is the interference model. We assume a fixed interference range in the evaluation of the proposed scheme. In reality, this is almost never the case. One of the primary function of the the TVWS database provider is to implement an accurate interference model, which could be based on a propagation model or on empirical data. This accurate propagation model is used to create the conflict graph which is provided as an input to the proposed scheme. The discussion of an accurate interference model is out of scope of the paper and we refer to \cite{propagation-dyspan08} for a discussion on the techniques at the TVWS database provider. 

In practice, conflict graphs could be computed based on propagation models, measurement calibrated propagation models, or measurements. While propagation based models are the simplest, they are also the least accurate. Measurement based models use exhaustive measurements covering all possible sender and receiver locations. The measurement calibrated propagation model removes the need for exhaustive measurements by interpolating the signal strengths using calibrated models but still generating accurate conflict graphs will low error percentage. Zhou et al \cite{zhou-sigmetrics13} propose one such model using a graph augmentation technique that also address the aggregate interference in the network.

\section{conclusions}

In this paper, we have considered the problem of TVWS spectrum sharing among secondary users with home wireless networking as the motivating use case. We have  presented an interference-aware coordinated TVWS spectrum sharing framework for home networks that relies on short-term auctions and leverages the geolocation database to additionally keep track of secondary use of TVWS spectrum. To enable this auctioning based coordinated sharing framework, we have outlined a market-driven TVWS spectrum access model. For short-term auctions, we have developed an online multi-unit auction mechanism called \texttt{VERUM} that is shown to be truthful and efficient.

We evaluate \texttt{VERUM} using real home distributions in urban and dense-urban residential scenarios in London, UK in conjunction with realistic TVWS spectrum availability maps for the UK. Our evaluations show that \texttt{VERUM} performs close to optimal in terms of revenue even though it does not explicitly optimize revenue. Evaluations also show that \texttt{VERUM} outperforms VERITAS and SATYA -- the two state-of-the-art truthful and efficient multi-unit auction schemes -- in terms of revenue, spectrum utilization and percentage of winning bidders.

These results demonstrate that \texttt{VERUM} offers an effective alternative to uncoordinated TVWS spectrum use for home networking applications with incentives for both subscribed users of the auctioning based coordination service as well as for the auctioneer (spectrum manager).

\bibliographystyle{unsrt}
\bibliography{paper}  % sigproc.bib is the name of the Bibliography in this case

\begin{thebibliography}{10}

\bibitem{maziar-mcom11}
M.~Fitch et~al.
\newblock {Wireless Service Provision in TV White Space with Cognitive Radio
  Technology: A Telecom Operators Perspective and Experience}.
\newblock {\em IEEE Communications}, 49(3), Mar 2011.

\bibitem{dh-mcom12}
R.~Ferrus, O.~Sallent, J.~Perez-Romero, and R.~Agusti.
\newblock {A Solution Framework to Provide Management Services for Wireless
  Communications in the Digital Home}.
\newblock {\em IEEE Communications}, 50(11), Nov 2012.

\bibitem{paws-usecases}
S.~Probasco and B.~Patil.
\newblock {Protocol to Access White Space database: PS, use cases and rqmts}.
\newblock
  http://tools.ietf.org/html/draft-ietf-paws-problem-stmt-usecases-rqmts-08,
  Aug 2012.

\bibitem{webb-mcom12}
W.~Webb.
\newblock {On Using White Space Spectrum}.
\newblock {\em IEEE Communications}, 50(8), Aug 2012.

\bibitem{maziar-mwc12}
M.~Nekovee, T.~Irnich, and J.~Karlsson.
\newblock {Worldwide Trends in Regulation of Secondary Access to White Spaces
  using Cognitive Radio}.
\newblock {\em IEEE Wireless Communications}, 19(4), Aug 2012.

\bibitem{ranveer-mc2r11}
R.~Chandra et~al.
\newblock {A Campus-Wide Testbed over the TV White Spaces}.
\newblock {\em ACM SIGMOBILE MC2R}, 15(3), Jul 2011.

\bibitem{ofcom-sep13}
Ofcom.
\newblock {TV White Spaces: Approach to Coexistence}.
\newblock
  http://stakeholders.ofcom.org.uk/consultations/white-space-coexistence/, Sep
  2013.

\bibitem{miser-mobicom13}
X.~Ying, J.~Zhang, L.~Yan, G.~Zhang, M.~Chen, and R.~Chandra.
\newblock {Exploring Indoor White Spaces in Metropolises}.
\newblock In {\em Proc. ACM MobiCom}, 2013.

\bibitem{sumit-mwc11}
C.~Ghosh, S.~Roy, and D.~Cavalcanti.
\newblock {Coexistence Challenges for Heterogeneous Cognitive Wireless Networks
  in TV White Spaces}.
\newblock {\em IEEE Wireless Communications}, 18(4), Aug 2011.

\bibitem{senseless}
R.~Murty, R.~Chandra, T.~Moscibroda, and P.~Bahl.
\newblock {SenseLess: A Database-Driven White Spaces Network}.
\newblock {\em IEEE Transactions on Mobile Computing}, 11(2), Feb 2012.

\bibitem{harada-mwc11}
G.~Villardi et~al.
\newblock {Enabling Coexistence of Multiple Cognitive Networks in TV White
  Space}.
\newblock {\em IEEE Wireless Communications}, 18(4), Aug 2011.

\bibitem{wsplus}
Spectrum Bridge.
\newblock {White Space Plus for Operators \& Users}.
\newblock http://spectrumbridge.com/ProductsServices/
  WhiteSpacesSolutions/OperatorsUsers.aspx, 2013.

\bibitem{veritas}
X.~Zhou, S.~Gandhi, S.~Suri, and H.~Zheng.
\newblock {eBay in the Sky: Strategy-Proof Wireless Spectrum Auctions}.
\newblock In {\em Proc. ACM MobiCom}, 2008.

\bibitem{satya}
I.~Kash, R.~Murty, and D.~Parkes.
\newblock {Enabling Spectrum Sharing in Secondary Market Auctions}.
\newblock {\em IEEE Transactions on Mobile Computing}, 2013.

\bibitem{small}
F.~Wu and N.~Vaidya.
\newblock {A Strategy-Proof Radio Spectrum Auction Mechanism in Noncooperative
  Wireless Networks}.
\newblock {\em IEEE Transactions on Mobile Computing}, 2012.

\bibitem{Hoefer-comm12}
M.~Hoefer and T.~Kesselheim.
\newblock {Secondary Spectrum Auctions for Symmetric and Submodular Bidders}.
\newblock In {\em Proc. 13th ACM Conference on Electronic Commerce}, 2012.

\bibitem{sahai-dyspan10}
K.~Harrison, S.~Mishra, and A.~Sahai.
\newblock {How much white-space capacity is there?}
\newblock In {\em Proc. IEEE DySPAN}, 2010.

\bibitem{mahonen-dyspan11}
J.~van~de Beek, J.~Riihijarvi, A.~Achtzehn, and P.~Mahonen.
\newblock {UHF white space in Europe a quantitative study into the potential of
  the 470-790 MHz band}.
\newblock In {\em Proc. IEEE DySPAN}, 2011.

\bibitem{mingyan-tmc12}
S.~Yin, D.~Chen, Q.~Zhang, M.~Liu, and S.~Li.
\newblock {Mining Spectrum Usage Data: A Large-Scale Spectrum Measurement
  Study}.
\newblock {\em IEEE Transactions on Mobile Computing}, 11(6), Jun 2012.

\bibitem{heather-ton12}
V.~Kone, L.~Yang, X.~Yang, B.~Zhao, and H.~Zheng.
\newblock {The Effectiveness of Opportunistic Spectrum Access: A Measurement
  Study}.
\newblock {\em IEEE/ACM Transactions on Networking}, 20(6), Dec 2012.

\bibitem{specnet-nsdi11}
A.~Iyer et~al.
\newblock {SpecNet: Spectrum Sensing Sans Frontières}.
\newblock In {\em Proc. USENIX NSDI}, 2011.

\bibitem{whitefi}
P.~Bahl et~al.
\newblock {White Space Networking with Wi-Fi like Connectivity}.
\newblock In {\em Proc. ACM SIGCOMM}, 2009.

\bibitem{deb-mobicom09}
S.~Deb, V.~Srinivasan, and R.~Maheshwari.
\newblock {Dynamic Spectrum Access in DTV Whitespaces: Design Rules,
  Architecture and Algorithms}.
\newblock In {\em Proc. ACM MobiCom}, 2009.

\bibitem{moscibroda-icnp08}
T.~Moscibroda et~al.
\newblock {Load-Aware Spectrum Distribution in Wireless LANs}.
\newblock In {\em Proc. IEEE ICNP}, 2008.

\bibitem{wifinc}
K.~Chintalapudi et~al.
\newblock {WiFi-NC :WiFi Over Narrow Channels}.
\newblock In {\em Proc. USENIX NSDI}, 2012.

\bibitem{bozidar-conext11}
B.~Radunovic, A.~Proutiere, D.~Gunawardena, and P.~Key.
\newblock {Dynamic Channel, Rate Selection and Scheduling for White Spaces}.
\newblock In {\em Proc. ACM CoNext}, 2011.

\bibitem{jordan-mcom11}
G.~Iosifidis and I.~Koutsopoulos.
\newblock {Challenges in Auction Theory Driven Spectrum Management}.
\newblock {\em IEEE Communications}, 49(8), Aug 2011.

\bibitem{dusit-comst12}
Y.~Zhang, C.~Lee, D.~Niyato, and P.~Wang.
\newblock {Auction Approaches for Resource Allocation in Wireless Systems: A
  Survey}.
\newblock {\em IEEE Communications Surveys \& Tutorials}, 2012.

\bibitem{bogucka-mcom12}
H.~Bogucka, M.~Parzy, P.~Marques, J.~Mwangoka, and T.~Forde.
\newblock {Secondary Spectrum Trading in TV White Spaces}.
\newblock {\em IEEE Communications}, 50(11), Nov 2012.

\bibitem{bogucka-dyspan11}
M.~Parzy and H.~Bogucka.
\newblock {Non-Identical Objects Auction for Spectrum Sharing in TV White
  Spaces -- The Perspective of Service Providers as Secondary Users}.
\newblock In {\em Proc. IEEE DySPAN}, 2011.

\bibitem{lavi-swamy-2005}
R.~Lavi and C.~Swamy.
\newblock {Truthful and Near-Optimal Mechanism Design via Linear Programming}.
\newblock In {\em Proc. 46th Annual IEEE Symposium on Foundations of Computer
  Science}, 2005.

\bibitem{ofcom-db}
Ofcom.
\newblock {Implementing Geolocation}.
\newblock
  http://stakeholders.ofcom.org.uk/binaries/consultations/\\geolocation/summary/geolocation.pdf,
  Nov 2010.

\bibitem{plural}
O.~Holland et~al.
\newblock {Pluralistic Licensing}.
\newblock In {\em Proc. IEEE DySPAN}, 2012.

\bibitem{lsa-wp}
Ericsson.
\newblock {Spectrum Sharing: Fast-Track Capacity with Licensed Shared Access}.
\newblock http://www.ericsson.com/res/docs/whitepapers/wp-spectrum-sharing.pdf,
  Oct 2013.
\newblock White Paper.

\bibitem{jain-mobicom03}
K.~Jain, J.~Padhye, V.~N. Padmanabhan, and L.Qiu.
\newblock {Impact of Interference on Multi-hop Wireless Network Performance}.
\newblock In {\em Proc. ACM MobiCom}, 2003.

\bibitem{vickery1961}
W.~Vickrey.
\newblock {Counterspeculation, Auctions, and Competitive Sealed Tenders}.
\newblock {\em The Journal of Finance}, 16(1):8--37, 1961.

\bibitem{ausubel2004}
L.~Ausubel.
\newblock {An Efficient Ascending-Bid Auction for Multiple Objects}.
\newblock {\em American Economic Review}, 94(5), Dec 2004.

\bibitem{intf-temp}
T.~Clancy.
\newblock {Formalizing the Interference Temperature Model}.
\newblock {\em Wireless Communications and Mobile Computing}, 2007.

\bibitem{maps-uk}
MapInfo.
\newblock www.mapinfo.com, 2012.

\bibitem{buddhikot-dyspan05}
M.~M. Buddhikot and K.~Ryan.
\newblock {Spectrum Management in Coordinated Dynamic Spectrum Access based
  Cellular Networks}.
\newblock In {\em Proc. IEEE DySPAN}, 2005.

\bibitem{propagation-dyspan08}
D.~Gurney, G.~Buchwald, L.~Ecklund, S.L. Kuffner, and J.~Grosspietsch.
\newblock {Geo-Location Database Techniques for Incumbent Protection in the TV
  White Space}.
\newblock In {\em Proc. IEEE DySPAN}, 2008.

\bibitem{zhou-sigmetrics13}
X.~Zhou et~al.
\newblock {Practical Conflict Graphs for Dynamic Spectrum Distribution}.
\newblock In {\em Proc. ACM SIGMETRICS}, 2013.

\end{thebibliography}

% that's all folks
\end{document}